\documentclass[aps,prx,10pt,twocolumn,nofootinbib,superscriptaddress]{revtex4-2}
\usepackage[T1]{fontenc}
\usepackage{amsmath}
\usepackage{amsthm}
\usepackage{amssymb}
\usepackage[english]{babel}
\usepackage{braket}
\usepackage{graphicx}
\usepackage{color}
\usepackage{dsfont}
\usepackage{comment}
\usepackage{mathbbol}
\usepackage{enumitem}
\usepackage{placeins}
\usepackage{mathrsfs}

\usepackage[colorlinks=true,urlcolor=blue,citecolor=blue,linkcolor=blue]{hyperref}

\newcommand{\Nega}{\mathcal{N}}

\DeclareMathOperator{\Tr}{Tr}
\newcommand{\diff}{\mathop{}\!\mathrm{d}}

\newcommand{\A}{\mathcal{A}}

\newtheorem{theorem}{Theorem}

\newtheorem{lemma}{Lemma}

\newtheorem{definition}{Definition}
\newtheorem{proposition}{Proposition}

\newcommand{\Uliege}{\affiliation{Institut de Physique Nucléaire, Atomique et de Spectroscopie, CESAM, University of Liège
\\ B-4000 Liège, Belgium}}

\begin{document}

\title{Total, quantum, and classical measures of anticoherence for mixed spin states}
\author{J\'er\^ome Denis}
\email{jdenis@szu.edu.cn}
\affiliation{Institute of Quantum Precision Measurement, State Key Laboratory of Radio Frequency Heterogeneous Integration, College of Physics and Optoelectronic Engineering, Shenzhen University, Shenzhen 518060, China}
\Uliege
\author{Tara Lacaille}
\Uliege

\author{John Martin}
\email{jmartin@uliege.be}
\Uliege
\author{Eduardo Serrano-Ens\'astiga}
\email{ed.ensastiga@uliege.be}
\Uliege

\date{May 27, 2026}

\begin{abstract}
Anticoherent spin states have isotropic low-order spin moments and are relevant to direction-independent metrology and quantum reference-frame alignment. In contrast to pure states, for mixed states such isotropy may originate either from genuine quantum correlations or from classical statistical mixing. We introduce an axiomatic framework for mixed-state $t$-anticoherence based on the symmetric qubit embedding. We distinguish \emph{total} $t$-anticoherence, non-decreasing under SU(2)-covariant channels, from \emph{quantum} $t$-anticoherence, defined as a resource monotone relative to a chosen total measure and constrained to coincide with it on pure states. This yields a \emph{classical} contribution as their difference. We construct total measures based on reduced-state purity, Hilbert--Schmidt distance, and cumulative multipoles, and we discuss fidelity-based total candidates.
We construct quantum counterparts via convex-roof extensions of pure-state functionals tied to bipartite entanglement in the symmetric sector. We provide explicit mixed-state examples, identify states with maximal quantum anticoherence supported on anticoherent subspaces, study robustness under particle loss for different types of states, and characterize the trade-off between purity and the maximal achievable anticoherence order.
\end{abstract}

\maketitle


\section{Introduction}

Spin states that lack any apparent spatial orientation, due to quantum superpositions, play an important role in quantum information, quantum metrology, and the theory of quantum reference frames. Such states are particularly relevant whenever no shared spatial reference frame is available, or when one seeks to suppress or average out directional information. This notion is captured by the concept of \emph{anticoherent} (AC) spin states, originally introduced for pure states as those whose low-order spin moments are isotropic~\cite{Zim:06}. In particular, anticoherence can be viewed as the opposite of spin coherence: whereas coherent spin states are “most classical” in that they point maximally along some direction, AC states suppress directional moments (e.g.\ $\langle\mathbf{J}\rangle=0$ at first order) by suitable quantum superpositions. AC states exhibit remarkable properties, ranging from connections to spherical designs and Majorana constellations to entanglement classification, robust geometric phases and optimal sensitivity bounds in direction-independent metrology~\cite{Rudzinski2024orthonormalbasesof,BjorkKlimovDeLaHozLeuchsSanchezSoto2015,CrannPereiraKribs2010,PhysRevA.92.052333,2020ESE,Baguette2014,Kolenderski_Demkowicz-Dobrzanski_2008,2017Chryssomalakos,Gol.Jam:18,Mar.Wei.Gir:20,2024Ferretti,Rob.Ber:94,Aguilar_2020,Toponomic:22,2025Goldberg,PhysRevResearch.4.013076}.

From an operational perspective, the isotropy of a spin state determines its usefulness as a quantum reference frame. States that break rotational symmetry can serve as tokens for spatial orientation and enable reference-frame alignment~\cite{PhysRevLett.74.1259,Gis.Pop:99,PeresScudoDirection2001,PeresScudoCartesian2001,Peres01072002,CollinsPopescu2004,RevModPhys.91.025001}, while isotropic states are useless for this task. This viewpoint is formalized within the resource theory of $\mathrm{SU}(2)$ asymmetry, where asymmetry with respect to rotations quantifies the ability of a quantum system to encode and transmit directional information \cite{Bartlett2007,Gour2008,Gour2009,ChitambarGour2019}. In this framework, $\mathrm{SU}(2)$-covariant operations represent free operations that cannot increase directional information. A fundamental example is provided by random global rotations, which model the absence of a shared reference frame or deliberate averaging over orientations \cite{Bartlett2007}. AC states, therefore, correspond to highly symmetric states within this resource theory, carrying no orientation information up to a given order.

Beyond their conceptual role, pure-state anticoherence has found concrete applications in several operational settings. In quantum metrology, highly AC states display optimal or near-optimal performance in fidelity-based estimation tasks that do not assume any preferred spatial direction, reflecting their uniform sensitivity to rotations and their close connection to spherical designs~\cite{Mar.Wei.Gir:20,CrannPereiraKribs2010,PhysRevA.111.022435,2025Goldberg}. More generally, AC states have been identified as optimal quantum rotosensors and as enabling multiparameter estimation  of rotational ($\mathrm{SO}(3)$) parameters (e.g.\ Euler angles) in suitable setups~\cite{Kolenderski_Demkowicz-Dobrzanski_2008,2017Chryssomalakos,Gol.Jam:18,Mar.Wei.Gir:20}.     For mixed states, for instance in noisy rotational metrology, mixtures of AC 
states are also suitable for rotational metrological scenarios~\cite{PhysRevA.111.022435}. 

Anticoherence has also been investigated as a marker of \emph{extreme quantumness}: maximally
AC states can form orthonormal bases whose elements exhibit strong multipartite entanglement and pronounced nonclassical behavior~\cite{Rudzinski2024orthonormalbasesof}. Closely related ideas appear in reference-frame alignment protocols: states with strong low-order directional moments can serve as tokens of an absolute direction, whereas anticoherent states suppress such directional bias and are instead well suited to isotropic estimation of rotations or frame misalignment~\cite{Kolenderski_Demkowicz-Dobrzanski_2008}. These results motivate the experimental realization of anticoherent states~\cite{2017Bouchard,2024Ferretti,2026Denis} and similar states such as  random and $t$-design states~\cite{Goldberg2024,2026Bringewatt}.

While the theory of anticoherence for pure states is now well established, a systematic extension to mixed states is still lacking. For mixed states, isotropy of low-order moments may arise either from genuinely quantum correlations or from classical statistical mixing, such as averaging over unknown orientations. As a consequence, a single scalar quantity is generally insufficient to capture the physical content of anticoherence in the mixed-state regime, particularly from the viewpoint of quantum resources and asymmetry. This tension is already visible in existing quantifiers: several anticoherence measures have been proposed to quantify isotropy for arbitrary states~\cite{Bag.Mar:17}, including distance-based constructions. Yet even the maximally mixed state (MMS), arguably the most classical mixed state, appears maximally anticoherent in the sense of low-order isotropy.

In this work, we introduce an axiomatic framework for \emph{mixed-state} $t$-anticoherence that makes this distinction explicit. We identify two complementary notions: (i) \emph{total} $t$-anticoherence, which quantifies the isotropy of order $t$ irrespective of its classical or quantum origin, and (ii) \emph{quantum} $t$-anticoherence, which isolates the intrinsically quantum contribution and behaves as a quantum resource monotone. Operationally, total $t$-anticoherence characterizes the inability of a state to function as a reference frame at resolution $t$, while quantum $t$-anticoherence captures the part of this isotropy that cannot be generated by classical mixing. From a resource-theoretic perspective, this decomposition could also be interpreted as separating the classical and quantum resources required to prepare such states.

These two notions are naturally associated with distinct operational frameworks. Total $t$-anticoherence is closely related to the resource theory of $\mathrm{SU}(2)$ asymmetry and to reference-frame alignment tasks, where random rotations represent free noise. Quantum $t$-anticoherence, on the other hand, is tied to entanglement properties in the symmetric-qubit embedding~\cite{1932Majorana,Aul.Mar.Mur:10,Aul:12,2020ESE,Baguette2014} and quantifies
isotropy arising from genuinely quantum correlations. This separation allows one to define a meaningful classical contribution to isotropy as the difference between total and quantum $t$-anticoherence.

We formulate axioms for both types of measures, construct explicit examples based on distances to the MMS state and on convex-roof extensions, and show that the total contribution always upper bounds the quantum contribution. Finally, we clarify the relation between our framework and existing tensor-based descriptions of angular structure, in particular the cumulative multipole measures defined and studied in
\cite{DeLaHozKlimovBjorkLeuchsSanchezSoto2013,SanchezSotoKlimovDeLaHozLeuchs2013,BjorkKlimovDeLaHozLeuchsSanchezSoto2015}.

The paper is organized as follows. Section~\ref{Sec:axioms} introduces mixed-state $t$-anticoherence and the axiomatic framework for total, quantum, and classical measures. Section~\ref{Sec:constructions} presents explicit constructions of these measures and their properties, including distance-based and cumulative-multipole formulations. Section~\ref{Sec:evaluation} provides explicit evaluations, and Section~\ref{Sec:Applications} discusses applications to representative mixed states, robustness under particle loss, and the trade-off between purity and the maximal achievable anticoherence order. We conclude in Sec.~\ref{Sec.Conc} with some final comments.

\section{Axiomatic framework for mixed-state anticoherence measures}
\label{Sec:axioms}

\subsection{Mixed-state \texorpdfstring{$t$}{t}-anticoherence}

Let $\mathcal H_j$ denote the Hilbert space of a spin-$j$ system, with standard angular-momentum basis
$\{\ket{j,m}\}_{m=-j}^{j}$ (eigenstates of $J^2$ and $J_z$). Using the standard identification
between $\mathcal H_j$ and the fully symmetric subspace of $N=2j$ qubits, any state
$\rho\in \mathcal D(\mathcal H_j)$ defines a symmetric $N$-qubit state $\rho_S$. For
$t\in\{1,\dots,N\}$ we denote by
\begin{equation}
  \rho_t = \Tr_{N-t}\bigl(\rho_S\bigr)
\end{equation}
the reduced state on $t$ qubits, supported on the symmetric subspace of dimension $t+1$.

It is often convenient to characterize spin states in terms of irreducible tensor
operators $\{T_{LM}\}$~\cite{Var.Mos.Khe:88}\footnote{With the Hilbert--Schmidt normalization
$\mathrm{Tr}(T_{LM}^\dagger T_{L'M'})=\delta_{LL'}\delta_{MM'}$.}. Any spin-$j$ state
$\rho$ can be expanded as
\begin{equation}\label{rhoTLMexpansion}
  \rho=\frac{\mathbb 1}{2j+1}
  +\sum_{L=1}^{2j}\sum_{M=-L}^{L} \rho_{LM}T_{LM}.
\end{equation}
The coefficients $\rho_{LM}$ are the multipole moments of $\rho$ and encode its
angular structure at rank $L$.

\begin{definition}\label{defAC}
A spin state $\rho$, pure or mixed, is \emph{$t$-anticoherent} ($t$-AC) if all its
multipole moments up to rank $t$ vanish, namely
\begin{equation}
  \rho_{LM}=0,
  \qquad 1\le L\le t,\quad -L\le M\le L .
\end{equation}
Equivalently, all spin moments up to order $t$ are isotropic.
\end{definition}

Using the standard identification between $\mathcal H_j$ and the fully symmetric
subspace of $N=2j$ qubits, this definition admits an equivalent reduced-state
formulation. If $\rho_S$ denotes the corresponding symmetric $N$-qubit state and
\begin{equation}
  \rho_t=\Tr_{N-t}(\rho_S)
\end{equation}
is its reduced state on $t$ qubits, then
\begin{equation}
  \rho \text{ is } t\text{-AC}
  \quad\Leftrightarrow\quad
  \rho_t=\frac{\mathbb 1_{t+1}}{t+1}.
\end{equation}
Indeed, the partial trace map in the irreducible tensor basis reads~\cite{Den.Mar:22}
\begin{equation}\label{rhotTLM}
\rho_t=\sum_{L=0}^{t}\sum_{M=-L}^{L}
\tfrac{t!}{N!}\sqrt{\tfrac{(N-L)!(N+L+1)!}{(t-L)!(t+L+1)!}}\,
\rho_{LM}\,T_{LM}^{(t)},
\end{equation}
where $T_{LM}^{(t)}$ denotes the irreducible tensor operator acting on the spin-$t/2$ Hilbert space. Thus $\rho_t$ depends only on the coefficients $\rho_{LM}$ with $L\le t$, while all higher-rank components vanish under reduction. Consequently, $\rho_t$ is maximally mixed if and only if all multipole moments of $\rho$ up to rank $t$ vanish.

Let us note two immediate consequences of Definition~\ref{defAC}. First, $t$-anticoherence is monotone in the order: if $\rho$ is $t$-AC, then it is $t'$-AC for every $t'\le t$. This follows either from the nesting of partial traces, $\rho_{t'}=\Tr_{t-t'}(\rho_t)$, together with $\rho_t=\mathbb 1_{t+1}/(t+1)$, or equivalently from~\eqref{rhotTLM}, since vanishing multipoles up to rank $t$ implies vanishing multipoles up to rank $t'$. Second, $t$-anticoherence is stable under particle loss: for any $q> t$, if $\rho_q=\Tr_{N-q}(\rho_S)$, then $(\rho_q)_t=\Tr_{q-t}(\rho_q)=\Tr_{N-t}(\rho_S)=\rho_t$ by the associativity of the partial trace, so $\rho_t=\mathbb 1_{t+1}/(t+1)$ implies $(\rho_q)_t=\mathbb 1_{t+1}/(t+1)$, i.e., $\rho_q$ is again $t$-AC.

Definition~\ref{defAC} reduces to the standard notion of $t$-anticoherence for pure states~\cite{Gir.Bra.Bag.Bas.Mar:15} given by Zimba~\cite{Zim:06}. For mixed states, however, isotropy up to order $t$ can arise either from classical statistical mixing, genuine quantum correlations, or both; this motivates the distinction introduced in this work between total, quantum, and classical measures of anticoherence.

\subsection{Total \texorpdfstring{$t$}{t}-AC measure}
\label{Sec:TAC}

A measure of total $t$-AC quantifies the isotropy, or lack of directional information, of order $t$ independently of its classical or quantum origin\footnote{The definition \eqref{defAC} of $t$-anticoherence is valid for all $t\in\{1,\dots,N\}$. Whenever we consider pure states, we usually assume $t\le N/2$.}.

\begin{definition}\label{def:total_measure}
A function $\A^T_t:\mathcal D(\mathcal H_j)\to[0,1]$ is a \emph{total $t$-AC measure} if it satisfies:
\begin{enumerate}[label=\textit{(T\arabic*)}, ref=T\arabic*]
  \item \label{ax:T1} $\A^T_t(\rho)=1$ iff $\rho$ is $t$-AC;
  \item \label{ax:T2} $\A^T_t(\rho)=0$ iff $\rho$ is pure and coherent;
  \item \label{ax:T3} $\A^T_t(U\rho U^\dagger)=\A^T_t(\rho)$ for all $U\in\mathrm{SU}(2)$;
  \item \label{ax:T4} Monotonicity under $\mathrm{SU}(2)$-covariant noise:
For every $\mathrm{SU}(2)$-covariant channel
$\Phi:\mathcal D(\mathcal H_j)\to\mathcal D(\mathcal H_j)$, i.e.\ every completely
positive trace-preserving map such that
\begin{equation}
  \Phi(U\rho U^\dagger)=U\,\Phi(\rho)\,U^\dagger
\end{equation}
for all $U\in\mathrm{SU}(2)$ and $\rho\in\mathcal D(\mathcal H_j)$, one has
\begin{equation}
\label{Eq.Axiom.4.AT.Phi}
  \A^T_t\!\bigl(\Phi(\rho)\bigr) \ge \A^T_t(\rho)
\end{equation}
for all $\rho\in\mathcal D(\mathcal H_j)$.
\end{enumerate}
\end{definition}

Axioms \eqref{ax:T3}--\eqref{ax:T4} place total $t$-AC in direct correspondence with the resource-theoretic notion of $\mathrm{SU}(2)$-\emph{asymmetry} (or, equivalently, the ability to encode directional information) \cite{Bartlett2007,Gour2008,Gour2009,Marvian2013,Marvian2014}. In this framework, \emph{free operations} are precisely the $\mathrm{SU}(2)$-covariant channels $\Phi$, while \emph{free states} are the $\mathrm{SU}(2)$-invariant states. Since we work on a fixed spin-$j$ irreducible representation, Schur's lemma implies that the unique free state is the MMS state $\rho_0 \equiv \mathbb{1}_{2j+1}/(2j+1)$. Axiom \eqref{ax:T4} expresses that total $t$-anticoherence cannot decrease under free $\mathrm{SU}(2)$ noise that degrades directional information. Operationally, a large value of $\A^T_t(\rho)$ means that $\rho$ is a poor indicator for the alignment of the reference frame at resolution $t$, as its low-order multipole moments are suppressed.

The next lemma presents a basic structural property of $\mathrm{SU}(2)$-covariant channels on a fixed spin sector, which will be used repeatedly in the construction of explicit total measures.

\begin{lemma}[\cite{Holevo2002}]
\label{lem:unital_covariant}
Let $\Phi$ be an $\mathrm{SU}(2)$-covariant channel on $\mathcal H_j$. Then $\Phi$ is unital, i.e.,\ $\Phi(\mathbb 1)=\mathbb 1$, and in particular the MMS state $\rho_0 $ is a fixed point of $\Phi$.
\end{lemma}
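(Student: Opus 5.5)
The plan is to apply the covariance relation to the identity operator and then use Schur's lemma. Covariance $\Phi(U\rho U^\dagger)=U\Phi(\rho)U^\dagger$ is stated for density matrices. It extends to all operators on $\mathcal H_j$ by linearity, because every operator is a complex linear combination of density matrices and $\Phi$ is linear. For $X=\mathbb 1$ we have $U\mathbb 1U^\dagger=\mathbb 1$, so
\begin{equation}
U\,\Phi(\mathbb 1)\,U^\dagger=\Phi(U\mathbb 1U^\dagger)=\Phi(\mathbb 1)\qquad \forall\, U\in\mathrm{SU}(2).
\end{equation}
Thus $\Phi(\mathbb 1)$ commutes with every operator of the spin-$j$ representation.

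The key step is Schur's lemma. The spin-$j$ representation of $\mathrm{SU}(2)$ on $\mathcal H_j$ is irreducible, so any operator commuting with all $U$ is a scalar multiple of the identity. This gives $\Phi(\mathbb 1)=c\,\mathbb 1$ for some $c\in\mathbb C$. If one prefers to avoid Schur's lemma as a black box, there is a concrete route. Differentiate $U=e^{-i\theta J_k}$ at $\theta=0$ to get $[J_k,\Phi(\mathbb 1)]=0$ for $k=x,y,z$. Commuting with $J_z$ forces $\Phi(\mathbb 1)$ to be diagonal in the $\ket{j,m}$ basis. Commuting with $J_\pm$ then forces all diagonal entries to be equal.

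To fix $c$, use trace preservation: $\Tr\Phi(\mathbb 1)=\Tr\mathbb 1=2j+1$. Hence $c(2j+1)=2j+1$, so $c=1$ and $\Phi(\mathbb 1)=\mathbb 1$. Dividing by $2j+1$ shows that $\rho_0=\mathbb 1/(2j+1)$ satisfies $\Phi(\rho_0)=\rho_0$. This is consistent with the earlier remark that $\rho_0$ is the unique $\mathrm{SU}(2)$-invariant state: $\Phi$ maps invariant states to invariant states, so $\Phi(\rho_0)$ must equal $\rho_0$.

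There is no real obstacle here. The only point needing care is the linear extension of the covariance identity from states to the identity operator, which is immediate. Alternatively, one can apply covariance directly to the state $\rho_0$, since $U\rho_0U^\dagger=\rho_0$, and bypass the extension entirely.
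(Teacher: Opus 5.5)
Your proof is correct and matches the reasoning the paper relies on: the paper gives no separate proof beyond citing Holevo, but its surrounding discussion uses Schur's lemma on the irreducible spin-$j$ representation together with trace preservation, as in the statement $f_0=1$ for $T_{00}\propto\mathbb 1$. Your explicit check via $[J_k,\Phi(\mathbb 1)]=0$ and your shortcut of applying covariance directly to $\rho_0$ are both valid, and they make the Schur step self-contained.
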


\noindent
Lemma~\ref{lem:unital_covariant} explains why the isotropic MMS state will naturally play the role of reference point in the distance- and purity-based construction of AC measures: any $\mathrm{SU}(2)$-covariant channel fixes it.

Random global rotations with a conjugation-invariant probability density,
\begin{equation}\label{eq:random_rotations}
  \rho\ \mapsto\ \int_{\mathrm{SU}(2)} p(U)\,U\rho U^\dagger\,\mathrm dU,
  \qquad p(VUV^\dagger)=p(U),
\end{equation}
form a natural subclass of $\mathrm{SU}(2)$-covariant channels and describe \emph{classical} uncertainty about the laboratory orientation. They model the absence of a shared reference frame or deliberate orientation averaging \cite{Bartlett2007}. However, the complete set of $\mathrm{SU}(2)$-covariant channels is strictly larger: it contains all CPTP maps that commute with the rotation action, including irreversible dissipative processes~\cite{Muller2009,Rivas2013,Den.Mar:22} and measure-and-reprepare (entanglement-breaking) operations, provided they remain \emph{direction insensitive}.

Since the operator space on a fixed spin-$j$ irrep decomposes multiplicity-free into irreducible tensor sectors, covariance implies, by Schur's lemma, that $\Phi$ acts diagonally on irreducible tensor operators,
\begin{equation}\label{fL}
  \Phi(T_{LM}) = f_L\,T_{LM}\qquad \forall\; L,M,
\end{equation}
with $f_L$ real numbers such that $|f_L|\leq 1$; in particular, one always has $f_0=1$ (trace preservation), and complete positivity imposes nontrivial constraints on the admissible values of $(f_1,\dots,f_{2j})$, see~\cite{S0129055X22500210,Aschieri2024}\footnote{For random global rotations of the form \eqref{eq:random_rotations} with $p$ depending only on the rotation angle, the coefficients $f_L$ are the angular averages of the corresponding irreducible characters,
\begin{equation}
f_L=\frac{1}{2L+1}\int_{\mathrm{SU}(2)} p(U)\,\chi^{(L)}(U)\,\mathrm dU,
\end{equation}
where $\chi^{(L)}$ is the character of the spin-$L$ representation. In particular, $f_0=1$ (trace preservation), and for $L\ge 1$ the values $f_L$ quantify how strongly the random rotations damp the rank-$L$ tensor components.}. Therefore, from the expansion \eqref{rhoTLMexpansion}, one has
\begin{equation}
  \Phi(\rho)=\frac{\mathbb 1}{2j+1}
  +\sum_{L=1}^{2j}\sum_{M=-L}^{L} f_L\,\rho_{LM}\,T_{LM}.
\end{equation}
Thus, a covariant channel may attenuate each multipole rank $L$ by an independent factor $f_L$ while treating all components $M$ identically.

\subsection{Quantum \texorpdfstring{$t$}{t}-anticoherence measure}

While total $t$-anticoherence quantifies isotropy independently of its origin, it is useful to single out the contribution stemming from genuinely quantum correlations. For mixed states, this leads to the notion of \emph{quantum} $t$-anticoherence measures, which must behave as resource monotones and remain non-increasing under probabilistic state preparation.

Given a total $t$-anticoherence measure $\A^T_t$ according to Definition~\ref{def:total_measure}, we define a corresponding notion of quantum $t$-anticoherence as follows.
\begin{definition}
\label{def:quantum_measure}
A function $\A^Q_t:\mathcal D(\mathcal H_j)\to[0,1]$ is called a
\emph{quantum $t$-anticoherence measure associated with $\A^T_t$} if it satisfies:
\begin{enumerate}[label=\textit{(Q\arabic*)}, ref=Q\arabic*]

  \item \label{ax:Q1} \emph{Vanishing on separable states:}
  $\A^Q_t(\rho)=0$ for every $\rho$ corresponding to a \emph{fully separable} multiqubit symmetric state $\rho_S$\footnote{
  Equivalently, $\A^Q_t(\rho)=0$ for every symmetric state that is separable across the
  bipartition $t\,|\,N{-}t$, since in the symmetric subspace bipartite separability implies
  full separability. Such symmetric states are of the form
  \begin{equation}
    \rho_S=\int \diff\mu(\ket{n})\, P(n) \bigl(\ket{n}\!\bra{n}\bigr)^{\otimes N}
  \end{equation}
 with $P(n)$ a non-negative function.
  }.

  \item \label{ax:Q2} \emph{Pure-state consistency with the chosen total measure\footnote{
  In particular, if $\A^T_t$ coincides on pure states with a previously introduced pure-state anticoherence measure $\A_t$ in \cite{Bag.Mar:17}, then $\A^Q_t$ coincides with $\A_t$ on pure states as well.}:}
  for all pure states $|\psi\rangle$,
  \begin{equation}
\A^Q_t(|\psi\rangle\langle\psi|)=\A^T_t(|\psi\rangle\langle\psi|).
  \end{equation}

  \item \label{ax:Q3} \emph{$\mathrm{SU}(2)$ invariance: $\forall\;U\in\mathrm{SU}(2)$}
  \begin{equation}
       \A^Q_t(U\rho U^\dagger)=\A^Q_t(\rho).
  \end{equation}

  \item \label{ax:Q4} \emph{Convexity:}
  $\A^Q_t\!\left(\sum_i p_i\rho_i\right)\le \sum_i p_i \A^Q_t(\rho_i).$

  \item \label{ax:Q5} \emph{Average monotonicity under symmetric-sector LOCC:}
For any state $\rho$ and every LOCC protocol across the bipartition $t\,|\,N{-}t$ in the symmetric $N$-qubit embedding, whose normalized outcomes remain supported on the symmetric subspace,
\begin{equation}
  \A^Q_t(\rho)\ge \sum_k p_k\A^Q_t(\rho_k) ,
\end{equation}
where $\rho_k$ is the corresponding spin-state representatives, obtained with probabilities $p_k$.
  \item \label{ax:Q6} \emph{Ordering: $\forall\;\rho\in\mathcal D(\mathcal H_j)$}
  \begin{equation}\label{Eq.TA.greater.QA}
    \A^Q_t(\rho)\le \A^T_t(\rho).
  \end{equation}

\end{enumerate}
\end{definition}

Because, in the symmetric $N$-qubit subspace, separability across any bipartition $t|N-t$ is equivalent to full separability, the natural free states for a “quantum anticoherence” resource theory are exactly the fully separable symmetric states (and their convex mixtures). Conversely, any symmetric state that is not of this form is necessarily genuinely multipartite entangled~\cite{Ichikawa2008ExchangeSymmetry}. For any $q\ge t$, tracing out $N-q$ particles is an LOCC operation. After identifying the retained state $\rho_q$ with a symmetric $q$-qubit state and considering the induced bipartition $t\,|\,q{-}t$, axiom~\eqref{ax:Q5} implies that the corresponding order-$t$ quantum contribution cannot increase under particle loss. By convexity \eqref{ax:Q4} together with $\mathrm{SU}(2)$ invariance \eqref{ax:Q3}, one also finds that the quantum $t$-anticoherence is \emph{non-increasing} under random global rotations \eqref{eq:random_rotations}, in contrast with total $t$-anticoherence, which is \emph{non-decreasing} under the same class of operations by axiom \eqref{ax:T4}.

In the spin-$j$ representation, axiom \eqref{ax:Q5} refers to operations that become LOCC after identifying the spin-$j$ state with a symmetric $N=2j$ qubit state and splitting those constituents into a $t$-particle block and its complement. Physically, these are internal block-local manipulations of the collective spin, assisted by classical feed-forward but unable to generate new quantum correlations across that split; hence they should not increase the genuinely quantum contribution to order-$t$ isotropy.

\subsection{Classical \texorpdfstring{$t$}{t}-anticoherence measure}
The distinction between total and quantum $t$-AC naturally leads to a third, complementary notion that captures the isotropy arising from classical mixing, which we call classical $t$-AC measure.
\begin{definition}
\label{def:classical_measure}
Given a total measure $\A_t^T$ and its associated quantum measure $\A_t^Q$, we define the
\emph{classical $t$-AC measure} $\A_t^C :\mathcal D(\mathcal H_j)\to[0,1]$ by
\begin{equation}
  \A^C_t(\rho) \equiv\A^T_t(\rho)-\A^Q_t(\rho).
\end{equation}
\end{definition}

By axiom \eqref{ax:Q6}, $\A_t^C(\rho)\ge 0$ for all states, and by the pure-state consistency \eqref{ax:Q2}, it vanishes identically on pure states. Hence, $\A_t^C$  quantifies the contribution to isotropy at order $t$ that can be attributed to classical statistical mixing rather than to intrinsically quantum correlations. Axioms \eqref{ax:Q4} and \eqref{ax:Q5} ensure that quantum $t$-anticoherence cannot be increased by classical mixing or by operations that do not generate quantum correlations. In contrast to total $t$-AC, convexity is therefore an essential requirement.

The separation $\A_t^T=\A_t^Q+\A_t^C$ makes explicit that a state may display large (or even maximal) \emph{total} $t$-anticoherence while having no \emph{quantum} contribution, i.e.\ $\A_t^Q=0$ and $\A_t^C=\A_t^T$. The most extreme example is the MMS state $\rho_0$, which is $t$-AC for every $t$ since all multipole moments vanish, yet contains no quantum correlations and is therefore naturally assigned $\A_t^Q(\rho_{0})=0$ (hence $\A_t^C(\rho_{0})=1$) for any possible quantum $t$-AC measure. A second example is provided by the equal incoherent mixture of two opposite spin-coherent states along a fixed axis,
\begin{equation}
\rho=\tfrac12\ket{j,j}\!\bra{j,j}+\tfrac12\ket{j,-j}\!\bra{j,-j},
\end{equation}
which has $\langle \mathbf{J}\rangle=0$ (and thus is perfectly isotropic at first order $t=1$) purely by classical averaging of two antipodal directions. In particular, $\rho$ has $\A_1^C(\rho)=\A_1^T(\rho)=1$ for any 1-AC measure.

\section{Constructions of AC measures}
\label{Sec:constructions}

\subsection{Total AC measures: General constructions}
We begin by constructing several measures of total anticoherence. 

\subsubsection{Purity-based construction}

A particularly simple example of a total $t$-anticoherence measure is obtained from the purity of the reduced symmetric $t$-particle state. We define
\begin{equation}
  \A^{T,\mathcal{P}}_t(\rho)
  \equiv \frac{t+1}{t}\Bigl(1-\operatorname{Tr}(\rho_t^2)\Bigr).
\end{equation}
This expression coincides with the standard purity-based measure of
$t$-anticoherence $\mathcal{A}_t^R$ for pure states~\cite{Bag.Mar:17} and provides a direct extension to mixed states.

\begin{proposition}[Proof in Appendix~\ref{Subsec.Prop1}]
\label{prop1:purity_total}
The quantity $\A^{T,\mathcal{P}}_t(\rho)$ satisfies axioms \eqref{ax:T1}--\eqref{ax:T4} and therefore defines a measure of total $t$-anticoherence.
\end{proposition}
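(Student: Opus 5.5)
The plan is to express $\Tr(\rho_t^2)$ through the multipole moments and read the axioms off from this expansion. Using \eqref{rhotTLM} and the Hilbert--Schmidt orthonormality of the $T^{(t)}_{LM}$, we get
\begin{equation}
\Tr(\rho_t^2)=\frac{1}{t+1}+\sum_{L=1}^{t} c_{L}\sum_{M=-L}^{L}|\rho_{LM}|^2 ,
\end{equation}
where $c_{L}>0$ is the square of the coefficient in \eqref{rhotTLM}; the constant $1/(t+1)$ comes from the $L=0$ term. Hence
\begin{equation}
\A^{T,\mathcal P}_t(\rho)=1-\frac{t+1}{t}\sum_{L=1}^{t}c_{L}\sum_{M}|\rho_{LM}|^2 ,
\end{equation}
which is a decreasing function of a positively weighted sum of the moduli squared of the multipoles of rank $1\le L\le t$.

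The axioms then follow in turn.

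\emph{(T1).} We have $\A^{T,\mathcal P}_t=1$ iff all $\rho_{LM}$ with $L\le t$ vanish, i.e.\ iff $\rho$ is $t$-AC. Equivalently, this holds iff $\rho_t$ is maximally mixed, as stated after Definition~\ref{defAC}.

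\emph{Range.} Since $\rho_t$ is a state on a $(t+1)$-dimensional space, $1/(t+1)\le\Tr\rho_t^2\le 1$. This gives $\A^{T,\mathcal P}_t\in[0,1]$.

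\emph{(T3).} Under a rotation, $\rho_{LM}$ transforms by the unitary Wigner matrix $D^{(L)}$ within each rank $L$. Thus $\sum_M|\rho_{LM}|^2$ is invariant, and so is $\A^{T,\mathcal P}_t$. Alternatively, $U^{\otimes N}$ acting on $\rho_S$ reduces to $U^{\otimes t}$ acting on $\rho_t$, and purity is unitarily invariant.

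\emph{(T4).} By \eqref{fL}, a covariant channel multiplies each $\rho_{LM}$ by $f_L$ with $|f_L|\le1$. Each term $\sum_M|\rho_{LM}|^2$ is therefore multiplied by $f_L^2\le 1$. The weighted sum does not increase, so $\A^{T,\mathcal P}_t$ does not decrease.

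The main point needing care is \emph{(T2)}: we must show $\A^{T,\mathcal P}_t(\rho)=0$, i.e.\ $\rho_t$ pure, iff $\rho$ is a pure coherent state.

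For the ``if'' direction, take a coherent state $\rho=(\ket{n}\!\bra{n})^{\otimes N}$. Then $\rho_t=(\ket{n}\!\bra{n})^{\otimes t}$, which is pure.

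For the converse, suppose $\rho_t$ is pure. The reduction of a state to $t$ of the qubits is pure only if the global state factorizes across the $t\,|\,N{-}t$ cut, so $\rho_S=\ket{\phi}\!\bra{\phi}\otimes\sigma$ with $\ket{\phi}$ symmetric on the $t$ qubits. I would prove this factorization by purifying $\rho_S$ and applying Schmidt decomposition. Permutation symmetry then does the rest. Assume $t\ge1$. Choose a qubit $a$ among the first $t$ and a qubit $b$ among the remaining $N-t$, and apply the swap of $a$ and $b$ to $\rho_S$. The reduced state of qubits $a,b$ is then pure, and by the same Schmidt argument it must factorize as a pure product state. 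Repeating this over all qubits shows that every single-qubit marginal is pure. Hence $\rho_S$ is a product of pure states, and symmetry forces $\rho_S=(\ket{n}\!\bra{n})^{\otimes N}$. Alternatively, one can invoke the known fact that a symmetric pure state is a product state iff it is coherent.

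This shows $\rho$ is pure and coherent, completing (T2). I expect this step to be the main obstacle, specifically ruling out a mixed $\rho$ whose $\rho_t$ happens to be pure. The argument above does this by using the factorization induced by the pure marginal together with exchange symmetry.
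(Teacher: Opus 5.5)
Your proof is correct. For (T1), (T3), the range and (T4) it follows the paper's approach: the paper also expands $\Tr\bigl([\Phi(\rho)]_t^2\bigr)$ through Eq.~\eqref{rhotTLM} with positive weights $c_{N,t,L}^2$ and uses $f_L^2\le 1$. For (T1) and (T3) the paper argues directly that $\Tr(\rho_t^2)=1/(t+1)$ iff $\rho_t$ is maximally mixed, and that purity is rotation invariant.

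The genuine difference is the converse in (T2). The paper goes from ``$\rho_t$ pure'' to ``$\rho_S$ product across $t\,|\,N{-}t$''. It then invokes Ichikawa et al.\ (bipartite separability of a symmetric state implies full separability), writes $\rho_S$ as a mixture of $(\ket{\mathbf n}\!\bra{\mathbf n})^{\otimes N}$, and uses purity of $\rho_t$ to force a single component. Your route avoids that external theorem. It needs only the pure-marginal factorization plus swap invariance, which is more elementary and exhibits the coherent state directly.

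As written, however, ``the reduced state of qubits $a,b$ is then pure'' does not follow just from applying the swap. The step that works is this:
\begin{enumerate}
\item Swap invariance makes the marginal on $T'=(T\setminus\{a\})\cup\{b\}$ pure as well.
\item The factorization $\rho_S=\ket{\phi}\!\bra{\phi}\otimes\sigma$ gives this marginal as $\Tr_a(\ket{\phi}\!\bra{\phi})\otimes\sigma_b$.
\item A product state is pure only if each factor is, so $\sigma_b$ is pure.
\item Permutation symmetry then makes every one-qubit marginal equal to this pure state $\ket{\mathbf n}\!\bra{\mathbf n}$, so there is no need to repeat over qubits.
\item Iterating the factorization lemma yields $\rho_S=(\ket{\mathbf n}\!\bra{\mathbf n})^{\otimes N}$.
\end{enumerate}

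Finally, your argument needs a qubit outside the first $t$, and the paper's needs a nontrivial cut; both therefore assume $t\le N-1$. This restriction is genuinely necessary. For $t=N$ one has $\rho_N=\rho_S$, so $\A^{T,\mathcal P}_N$ vanishes on every pure state, and (T2) fails for $N\ge 2$.
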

\subsubsection{Distance-based construction}

Distance-based constructions provide another natural class of total $t$-anticoherence candidates. Let $d(\cdot,\cdot)$ be a unitarily invariant operator distance on the state space of the symmetric $t$-qubit sector. We normalize the distance from the MMS state
$\rho_{0}^{(t)} \equiv \mathbb{1}_{t+1}/(t+1)$, by its largest possible value,
\begin{equation}
  K_t=\max_{\sigma\in\mathcal D(\mathcal H_{t/2})} d(\sigma,\rho_0^{(t)}).
\end{equation}
This leads to the normalized distance-based total anticoherence candidate
\begin{equation}
  \A^{T,d}_t(\rho)
  \equiv
  1-\frac{d(\rho_t,\rho_0^{(t)})}{K_t}.
  \label{eq:GeneralDistanceTAC}
\end{equation}
With this convention, states whose $t$-body reduction equals the MMS state have maximal total $t$-anticoherence, while states farthest from the MMS state have minimal value. For the standard distances considered below, whose maximal distance from $\rho_0^{(t)}$ is attained exactly on pure states, this construction satisfies axioms \eqref{ax:T1}, \eqref{ax:T2}, and \eqref{ax:T3}. If the maximum is attained only when $\rho_t$ is pure, then, because $\rho_t$
is a marginal of a symmetric $N$-qubit state, this implies that the original
state is a pure spin-coherent state. Only \eqref{ax:T4} must be proved separately for all $\mathrm{SU}(2)$-covariant channels.

In particular, for a fixed integer $p\geq1$, we have the distances
\begin{equation}
    d_p(\rho,\sigma)=||\rho-\sigma||_p,
    \label{eq:Schatten_Distances}
\end{equation}
where $\rho,\sigma\in\mathcal{D}(\mathcal{H}_j)$, based on the Schatten $p$-norm
\begin{equation}
    ||A||_p=\left(\sum_{i=1}^r\lambda_i^p\right)^{1/p},
\end{equation}
where $\lambda_i$ and $r$ are the singular values and the rank of $A$, respectively.
\begin{proposition}[Proof in Appendix~\ref{Subsec.PropHS_total}]
\label{prop2:HS_total}
The quantity $\A^{T,d}_t(\rho)$ for the distance based on the Schatten $2$-norm (associated with the Hilbert-Schmidt distance) satisfies axioms \eqref{ax:T1}-\eqref{ax:T4} and therefore defines a measure of total $t$-anticoherence.
\end{proposition}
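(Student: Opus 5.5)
The plan is to express the Hilbert--Schmidt distance between $\rho_t$ and $\rho_0^{(t)}$ entirely through the multipole moments of $\rho$, using the partial-trace formula \eqref{rhotTLM}. Orthonormality of the $T^{(t)}_{LM}$ then gives
\begin{equation*}
  \|\rho_t-\rho_0^{(t)}\|_2^2=\sum_{L=1}^{t} c_{L}^2 \sum_{M=-L}^{L}|\rho_{LM}|^2,
  \qquad
  c_L=\tfrac{t!}{N!}\sqrt{\tfrac{(N-L)!(N+L+1)!}{(t-L)!(t+L+1)!}}>0 .
\end{equation*}
Equivalently, $\|\rho_t-\rho_0^{(t)}\|_2^2=\Tr(\rho_t^2)-1/(t+1)$. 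This shows that $\A^{T,d_2}_t$ is a strictly decreasing function of the reduced purity. In fact
\begin{equation*}
  \A^{T,d_2}_t=1-\sqrt{\tfrac{t+1}{t}\bigl(\Tr\rho_t^2-\tfrac{1}{t+1}\bigr)}
  =1-\sqrt{1-\A^{T,\mathcal P}_t},
\end{equation*}
once the normalization constant is computed. For that step I will use $K_t=\max_\sigma\|\sigma-\rho_0^{(t)}\|_2=\sqrt{1-1/(t+1)}=\sqrt{t/(t+1)}$, attained exactly on pure $\sigma$, since $\Tr\sigma^2\le1$ with equality iff $\sigma$ is pure.

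With this identity, every axiom follows from Proposition~\ref{prop1:purity_total}, because $x\mapsto 1-\sqrt{1-x}$ is a strictly increasing bijection of $[0,1]$ fixing $0$ and $1$.

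\begin{itemize}
\item \eqref{ax:T1}: $\A^{T,d_2}_t=1$ iff $\A^{T,\mathcal P}_t=1$, which holds iff $\rho$ is $t$-AC. One can also check this directly: by the positivity of the $c_L$, the distance vanishes iff $\rho_{LM}=0$ for all $L\le t$.
\item \eqref{ax:T2}: $\A^{T,d_2}_t=0$ iff $\rho_t$ is pure. A pure marginal of a symmetric state forces the state to be a product $(|n\rangle\langle n|)^{\otimes N}$, i.e.\ a spin-coherent state, as noted in the text after \eqref{eq:GeneralDistanceTAC}. I will state this as the same argument used for Proposition~\ref{prop1:purity_total}.
\item \eqref{ax:T3}: Rotations act on $\rho_S$ as $u^{\otimes N}$, so $\rho_t\mapsto u^{\otimes t}\rho_t u^{\dagger\otimes t}$. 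This preserves both the HS norm and $\rho_0^{(t)}$. Alternatively, the inner sum $\sum_M|\rho_{LM}|^2$ is rotation invariant because each rank-$L$ block transforms unitarily.
\item \eqref{ax:T4}: By \eqref{fL}, a covariant channel maps $\rho_{LM}\mapsto f_L\rho_{LM}$ with $|f_L|\le1$. Each term of the sum above is therefore non-increasing, so the distance does not increase and the measure does not decrease. This gives monotonicity independently of Proposition~\ref{prop1:purity_total}.
\end{itemize}

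The main obstacle is the bound $|f_L|\le1$ in \eqref{ax:T4}, which the text asserts but which one may want to justify. My first attempt is contractivity: a unital channel (Lemma~\ref{lem:unital_covariant}) is contractive in the HS norm on traceless operators. Applying this to the Hermitian combination $T_{LM}+T_{LM}^\dagger$ yields $|f_L|\le1$.

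The remaining concern is that $\rho_t$ lives on $t$ qubits while $\Phi$ acts on spin-$j$. The coefficient formula handles this, since the reduction acts diagonally rank by rank. Through \eqref{rhotTLM}, the channel therefore induces the damping $f_L$ on $\rho_t$ as well.
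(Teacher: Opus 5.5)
Your proof is correct, and your treatment of \eqref{ax:T4} is exactly the paper's argument: you expand $\|\rho_t-\rho_0^{(t)}\|_2^2$ in the tensors $T^{(t)}_{LM}$ via \eqref{rhotTLM} and use $\rho_{LM}\mapsto f_L\rho_{LM}$ with $|f_L|\le 1$; for \eqref{ax:T1}--\eqref{ax:T3} the paper simply appeals to the general distance-based construction, where the maximum distance is attained only on pure $\rho_t$, hence only on coherent states. Your identity $\mathcal{A}^{T,d_2}_t=1-\sqrt{1-\mathcal{A}^{T,\mathcal{P}}_t}$ (from $\|\rho_t-\rho_0^{(t)}\|_2^2=\Tr\rho_t^2-1/(t+1)$ and $K_t=\sqrt{t/(t+1)}$) is a correct sharpening the paper does not state, making the whole proposition an immediate corollary of Proposition~\ref{prop1:purity_total}, and your justification of $|f_L|\le 1$ from Hilbert--Schmidt contractivity of unital channels supplies a point the paper only asserts.
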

We also find numerical evidence that axiom~\eqref{ax:T4} remains valid for the trace-distance case $p=1$; a proof is left open.

\subsubsection{Fidelity-based construction}
\label{subsubsec:fidelity_based_total}
Another construction is obtained from the Uhlmann-Josza fidelity between the reduced state
$\rho_t$ and the MMS state
$\rho_0^{(t)}=\mathbb{1}_{t+1}/(t+1)$ on the symmetric $t$-qubit subspace:
\begin{equation}
    F(\rho_t,\rho_0^{(t)})
    \equiv
    \left( \Tr\sqrt{\sqrt{\rho_t}\rho_0^{(t)}\sqrt{\rho_t}}\right)^2
    =
    \frac{(\Tr \sqrt{\rho_t})^2}{t+1}.
\label{eq:FidelityReducedStateMMS}
\end{equation}
We choose an affine rescaling 
and define
\begin{equation}\label{Fidelity.ACT}
  \A^{T,F}_t(\rho)
  \equiv \frac{(t+1)F-1}{t}
  =\frac{(\Tr\sqrt{\rho_t})^2-1}{t}.
\end{equation}
Equivalently,
\begin{equation}
\label{BuresACTlin}
\mathcal{A}_t^{T,F}(\rho)
=
\frac{
(t+1)\big[1-\frac12D_{\mathrm B}\big(\rho_t,\rho_0^{(t)}\big)^2\big]^2-1
}{t}
\end{equation}
in terms of the Bures distance
\begin{equation}\label{Bdist}
  D_{\mathrm{B}}(\sigma,\tau)=\sqrt{2\bigl(1-\sqrt{F(\sigma,\tau)}\bigr)}.
\end{equation}
The normalization is such that $\A^{T,F}_t=1$ for $t$-AC states and
$\A^{T,F}_t=0$ for spin-coherent states. Indeed, $F(\rho_t,\rho_0^{(t)})=1$ iff
$\rho_t=\rho_0^{(t)}$, whereas the minimum value $F(\rho_t,\rho_0^{(t)})=1/(t+1)$ is reached
iff $\rho_t$ is pure, which in the symmetric setting implies that $\rho$ is a pure
spin-coherent state. The $\mathrm{SU}(2)$ invariance follows from the unitary
invariance of the fidelity,
$F(U\rho_tU^{\dagger},U\sigma U^{\dagger})=F(\rho_t,\sigma)$, together with
$U\rho_0^{(t)} U^\dagger=\rho_0^{(t)}$. Thus $\A^{T,F}_t$ satisfies axioms~\eqref{ax:T1}--\eqref{ax:T3}.
Concerning axiom~\eqref{ax:T4}, some care is needed. Although a $\mathrm{SU}(2)$-covariant channel $\Phi$ on the full spin-$j$ space induces a linear transformation on the reduced state $\rho_t$, this induced reduced map is not, in general, a completely positive trace-preserving channel acting on the whole state space $\mathcal D(\mathcal H_{t/2})$. Therefore one cannot simply invoke monotonicity of the Uhlmann fidelity for an induced channel fixing $\rho_0^{(t)}$. Numerical evidence suggests that $\A^{T,F}_t$ is nevertheless
non-decreasing under $\mathrm{SU}(2)$-covariant noise, but we leave a proof of axiom~\eqref{ax:T4} for future work.

We conclude this section with a general result, which will prove useful later on the relation between the Uhlmann fidelity and entanglement. Consider a quantum state $|\psi\rangle$ defined on a composite Hilbert space $\mathcal{H}_{AB}$. Then, a measure of entanglement of the state $|\psi\rangle$ on the bipartition $A|B$ is given by the entanglement negativity~\cite{2002Vidal}
\begin{equation}
    \mathcal{N}_A=\sum_{i>j=1}^{d_A}\alpha_i\alpha_j
    \label{eq:Negativity_SchmidtDecomposition}
\end{equation}
where the $\alpha_i$ are the Schmidt coefficients of $|\psi\rangle$ with respect to the split $A|B$. From this definition, we have the following result.

\begin{proposition}[Proof in Appendix~\ref{Subsec.Prop3}]
\label{prop:FidelityNegativityRelation}
    For any composite system $AB$, a pure state $\ket{\psi}\in\mathcal{H}_{AB}$ verifies
    \begin{equation}
        F \left(\rho_A,\rho_0\right)=\frac{1+2\Nega_A(\ket{\psi})}{d_A},
        \label{eq:FidelityReduced_Negativity}
    \end{equation}
    where $\rho_A$ is the reduced state defined on $A$ obtained by tracing out the subsystem $B$ and $\rho_0=\mathbb 1_A/d_A$ is the MMS state in $\mathcal{H}_A$.
\end{proposition}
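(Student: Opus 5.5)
We prove the identity directly from the Schmidt decomposition. Write
\begin{equation}
\ket{\psi}=\sum_{i=1}^{d_A}\alpha_i\ket{a_i}\ket{b_i},
\end{equation}
with $\alpha_i\ge 0$ and $\sum_i\alpha_i^2=1$. Here we pad with zero coefficients if the Schmidt rank is smaller than $d_A$, which changes neither side of the claimed identity. Tracing out $B$ gives $\rho_A=\sum_i\alpha_i^2\ket{a_i}\!\bra{a_i}$, so $\sqrt{\rho_A}$ has eigenvalues $\alpha_i$ in the same basis.

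Next we evaluate the fidelity with the maximally mixed state $\rho_0=\mathbb 1_A/d_A$. Since $\rho_0$ commutes with everything, we have $\sqrt{\rho_A}\rho_0\sqrt{\rho_A}=\rho_A/d_A$. Hence $\Tr\sqrt{\rho_A/d_A}=\Tr\sqrt{\rho_A}/\sqrt{d_A}$, and therefore
\begin{equation}
F(\rho_A,\rho_0)=\frac{(\Tr\sqrt{\rho_A})^2}{d_A}=\frac{\bigl(\sum_i\alpha_i\bigr)^2}{d_A}.
\end{equation}
This is the same computation as in Eq.~\eqref{eq:FidelityReducedStateMMS}, with $t+1$ replaced by $d_A$.

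Finally we expand the square:
\begin{equation}
\Bigl(\sum_i\alpha_i\Bigr)^2=\sum_i\alpha_i^2+2\sum_{i>j}\alpha_i\alpha_j=1+2\Nega_A(\ket{\psi}).
\end{equation}
The first equality splits the square into diagonal and off-diagonal terms. The second uses normalization, $\sum_i\alpha_i^2=1$, together with the definition \eqref{eq:Negativity_SchmidtDecomposition}. Dividing by $d_A$ yields \eqref{eq:FidelityReduced_Negativity}.

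There is no real obstacle in this argument. The only points to check are that the Schmidt sum in \eqref{eq:Negativity_SchmidtDecomposition} runs over $d_A$ indices, consistent with the padding above, and that the identity holds for either labeling of the subsystems as $A$.
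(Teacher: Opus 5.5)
Your proposal is correct and follows essentially the same route as the paper's proof: Schmidt decomposition, $F(\rho_A,\rho_0)=(\Tr\sqrt{\rho_A})^2/d_A$, then expanding $(\sum_i\alpha_i)^2$ with normalization and the Schmidt-coefficient formula for the negativity. Your remarks on zero-padding and on the role of $d_A$ are harmless clarifications that the paper leaves implicit.
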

This result gives a direct correspondence between the fidelity-based AC measures \eqref{Fidelity.ACT} of a pure state and its entanglement with respect to the bipartition $t|N-t$.

\subsection{Quantum AC measures: general constructions via convex-roof}

We now construct quantum $t$-AC measures from pure-state functionals by convex-roof extensions. 

\subsubsection{Purity-based construction}

For pure symmetric states, the quantity
$1-\operatorname{Tr}(\rho_t^2)$ coincides with the linear entropy of entanglement
across the bipartition $t\,|\,N{-}t$.
This motivates defining a quantum $t$-AC measure for mixed states via
the convex-roof construction,
\begin{equation}
  \A^{Q,\mathcal{P}}_t(\rho)
  = \min_{\{p_i,|\psi_i\rangle\}}
     \sum_i p_i\, \A^{\mathcal{P}}_t(|\psi_i\rangle),
\end{equation}
where the minimization runs over all pure-state decompositions of $\rho$.

\begin{proposition}[Proof in Appendix~\ref{Subsec.Prop4}]
\label{prop:purity_quantum_Q1Q6}
The quantity $\A^{Q,\mathcal{P}}_t(\rho)$ satisfies axioms \eqref{ax:Q1}--\eqref{ax:Q6} and therefore defines a quantum measure of $t$-AC associated with the total purity-based measure $\A_t^{T,\mathcal{P}}$.
\end{proposition}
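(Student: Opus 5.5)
I will verify the six axioms one at a time. Throughout, $\A^{\mathcal{P}}_t(\ket{\psi})=\frac{t+1}{t}\bigl(1-\Tr\rho_t^2\bigr)$ denotes the pure-state functional, which equals $\A^{T,\mathcal{P}}_t$ on pure states. For a pure symmetric state $\ket{\psi_S}$ it is a normalized linear entropy of entanglement across $t\,|\,N{-}t$.

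\emph{Easy axioms.} Axiom \eqref{ax:Q2} follows because a pure state has only the trivial decomposition; decompositions of a rank-one state consist of the same vector up to phase. Hence $\A^{Q,\mathcal{P}}_t(\ket{\psi}\!\bra{\psi})=\A^{T,\mathcal{P}}_t(\ket{\psi}\!\bra{\psi})$.

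Axiom \eqref{ax:Q3} follows because $U^{\otimes N}$ maps decompositions of $\rho_S$ bijectively onto decompositions of $U\rho U^\dagger$. Since $(U\rho U^\dagger)_t=U^{\otimes t}\rho_t U^{\dagger\otimes t}$, the purity is preserved term by term.

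Axiom \eqref{ax:Q4} is the standard convex-roof argument. Concatenate the optimal decompositions of each $\rho_i$, weighted by $p_i$, to obtain a decomposition of $\sum_i p_i\rho_i$. The minimum over all decompositions can only be smaller. The minimum is attained by Carathéodory and compactness, since decompositions with at most $(2j+1)^2$ terms suffice and the functional is continuous.

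For \eqref{ax:Q1}, a fully separable symmetric $\rho_S$ has the form given in the footnote, $\int P(n)(\ket{n}\!\bra{n})^{\otimes N}$. After Carathéodory it is a finite mixture of spin-coherent states. Each coherent state has a pure reduction $\rho_t$, so $\A^{\mathcal{P}}_t=0$ on it, and the convex roof is therefore $0$.

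\emph{Ordering \eqref{ax:Q6}.} For any decomposition $\rho=\sum_ip_i\ket{\psi_i}\!\bra{\psi_i}$, linearity of the partial trace gives $\rho_t=\sum_ip_i\rho_t^{(i)}$. The map $\sigma\mapsto 1-\Tr\sigma^2$ is concave, so
\begin{equation}
\A^{T,\mathcal{P}}_t(\rho)=\tfrac{t+1}{t}\bigl(1-\Tr\rho_t^2\bigr)\ge\sum_ip_i\,\tfrac{t+1}{t}\bigl(1-\Tr(\rho_t^{(i)})^2\bigr).
\end{equation}
Minimizing the right-hand side over decompositions gives $\A^{Q,\mathcal{P}}_t\le\A^{T,\mathcal{P}}_t$. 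Boundedness in $[0,1]$ follows from Proposition~\ref{prop1:purity_total} together with nonnegativity of the linear entropy.

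\emph{LOCC monotonicity \eqref{ax:Q5}, the main obstacle.} I will use the Vidal framework: a convex roof of a pure-state function that is non-increasing on average under pure-state LOCC is itself an LOCC monotone on mixed states. The pure-state ingredient is the known fact that the linear entropy $1-\Tr\rho_A^2$ is a concave, unitarily invariant function of the reduced state, hence an entanglement monotone (Vidal's theorem). Any LOCC protocol across $t\,|\,N{-}t$ therefore satisfies
\begin{equation}
\sum_kp_k\bigl(1-\Tr(\rho_{t}^{(k)})^2\bigr)\le 1-\Tr\rho_t^2 .
\end{equation}

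The mixed-state extension then proceeds as follows. Take an optimal decomposition of $\rho_S$ and apply the protocol to each component. Each outcome $\rho_k$ is then a mixture of the post-measurement pure states, with the weights obtained by Bayes' rule. Convexity \eqref{ax:Q4} combined with the pure-state inequality yields $\sum_kp_k\A^{Q,\mathcal{P}}_t(\rho_k)\le\A^{Q,\mathcal{P}}_t(\rho)$.

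The delicate point is the restriction to outcomes supported on the symmetric subspace. I need the quantity computed on the outcome (as a symmetric state) to coincide with the bipartite linear entropy used in Vidal's argument. This holds because any pure state in $\mathrm{Sym}^N$ has the same $t$-qubit marginal as its bipartite description. The normalization factor $\tfrac{t+1}{t}$ and the identification $\rho_k\leftrightarrow(\rho_S)_k$ pass through unchanged, so no further work is needed there.
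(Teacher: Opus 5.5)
Your proposal is correct and follows essentially the same route as the paper. The paper also treats \eqref{ax:Q2}--\eqref{ax:Q4} as immediate from the convex-roof construction and covariance, obtains \eqref{ax:Q1} from coherent-state decompositions, derives \eqref{ax:Q5} from Vidal's theorem for the concave, unitarily invariant linear entropy, and gets \eqref{ax:Q6} from concavity of $1-\Tr\rho_t^2$ via its general convex-roof ordering proposition. Your additional remarks on attainment of the minimum and on why symmetric-support outcomes let the bipartite convex roof coincide with $\A^{Q,\mathcal{P}}_t$ simply make explicit details the paper leaves implicit.
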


\begin{proposition}[Proof in Appendix~\ref{Subsec.Prop5}]
\label{prop:AtQRRatios}
For all mixed states $\rho\in\mathcal D(\mathcal H_j)$ and all
$t\in\{1,2,\dots,N-1\}$,
\begin{equation}
  \frac{\A^{Q,\mathcal{P}}_t(\rho)}{\A^{Q,\mathcal{P}}_{N-t}(\rho)}
  = \frac{(t+1)(N-t)}{t(N+1-t)} .
\end{equation}
\end{proposition}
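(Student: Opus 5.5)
The plan is to reduce the statement to a pure-state identity and then push it through the convex roof. For a pure symmetric state $\ket{\psi}$ of $N$ qubits, the Schmidt decomposition across the bipartition $t\,|\,N{-}t$ gives $\rho_t$ and $\rho_{N-t}$ the same nonzero spectrum, so $\operatorname{Tr}(\rho_t^2)=\operatorname{Tr}(\rho_{N-t}^2)$. Hence the pure-state functional $\A^{\mathcal P}_t(\ket{\psi})=\frac{t+1}{t}\bigl(1-\operatorname{Tr}\rho_t^2\bigr)$ satisfies
\begin{equation}
\frac{\A^{\mathcal P}_t(\ket{\psi})}{\A^{\mathcal P}_{N-t}(\ket{\psi})}=\frac{(t+1)/t}{(N-t+1)/(N-t)}=\frac{(t+1)(N-t)}{t(N+1-t)}
\end{equation}
whenever the denominator is nonzero. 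When it vanishes, $\rho_{N-t}$ is pure, the state is a coherent state, and both sides are zero, so the relation holds as $\A^{\mathcal P}_t=c_t\,\A^{\mathcal P}_{N-t}$ with $c_t$ the stated constant.

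Next I will use that the convex roof is taken over the same set of pure-state decompositions $\{p_i,\ket{\psi_i}\}$ of $\rho$, whatever $t$ is. Both reductions are computed from the same symmetric embedding $\rho_S$, and a decomposition of $\rho$ is identified with one of $\rho_S$. For every decomposition,
\begin{equation}
\sum_i p_i\A^{\mathcal P}_t(\ket{\psi_i})=c_t\sum_i p_i\A^{\mathcal P}_{N-t}(\ket{\psi_i}).
\end{equation}
Since $c_t>0$, minimizing the left side over decompositions is the same as minimizing the right side, and the two minima are attained at the same decomposition. This gives $\A^{Q,\mathcal P}_t(\rho)=c_t\,\A^{Q,\mathcal P}_{N-t}(\rho)$, which is the claimed ratio.

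The only point that needs care is the statement's form as a ratio, which is undefined when $\A^{Q,\mathcal P}_{N-t}(\rho)=0$. By the argument above this happens exactly when $\A^{Q,\mathcal P}_t(\rho)=0$, that is, for mixtures of coherent states. I will therefore present the result as the proportionality $\A^{Q,\mathcal P}_t=c_t\,\A^{Q,\mathcal P}_{N-t}$ and read the ratio whenever it is defined. Apart from this, the main step is simply the equality of the reduced-state purities, which follows from the Schmidt decomposition.
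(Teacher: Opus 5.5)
Your proposal is correct and follows essentially the same argument as the paper. For pure states, $\rho_t$ and $\rho_{N-t}$ have the same nonzero spectrum, which gives $\A^{\mathcal P}_t=c_t\,\A^{\mathcal P}_{N-t}$ term by term; since $c_t>0$, this proportionality carries through the convex-roof minimization over the common set of decompositions. Your remark that the ratio is defined only when $\A^{Q,\mathcal P}_{N-t}(\rho)\neq 0$ (that is, away from mixtures of coherent states), so that the statement is best read as $\A^{Q,\mathcal P}_t=c_t\,\A^{Q,\mathcal P}_{N-t}$, is a useful precision that the paper leaves implicit.
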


\subsubsection{Based on convex distances}

We will use a distance $d$ that is convex in its first argument. That is, for all $\rho_1,\rho_2,\sigma\in\mathcal{D}(\mathcal{H}_j)$ and all $\lambda\in[0,1]$, it satisfies
\begin{equation}
d(\lambda\rho_1+(1-\lambda)\rho_2,\sigma)\leq\lambda d(\rho_1,\sigma)+(1-\lambda)d(\rho_2,\sigma).
\end{equation}
Given such a convex distance $d$, we extend the pure-state quantum $t$-AC measure to mixed states via the convex-roof construction:
\begin{equation}
\A^{Q,d}_t(\rho)
= \min_{\{p_i,|\psi_i\rangle\}}
\sum_i p_i \A^{d}_t(|\psi_i\rangle),
\label{eq:QACMeasure_ConvexDistance}
\end{equation}
where the minimization is taken over all pure-state decompositions of $\rho$, and $\A^{d}_t$ denotes the pure-state $t$-AC measure induced by the distance $d$. When the pure-state functional $\A^d_t(|\psi\rangle)$ is an entanglement monotone across the bipartition $t\,|\,N{-}t$, its convex-roof extension provides a valid quantum $t$-AC measure associated with the corresponding total distance-based quantity.

\begin{proposition}[Proof in Appendix \ref{Subsec.Prop6}]
\label{prop:distance_quantum_Q1Q6}
For any distance $d$ whose associated total quantity $\A_t^{T,d}$ satisfies the total-measure axioms, and for which the pure-state functional $\A^d_t(|\psi\rangle)$ is an entanglement monotone across the $t\,|\,N{-}t$ bipartition, the convex-roof extension $\A_t^{Q,d}$ satisfies axioms~\eqref{ax:Q1}--\eqref{ax:Q5}. If, in addition, $\A_t^{T,d}$ is concave on $\mathcal D(\mathcal H_j)$, then the ordering axiom~\eqref{ax:Q6} also holds. Under these assumptions, $\A_t^{Q,d}$ defines a quantum measure of $t$-AC associated with the corresponding total distance-based measure $\A_t^{T,d}$.
\end{proposition}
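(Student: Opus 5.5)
We verify the axioms one at a time, using standard properties of convex-roof constructions together with the hypotheses on $\A^d_t$ and $\A^{T,d}_t$. Throughout, the pure-state functional is $\A^d_t(\ket\psi)=\A^{T,d}_t(\ket\psi\!\bra\psi)$, and $\A^{Q,d}_t$ is its convex roof~\eqref{eq:QACMeasure_ConvexDistance}.

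\emph{Axioms \eqref{ax:Q2}--\eqref{ax:Q4}.} For \eqref{ax:Q2}, a pure state $\ket\psi\!\bra\psi$ admits only the trivial decomposition, so $\A^{Q,d}_t(\ket\psi\!\bra\psi)=\A^d_t(\ket\psi)=\A^{T,d}_t(\ket\psi\!\bra\psi)$. For \eqref{ax:Q3}, the map $\{p_i,\ket{\psi_i}\}\mapsto\{p_i,U\ket{\psi_i}\}$ is a bijection between the decompositions of $\rho$ and those of $U\rho U^\dagger$. Since $\A^{T,d}_t$ satisfies \eqref{ax:T3}, $\A^d_t(U\ket{\psi_i})=\A^d_t(\ket{\psi_i})$, so the two minima coincide. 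For \eqref{ax:Q4}, concatenating optimal (or $\epsilon$-optimal) decompositions of each $\rho_i$, weighted by $p_i$, gives a decomposition of $\sum_i p_i\rho_i$. Its cost is $\sum_i p_i\A^{Q,d}_t(\rho_i)$, and the minimum can only be lower. The minimum is attained by compactness of decompositions with at most $(2j+1)^2$ elements (Carathéodory) and continuity of $\A^d_t$, which holds because $d$ is continuous.

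\emph{Axioms \eqref{ax:Q1} and \eqref{ax:Q5}.} Suppose $\rho_S$ is fully separable. By the footnote to \eqref{ax:Q1}, $\rho_S$ is a mixture of product coherent states $(\ket n\!\bra n)^{\otimes N}$. Each such term is a pure spin-coherent state, on which $\A^{T,d}_t=0$ by \eqref{ax:T2}. This is therefore a decomposition of zero cost, and since $\A^{Q,d}_t\ge0$ we get $\A^{Q,d}_t(\rho)=0$. If the decomposition is continuous, we first replace it by a finite one via Carathéodory. For \eqref{ax:Q5}, we invoke the standard theorem (Vidal; Horodecki) that the convex roof of a pure-state function that is non-increasing on average under LOCC is itself non-increasing on average under LOCC for mixed states. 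The proof decomposes $\rho$ optimally, applies the protocol branch by branch, uses the pure-state monotonicity hypothesis, and then uses convexity of the roof on each outcome. The only extra point to check is that the outcomes stay in the symmetric sector. This is built into \eqref{ax:Q5}, so every intermediate pure state lies in the domain where $\A^d_t$ is defined and the convex roof is taken over symmetric decompositions.

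\emph{Axiom \eqref{ax:Q6}.} Now assume $\A^{T,d}_t$ is concave. Take an optimal decomposition $\rho=\sum_ip_i\ket{\psi_i}\!\bra{\psi_i}$. Then
\begin{equation}
\A^{Q,d}_t(\rho)=\sum_ip_i\A^{T,d}_t(\ket{\psi_i}\!\bra{\psi_i})\le \A^{T,d}_t(\rho),
\end{equation}
where the equality uses \eqref{ax:Q2} and the inequality uses concavity. Values lie in $[0,1]$ because $\A^{T,d}_t$ does.

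The main obstacle is \eqref{ax:Q5}. One must make sure the general convex-roof monotonicity argument applies when the protocol acts on the $t|N-t$ split but its outcomes are constrained to the symmetric subspace. In particular, the pure-state entanglement-monotone hypothesis has to be the form \emph{average} non-increase under LOCC on pure inputs, for arbitrary (possibly unnormalized) branches. I would state this form explicitly as the hypothesis and then follow Vidal's proof verbatim.
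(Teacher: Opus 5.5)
Your proposal is correct and follows the same route as the paper's proof: \eqref{ax:Q2}--\eqref{ax:Q4} from the convex-roof structure together with \eqref{ax:T3}, \eqref{ax:Q1} from decomposing fully separable symmetric states into coherent states, \eqref{ax:Q5} from the Vidal-type theorem that convex roofs of pure-state monotones are LOCC monotones, and \eqref{ax:Q6} from concavity exactly as in Proposition~\ref{prop:convex_roof_ordering_general}. You add detail the paper omits (attainment of the minimum via Carath\'eodory, bookkeeping for the symmetric sector), with one small correction: the symmetric-outcome condition does not make every intermediate state of a multi-round protocol symmetric. It only makes the fine-grained final branch states $K_l\ket{\psi_i}$ symmetric, since they lie in the range of the symmetric outcome $\rho_k$, and this is why your plan to state the hypothesis as average monotonicity under the full protocol is the right formulation.
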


\subsubsection{Fidelity-based construction}
We now define the quantum measure associated with the total fidelity-based
quantity \eqref{BuresACTlin}. The key point is that, on pure states, the fidelity
appearing in \eqref{Fidelity.ACT} is directly related to the negativity across the
bipartition $t\,|\,N{-}t$. Applying Proposition~\ref{prop:FidelityNegativityRelation}
with $d_A=t+1$ gives, for every pure symmetric state $|\psi\rangle$,
\begin{equation}
F\big(\rho_t,\rho_0^{(t)}\big)
=
\frac{1+2\Nega_t(|\psi\rangle)}{t+1}.
\end{equation}
Consequently,
\begin{equation}
\A^{T,F}_t(|\psi\rangle)
=
\frac{2}{t}\,\Nega_t(|\psi\rangle).
\end{equation}
This suggests defining the associated quantum $t$-AC measure by the normalized
convex-roof negativity
\begin{equation}
  \A^{Q,F}_t(\rho)
  =
  \frac{2}{t}\,\Nega_t^{\mathrm{CR}}(\rho),
  \label{eq:BuresCompatibleQAC}
\end{equation}
where
\begin{equation}
  \Nega_t^{\mathrm{CR}}(\rho)
  =
  \min_{\{p_i,|\psi_i\rangle\}}
  \sum_i p_i\,\Nega_t(|\psi_i\rangle)
\end{equation}
is the convex-roof extension of the negativity across the $t\,|\,N{-}t$
bipartition~\cite{Lee2003}.

\begin{proposition}[Proof in Appendix \ref{Subsec.Prop7}]
\label{prop:Bures_quantum_Q1Q6}
The fidelity-based total candidate $\A^{T,F}_t$ and the quantum quantity $\A^{Q,F}_t$ defined in Eqs.~\eqref{BuresACTlin} and~\eqref{eq:BuresCompatibleQAC} satisfy the following properties:
The total candidate $\A^{T,F}_t$ satisfies axioms
\eqref{ax:T1}--\eqref{ax:T3}; numerical evidence suggests that it also satisfies axiom~\eqref{ax:T4}, but we leave this point open. The quantum quantity $\A^{Q,F}_t$ satisfies axioms \eqref{ax:Q1}, \eqref{ax:Q3}--\eqref{ax:Q5}, coincides with $\A^{T,F}_t$ on pure states as required by \eqref{ax:Q2}, and obeys the ordering axiom~\eqref{ax:Q6} with respect to this total candidate. In particular,
\begin{equation}
  \A_t^{Q,F}(\rho)\le \A_t^{T,F}(\rho)
  \qquad\forall\,\rho\in\mathcal D(\mathcal H_j).
\end{equation}
\end{proposition}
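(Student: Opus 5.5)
The plan is to handle the total candidate and the quantum quantity separately, and then prove the ordering.

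\emph{Total candidate.} Axioms \eqref{ax:T1}--\eqref{ax:T3} for $\A^{T,F}_t$ were already established in the discussion following Eq.~\eqref{Fidelity.ACT}, so nothing new is needed there. Axiom \eqref{ax:T4} is stated as open and is not claimed.

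\emph{Quantum quantity.}
\begin{enumerate}[label=(\roman*)]
\item \textbf{Pure-state consistency \eqref{ax:Q2}.} For a pure state, the trivial decomposition $\{1,\ket\psi\}$ is the only pure-state decomposition. Hence $\Nega_t^{\mathrm{CR}}(\ket\psi)=\Nega_t(\ket\psi)$. Combined with Proposition~\ref{prop:FidelityNegativityRelation}, this gives $\A^{Q,F}_t=\frac{2}{t}\Nega_t=\A^{T,F}_t$ on pure states.
\item \textbf{Vanishing on separable states \eqref{ax:Q1}.} A fully separable symmetric state has a decomposition into product states $\ket{n}^{\otimes N}$, for which all Schmidt coefficients but one vanish. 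Thus $\Nega_t=0$ on each term, and the minimum is $0$. A finite decomposition exists by Carathéodory.
\item \textbf{$\mathrm{SU}(2)$ invariance \eqref{ax:Q3}.} $U^{\otimes N}$ is a local unitary across $t|N-t$. It preserves Schmidt coefficients and maps decompositions bijectively.
\item \textbf{Convexity \eqref{ax:Q4}.} This is the standard convex-roof argument: concatenating optimal decompositions of the $\rho_i$ gives a decomposition of $\sum_ip_i\rho_i$.
\item \textbf{Average LOCC monotonicity \eqref{ax:Q5}.} Pure-state negativity $\sum_{i>j}\alpha_i\alpha_j=\frac12[(\sum_i\alpha_i)^2-1]$ is a Schur-concave function of the Schmidt vector $(\alpha_i^2)$. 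By Vidal's theorem it is therefore an entanglement monotone on pure states, i.e.\ non-increasing on average under LOCC. The convex roof of such a monotone is again monotone on average~\cite{Lee2003}. Restricting to protocols whose outcomes remain symmetric only reduces the class of operations.
\end{enumerate}

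\emph{Ordering \eqref{ax:Q6}.} This is the main step. Writing $\A^{T,F}_t(\rho)=\frac{(\Tr\sqrt{\rho_t})^2-1}{t}$, it suffices to show that $g(\sigma)=(\Tr\sqrt\sigma)^2$ is concave on density matrices. Since $\rho\mapsto\rho_t$ is linear, $\A^{T,F}_t$ is then concave on $\mathcal D(\mathcal H_j)$.

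For the concavity of $g$: the map $\sigma\mapsto\Tr\sqrt\sigma$ is concave, since $x\mapsto\sqrt x$ is operator concave, hence trace-concave. It is also nonnegative, and squaring a nonnegative concave function need not preserve concavity, so this is where care is needed. I would use the variational formula
\[
\Tr\sqrt\sigma=\max_{\tau}\sqrt{F(\sigma,\tau)}\cdot\sqrt{d},
\]
or, more directly,
\[
(\Tr\sqrt\sigma)^2=\min_{X>0,\ \Tr X^{-1}=1}\Tr(\sigma X).
\]
The latter follows from Cauchy--Schwarz, $(\Tr\sqrt\sigma)^2=(\Tr\,\sigma^{1/2}X^{1/2}X^{-1/2})^2\le\Tr(\sigma X)\Tr X^{-1}$, with equality at $X\propto\sigma^{-1/2}$ (by a limiting argument for singular $\sigma$). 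A minimum of linear functions is concave, so $g$ is concave.

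Then take an optimal decomposition $\rho=\sum_ip_i\ket{\psi_i}\!\bra{\psi_i}$ for $\Nega_t^{\mathrm{CR}}$. Using (i) and concavity,
\[
\A^{Q,F}_t(\rho)=\sum_ip_i\A^{T,F}_t(\psi_i)\le\A^{T,F}_t(\rho).
\]
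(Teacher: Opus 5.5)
Your proof is correct. Apart from the ordering axiom, it follows the paper's proof step by step: \eqref{ax:T1}--\eqref{ax:T3} come from the earlier discussion, \eqref{ax:Q2} from Proposition~\ref{prop:FidelityNegativityRelation}, \eqref{ax:Q1}, \eqref{ax:Q3}, \eqref{ax:Q4} from the convex-roof structure, and \eqref{ax:Q5} from \cite{Lee2003}.

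The difference is in \eqref{ax:Q6}. The paper uses the superadditivity of the Schatten $1/2$-quasinorm, $\|A+B\|_{1/2}\ge\|A\|_{1/2}+\|B\|_{1/2}$ with $\|X\|_{1/2}=(\Tr\sqrt X)^2$, quoted from Bourin--Hiai. Combined with positive homogeneity, this gives $(\Tr\sqrt{\rho_t})^2\ge\sum_i p_i(\Tr\sqrt{\sigma_i})^2$ for each decomposition. You instead prove that $g(\sigma)=(\Tr\sqrt\sigma)^2$ is concave, using Cauchy--Schwarz to write it as an infimum of the linear functionals $\sigma\mapsto\Tr(\sigma X)$ with $\Tr X^{-1}=1$. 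Since $g$ is positively $1$-homogeneous, superadditivity and concavity are equivalent, so both arguments rest on the same fact. The paper's version is shorter but depends on an external matrix inequality. Yours is elementary and self-contained, and it gives concavity of $\A^{T,F}_t$ on all of $\mathcal D(\mathcal H_j)$, so Proposition~\ref{prop:convex_roof_ordering_general} applies verbatim.

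A few local corrections:
\begin{itemize}
\item For singular $\sigma$ your variational formula is an infimum, not a minimum. This is harmless, since an infimum of linear functions is still concave.
\item The first identity you mention, $\Tr\sqrt\sigma=\max_\tau\sqrt{F(\sigma,\tau)}\sqrt d$, is false: the maximum of the fidelity is $1$, attained at $\tau=\sigma$. You never use it, so it should simply be dropped.
\item In \eqref{ax:Q5}, Schur-concavity of a function of the Schmidt vector is not enough for Vidal's theorem. It only gives monotonicity under deterministic LOCC, via Nielsen's theorem. Average monotonicity requires the function to be concave and unitarily invariant in the reduced state, as in Theorem~\ref{VidalLOCCConstruction}. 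Your own concavity lemma for $(\Tr\sqrt\sigma)^2$ supplies exactly this, so citing it closes the step (or you can rely on \cite{Lee2003}, as the paper does).
\end{itemize}
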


\paragraph*{Final remarks.}
The previous propositions show that, for the purity-based and
Hilbert--Schmidt distance-based constructions, the corresponding quantum measures
satisfy axioms \eqref{ax:Q1}--\eqref{ax:Q6} relative to the chosen total measure. For the fidelity-based construction, the quantum quantity
satisfies the quantum-side axioms and the ordering
$\mathcal A_t^{Q,F}\le \mathcal A_t^{T,F}$ relative to the total candidate
\eqref{BuresACTlin}, so the corresponding classical contribution
$\mathcal A_t^{C,F}=\mathcal A_t^{T,F}-\mathcal A_t^{Q,F}$ is nonnegative.
However, since axiom~\eqref{ax:T4} for $\mathcal A_t^{T,F}$ is supported here
only by numerical evidence, the fidelity-based pair should be regarded as a
well-motivated candidate total--quantum pair until a proof of total monotonicity
under all $\mathrm{SU}(2)$-covariant channels is obtained.

\subsection{Cumulative-multipole constructions}

Cumulative multipole measures introduced and studied in
\cite{SanchezSotoKlimovDeLaHozLeuchs2013,DeLaHozKlimovBjorkLeuchsSanchezSoto2013,BjorkKlimovDeLaHozLeuchsSanchezSoto2015,Goldberg2022,Goldberg2021}
provide a tensor-based characterization of the angular structure of spin states. Related work connects these multipoles with higher-order polarization coherences and quantum polarization structure~\cite{Goldberg2022,Goldberg2021}. In particular, they recover the standard distinction between first-order (Stokes-vector) polarization and higher-order, ``hidden'' angular structure by varying the cutoff rank: the case $t=1$ isolates first-order directional content, while $t\ge 2$ probes the presence (or suppression) of higher-order multipoles. In our setting, this same hierarchy fits naturally with mixed-state total anticoherence, and could be further refined by separating, for each $t$, the total isotropy into quantum and classical contributions.

From the irreducible tensor-operator expansion \eqref{rhoTLMexpansion}, define the rank-$L$ multipole weight and its cumulative version by
\begin{equation}
 C_{\le t}(\rho)=\sum_{L=1}^t \sum_{M=-L}^L |\rho_{LM}|^2.
\end{equation}
One has $C_{\le t}(\rho)=0$ iff $\rho$ is $t$-AC, i.e.\ all multipoles up to rank $t$ vanish. Moreover, the multipole weights $\sum_{M=-L}^L |\rho_{LM}|^2$ can be expressed as linear combinations of the reduced-state purities $\Tr(\rho_k^2)$ (see~\cite{Mar.Wei.Gir:20}), which makes explicit the link with the purity-based constructions.

\subsubsection{Total cumulative-multipole AC}

For $1\le t\le 2j-1$, a natural \emph{total} $t$-anticoherence measure is obtained by an affine decreasing rescaling of $C_{\le t}$ that vanishes on pure coherent states. For spin-coherent states $\rho_{\rm coh}=|\theta,\phi\rangle\langle\theta,\phi|$, one finds (independently of $\theta,\phi$)
\begin{equation}
  C_{\le t}(\rho_{\rm coh})
  =\frac{2 j}{2 j+1}-\frac{((2 j)!)^2}{(2 j-t-1)! (2 j+t+1)!}.
\end{equation}
In this subsection, we therefore restrict ourselves to $1\le t\le 2j-1$ and define
\begin{equation}\label{totalACcm}
  \mathcal A^{T,\mathrm{cm}}_t(\rho)
  = 1-\frac{C_{\le t}(\rho)}{C_{\le t}(\rho_{\rm coh})}.
\end{equation}

\begin{proposition}[Proof in Appendix \ref{Subsec.Prop8}]
\label{prop:cm_total}
For $1\le t\le 2j-1$, the quantity $\mathcal A^{T,\mathrm{cm}}_t$ satisfies axioms \eqref{ax:T1}--\eqref{ax:T4}, and therefore defines a valid measure of total $t$-anticoherence.
\end{proposition}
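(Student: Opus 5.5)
Axioms \eqref{ax:T1}, \eqref{ax:T3} and \eqref{ax:T4} follow from the multipole structure of $C_{\le t}$; the substantive part is \eqref{ax:T2}, which needs an extremality statement for coherent states.

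\emph{Axiom \eqref{ax:T1}.} Since $C_{\le t}\ge 0$ and $C_{\le t}(\rho)=0$ iff $\rho$ is $t$-AC, we get $\mathcal A^{T,\mathrm{cm}}_t=1$ exactly on $t$-AC states. Here we need $C_{\le t}(\rho_{\rm coh})>0$ for $t\le 2j-1$, which is clear from the displayed closed form: it is strictly positive because $((2j)!)^2/((2j-t-1)!(2j+t+1)!)<2j/(2j+1)$.

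\emph{Axiom \eqref{ax:T3}.} Under a rotation $U$, the coefficients $\rho_{LM}$ transform within each rank $L$ by the unitary Wigner matrix $D^{(L)}(U)$. Hence each rank weight $\sum_M|\rho_{LM}|^2$ is invariant, and so is $C_{\le t}$.

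\emph{Axiom \eqref{ax:T4}.} By Eq.~\eqref{fL}, a covariant channel acts as $\rho_{LM}\mapsto f_L\rho_{LM}$ with $|f_L|\le1$. Therefore $C_{\le t}(\Phi(\rho))=\sum_{L\le t}f_L^2\sum_M|\rho_{LM}|^2\le C_{\le t}(\rho)$, and the measure is non-decreasing.

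\emph{Axiom \eqref{ax:T2}.} This is the main obstacle: we must show that $C_{\le t}(\rho)\le C_{\le t}(\rho_{\rm coh})$ for all $\rho$, with equality only for pure coherent states. This gives both $\mathcal A^{T,\mathrm{cm}}_t\in[0,1]$ and the "iff" condition.

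My plan is to use the reduced-state formula \eqref{rhotTLM}. Write $C_{\le t}(\rho)=\sum_{L=1}^t w_L(\rho)$ with $w_L=\sum_M|\rho_{LM}|^2$. Each $\Tr(\rho_k^2)$ for $k\le t$ is then $\frac1{k+1}+\sum_{L=1}^k c_{k,L}w_L$ with explicit positive $c_{k,L}$. Inverting this triangular system expresses $C_{\le t}$ as a linear combination $\sum_{k\le t}\beta_k\Tr(\rho_k^2)+\text{const}$, following the route of~\cite{Mar.Wei.Gir:20}. If all the $\beta_k$ are nonnegative, the bound follows: each $\Tr(\rho_k^2)\le1$, with equality iff $\rho_k$ is pure, and a symmetric marginal is pure iff $\rho$ is a pure coherent state. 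Both sides then agree with the coherent value, because every marginal of a coherent state is pure.

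I expect the signs of the $\beta_k$ to be the delicate point. If they fail to be nonnegative, I would instead argue directly. $C_{\le t}$ is a convex function of $\rho$, being a sum of squared moduli of linear functionals. Its maximum over the convex set $\mathcal D(\mathcal H_j)$ is therefore attained at pure states. For pure $\ket\psi$, I would maximize $\sum_{L\le t}w_L$. Since $\sum_{L=1}^{2j}w_L=1-\frac1{2j+1}$ is fixed for pure states, this is the same as minimizing the high-rank weight $\sum_{L>t}w_L$. I would then use the Majorana/symmetric-state structure: coherent states concentrate maximally on low ranks, since their multipoles $\rho_{L0}\propto\langle j,j|T_{L0}|j,j\rangle$ decay fastest with $L$. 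I would try to prove this via the known result that $\Tr(\rho_t^2)$ is maximized only by coherent states, applied to the partial sums.

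Finally, equality forces $\rho$ to be pure (strict convexity along non-trivial mixtures), and then coherent.
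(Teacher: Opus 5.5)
\textbf{There is a genuine gap in \eqref{ax:T2}.} Your arguments for \eqref{ax:T1}, \eqref{ax:T3} and \eqref{ax:T4} match the paper's. You also correctly identify the crux: the bound $C_{\le t}(\rho)\le C_{\le t}(\rho_{\rm coh})$, with equality only for pure coherent states. This bound is needed both for \eqref{ax:T2} and for $\mathcal A^{T,\mathrm{cm}}_t\ge 0$. However, you never establish it.

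Your first route fails already at $t=2$. Inverting the triangular system gives
\[
C_{\le 2}(\rho)=\frac{\Tr(\rho_2^2)-\frac13}{c_{N,2,2}^2}+\Bigl(1-\frac{c_{N,2,1}^2}{c_{N,2,2}^2}\Bigr)\frac{\Tr(\rho_1^2)-\frac12}{c_{N,1,1}^2},\qquad \frac{c_{N,2,1}^2}{c_{N,2,2}^2}=\frac{5(N-1)}{N+3}>1\quad(N\ge 3),
\]
so $\beta_1<0$. For $j=2$ this reads $C_{\le 2}=-\tfrac{32}{35}\Tr(\rho_1^2)+\tfrac{12}{7}\Tr(\rho_2^2)-\tfrac{4}{35}$, and bounding each purity by $1$ proves nothing.

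Your fallback does not close the gap either. The fact that coherent states are the unique maximizers of $\Tr(\rho_t^2)$ only controls the weighted sum $\sum_{L\le t}c_{N,t,L}^2w_L$. Its weights strictly decrease with $L$, so it says nothing about the unweighted sum $\sum_{L\le t}w_L$. The remark that coherent multipoles ``decay fastest'' is a heuristic, not an argument.

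The paper does not derive this step either. It imports from Bj\"ork et al.\ (2015) the result that spin-coherent states are the unique pure states maximizing $C_{\le t}$ for every $t$. It then extends the bound to mixed states by convexity of $C_{\le t}$, as in your fallback. You must either invoke that extremality theorem or actually prove it; as written, \eqref{ax:T2} and the range $[0,1]$ are unproven.

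\textbf{A secondary error in the equality case.} $C_{\le t}$ is not strictly convex. It equals $\|\Pi(\rho)\|_2^2$, where $\Pi$ is the Hilbert--Schmidt projection onto $\mathrm{span}\{T_{LM}:1\le L\le t\}$, so it is constant along rank-$L>t$ directions. Strict convexity therefore cannot be what forces $\rho$ to be pure. The correct argument is as follows. Write $\rho=\sum_i p_i\psi_i$ with $\psi_i=\ket{\psi_i}\!\bra{\psi_i}$ and $p_i>0$. Equality in
\[
\Bigl\|\sum_i p_i\Pi(\psi_i)\Bigr\|_2^2\le\sum_i p_i\|\Pi(\psi_i)\|_2^2\le C_{\le t}(\rho_{\rm coh})
\]
requires every $\psi_i$ to be coherent, by uniqueness of the pure maximizers. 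It also requires all $\Pi(\psi_i)$ to coincide, since the gap in the first inequality is $\sum_ip_i\|\Pi(\psi_i)-\Pi(\rho)\|_2^2$. The rank-$1$ parts of the $\Pi(\psi_i)$ fix the direction of $\langle\mathbf J\rangle$, so all $\psi_i$ are the same coherent state.
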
 

\subsubsection{Quantum cumulative-multipole AC ?}

A natural attempt at defining a corresponding \emph{quantum} cumulative-multipole measure is to take the convex roof of the pure-state functional:
\begin{equation}
  \mathcal A^{Q,\mathrm{cm}}_t(\rho)
  =\min_{\{p_i,|\psi_i\rangle\}}
  \sum_i p_i\, \mathcal A^{T,\mathrm{cm}}_t(|\psi_i\rangle\langle\psi_i|).
  \label{CRcm}
\end{equation}
However, even though $\mathcal A^{Q,\mathrm{cm}}_t$ satisfies axioms
\eqref{ax:Q1}--\eqref{ax:Q4} and \eqref{ax:Q6}, it fails to satisfy the axiom \eqref{ax:Q5} about monotonicity under LOCC operations for $t\geq 2$, and therefore does not define a valid quantum AC measure. This already occurs at the pure-state level as the following example shows.

Consider the spin-$2$ pure states
\begin{align}
|\psi\rangle&=\frac{1}{\sqrt3}\Bigl(|2,2\rangle+|2,0\rangle+|2,-2\rangle\Bigr),\\
|\phi\rangle&=\frac{1}{\sqrt6}\Bigl(-2|2,2\rangle-|2,0\rangle+|2,-2\rangle\Bigr).
\end{align}
Across the bipartition $2|2$, their Schmidt spectra are
\begin{align}
\lambda(\psi)&=\left(\frac{7}{18}+\frac{\sqrt6}{9},\ \frac29,\ \frac{7}{18}-\frac{\sqrt6}{9}\right),\\
\lambda(\phi)&=\left(\frac49+\frac{\sqrt{87}}{36},\ \frac49-\frac{\sqrt{87}}{36},\ \frac19\right).
\end{align}
These satisfy $\lambda(\psi)\prec \lambda(\phi)$. Hence, by Nielsen's theorem~\cite{Nielsen1999}, there exists a deterministic LOCC transformation $|\psi\rangle\longrightarrow|\phi\rangle$
across the split $2|2$.

We now evaluate the cumulative multipole measure directly from Eq.~\eqref{totalACcm} for $t=2$ and obtain
\begin{equation}
    \mathcal A^{T,\mathrm{cm}}_2(|\psi\rangle\langle\psi|)=\frac{7}{12},
\qquad
\mathcal A^{T,\mathrm{cm}}_2(|\phi\rangle\langle\phi|)=\frac{31}{48}.
\end{equation}
Since both states are pure, their total AC measure is equal to their quantum AC measure, and thus
\begin{equation}
\mathcal A^{Q,\mathrm{cm}}_2(|\psi\rangle\langle\psi|)=\frac{7}{12}<\frac{31}{48}=\mathcal A^{Q,\mathrm{cm}}_2(|\phi\rangle\langle\phi|),
\end{equation}
although $|\psi\rangle\to|\phi\rangle$ is achievable by deterministic LOCC. This shows that the quantum cumulative-multipole  measure violates axiom \eqref{ax:Q5} for $t=2$. Similarly, one can find examples of violations for $t>2$. It is only for $t=1$ that \eqref{ax:Q5} holds due to the following Proposition.

\begin{proposition}[Proof in Appendix \ref{Subsec.Prop10}]
\label{prop:cm_t1_equiv_purity} 
For $t=1$, the cumulative-multipole and purity-based constructions are affinely equivalent on pure states. Consequently, their convex-roof quantum extensions coincide:
\begin{equation}
  \mathcal A^{Q,\mathrm{cm}}_1(\rho)=\A^{Q,\mathcal{P}}_1(\rho)
  \qquad\forall\,\rho\in\mathcal D(\mathcal H_j),
\end{equation}
with the normalizations chosen above. In particular, $\mathcal A^{Q,\mathrm{cm}}_1$ satisfies axiom \eqref{ax:Q5}.
\end{proposition}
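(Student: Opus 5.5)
The plan is to show that for $t=1$ the pure-state functional $\mathcal A^{T,\mathrm{cm}}_1(|\psi\rangle\langle\psi|)$ is an affine function of $\Tr(\rho_1^2)$. Since both $\mathcal A^{T,\mathrm{cm}}_1$ and $\A^{T,\mathcal P}_1$ equal $1$ on $1$-AC pure states and $0$ on coherent states, the affine relation will then be forced to be the identity. Equality of the convex roofs, and inheritance of axiom \eqref{ax:Q5} from Proposition~\ref{prop:purity_quantum_Q1Q6}, follow directly.

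First, I express $C_{\le 1}(\rho)=\sum_{M}|\rho_{1M}|^2$ through $\rho_1$. By Eq.~\eqref{rhotTLM} with $t=1$, we have $\rho_1=\tfrac12\mathbb 1_2+c_N\sum_M\rho_{1M}T^{(1)}_{1M}$. Here
\begin{equation}
c_N=\frac{1}{N!}\sqrt{\frac{(N-1)!\,(N+2)!}{2}},
\end{equation}
and the $L=0$ term has been normalized using $T_{00}=\mathbb 1/\sqrt{2j+1}$. Hilbert--Schmidt orthonormality of the $T^{(1)}_{LM}$ then gives
\begin{equation}
\Tr(\rho_1^2)=\tfrac12+c_N^2\,C_{\le 1}(\rho).
\end{equation}
This identity is linear in the multipole data and holds for every $\rho$, mixed or pure. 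Hence $C_{\le1}(\rho)=(\Tr\rho_1^2-\tfrac12)/c_N^2$, and both $\mathcal A^{T,\mathrm{cm}}_1$ and $\A^{T,\mathcal P}_1=2(1-\Tr\rho_1^2)$ are affine, decreasing functions of $\Tr\rho_1^2$.

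Next, I fix the constants. A coherent state has pure $\rho_1$, so $\Tr\rho_1^2=1$ and $C_{\le1}(\rho_{\rm coh})=1/(2c_N^2)$. As a consistency check, this can be compared with the closed form $\tfrac{2j}{2j+1}-\tfrac{(N!)^2}{(N-2)!(N+2)!}$ stated before Eq.~\eqref{totalACcm}. Substituting into Eq.~\eqref{totalACcm} gives
\begin{equation}
\mathcal A^{T,\mathrm{cm}}_1(\rho)=1-\frac{\Tr\rho_1^2-\tfrac12}{\tfrac12}=2\bigl(1-\Tr\rho_1^2\bigr)=\A^{T,\mathcal P}_1(\rho).
\end{equation}
So the two functionals coincide exactly, and in fact on all states, not only pure ones.

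Finally, because the pure-state functionals agree, the convex-roof minimizations in Eq.~\eqref{CRcm} and in the definition of $\A^{Q,\mathcal P}_1$ range over identical objective values for every decomposition. The two quantities are therefore equal for all $\rho$. Axiom \eqref{ax:Q5} for $\mathcal A^{Q,\mathrm{cm}}_1$ is then immediate from Proposition~\ref{prop:purity_quantum_Q1Q6}.

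The only delicate step is the bookkeeping of the $L=0$ coefficient and the prefactor $c_N$ in Eq.~\eqref{rhotTLM}. This is the step I would verify carefully; if it fails, I would fall back on checking directly that the coherent-state value matches. The affine structure itself is insensitive to the exact value of $c_N$, since both endpoints are pinned by \eqref{ax:T1}--\eqref{ax:T2}.
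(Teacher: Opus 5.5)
Your proposal is correct and follows the paper's route: $C_{\le1}$ is affine in $\Tr\rho_1^2$, the normalizations make the two pure-state functionals coincide, so their convex roofs are equal and (Q5) is inherited from Proposition~\ref{prop:purity_quantum_Q1Q6}; you simply make the constants explicit. One harmless slip is that for $t=L=1$ the factor $(t+L+1)!$ in Eq.~\eqref{rhotTLM} is $3!=6$, not $2$, so $c_N^2=(N-1)!(N+2)!/[6(N!)^2]$ and $C_{\le1}(\rho_{\rm coh})=1/(2c_N^2)=3N/[(N+1)(N+2)]$, which matches the stated closed form (your version would fail that check), though $c_N$ cancels from your final identity anyway.
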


\medskip
\noindent

\section{Evaluation of measures}
\label{Sec:evaluation}

\subsubsection*{ Example 1. Rank-2 mixed states for \texorpdfstring{$j=2$}{j=2}}

As a first explicit illustration of the interplay between total, quantum, and
classical anticoherence, we consider a rank-2 mixed state of a spin-$2$ system,
corresponding to $N=2j=4$ qubits,
\begin{equation}\label{Eq:j2_rank2_mixed_state}
\rho(\lambda)=\lambda\ket{\psi_+}\!\bra{\psi_+}+(1-\lambda)\ket{\psi_-}\!\bra{\psi_-},
\end{equation}
where the orthonormal eigenvectors are given by
\begin{equation}
\label{Eq:j2_rank2_states}
\ket{\psi_\pm}
= \tfrac{1}{2}
\bigl(
\ket{2,2}
\pm i\sqrt{2}\,\ket{2,0}
+ \ket{2,-2}
\bigr).
\end{equation}

\paragraph*{Fidelity-based measure for $t=1$.}
For arbitrary weight $\lambda$, the partial transposes
$(\ket{\psi_\pm}\!\bra{\psi_\pm})^{T_1}$ have supports on orthogonal subspaces.
As a consequence, the negativity is additive on the spectral decomposition of
$\rho$, i.e.,
\begin{equation}
\label{Eq:Neg1}
  \mathcal{N}_1(\rho)
  = \lambda \mathcal{N}_1(\ket{\psi_+})
  + (1-\lambda) \mathcal{N}_1(\ket{\psi_-}).
\end{equation}
Since each eigenvector is maximally $1$-AC,
$\mathcal{N}_1(\ket{\psi_\pm})=1/2$, it follows immediately that
\begin{equation}
  \mathcal{N}^{\mathrm{CR}}_1(\rho)
  = \mathcal{N}_1(\rho)
  = \tfrac{1}{2},
\end{equation}
independently of the mixing parameter $\lambda$.
Using the definition of the fidelity-based quantum measure then yields
\begin{equation}
  \mathcal{A}^{Q,F}_1(\rho)=1.
\end{equation}
Since the total measure also equals unity for $t=1$, one has
\begin{equation}
  \mathcal{A}^{T,F}_1(\rho)=\mathcal{A}^{Q,F}_1(\rho)=1,
  \qquad
  \mathcal{A}^{C,F}_1(\rho)=0.
\end{equation}
This provides an explicit example of a \emph{mixed} state that is perfectly
$1$-AC with a purely quantum origin. Such states have appeared previously in the literature.
In particular, the vectors~\eqref{Eq:j2_rank2_states} span a
\emph{$1$-AC subspace}, meaning that every pure state in their span is
$1$-AC \cite{Per.Pau:17,PhysRevA.111.022435}.

\paragraph*{Purity-based measure for $t=2$.}
We consider the same example as above. A direct evaluation of the reduced two-qubit state $\rho_2$ gives $ \Tr\rho_2^2=\frac{1}{3}$; hence, the total purity-based measure is $\A^{T,\mathcal{P}}_2(\rho)=1$,
independently of $\lambda$. We now determine the convex-roof quantum contribution. Since $\rho$ has
rank~$2$, we may identify its support with an effective qubit spanned by
$\{\ket{\psi_+},\ket{\psi_-}\}$ and write $\rho$ in Bloch form with
$\mathbf{r}=(0,0,r_z)$, $r_z=2\lambda-1$.
Any decomposition $\rho=\sum_i p_i\ket{\phi_i}\!\bra{\phi_i}$ corresponds
to an ensemble of pure Bloch vectors $\mathbf{n}_i$ with $\sum_i p_i \mathbf{n}_i=\mathbf{r}$, and therefore $\sum_i p_i z_i = r_z$, where $z_i$ denotes the $z$-component of $\mathbf{n}_i$.

For a pure state $\ket{\phi}$ in the support, the purity-based functional can be
written as
\begin{equation}
  \A^{\mathcal{P}}_2(\ket{\phi})=\frac{3+z^2}{4},
\end{equation}
Hence, for an arbitrary decomposition,
\begin{equation}
\begin{aligned}
    \sum_i p_i\,\A^{\mathcal{P}}_2(\ket{\phi_i})
  ={}& \frac{3}{4}+\frac{1}{4}\sum_i p_i z_i^2\\
  \ge{}& \frac{3}{4}+\frac{1}{4}\Bigl(\sum_i p_i z_i\Bigr)^2
  =\frac{3+r_z^2}{4},
\end{aligned}
  \label{Eq:CR_lowerbound}
\end{equation}
where we used Jensen's inequality for the convex function $z^2$.
The bound is tight. For $r_z=2\lambda-1$, choose $\theta$ such that $\cos\theta=r_z$
and define
\begin{equation}
    |\phi_{\pm}\rangle=\cos\frac{\theta}{2}\,|\psi_+\rangle \pm \sin\frac{\theta}{2}\,|\psi_-\rangle .
\end{equation}
Then $\rho=\frac12|\phi_+\rangle\langle\phi_+|+\frac12|\phi_-\rangle\langle\phi_-|$,
and both $|\phi_\pm\rangle$ have the same Bloch $z$-component $z_\pm=r_z$, so the
Jensen bound is saturated. Therefore,
\begin{equation}
  \A^{Q,\mathcal{P}}_2(\rho)
  =\frac{3+r_z^2}{4}
  =\frac{3+(2\lambda-1)^2}{4}
  =\lambda^2-\lambda+1.
\end{equation}
Finally, the corresponding classical contribution is
\begin{equation}
  \A^{C,\mathcal{P}}_2(\rho)
  =\A^{T,\mathcal{P}}_2(\rho)-\A^{Q,\mathcal{P}}_2(\rho)
  =\lambda(1-\lambda),
\end{equation}
which vanishes for $\lambda\in\{0,1\}$ and is maximal for the equal mixture
$\lambda=1/2$. This result is illustrated in Figure \ref{fig:ACMixed_example}.

\paragraph*{Purity-based measure for $t=3$.}
For $t=3$, an analogous calculation gives
\begin{equation}
  \mathcal{A}^{Q,\mathcal{P}}_3(\rho)=\tfrac{2}{3},
\end{equation}
which is independent of $\lambda$.
The total measure, however, now depends on the mixing parameter and reads
\begin{equation}
  \mathcal{A}^{T,\mathcal{P}}_3(\rho)
  = \tfrac{2}{3}\bigl(-2\lambda^2+2\lambda +1\bigr).
\end{equation}
The classical contribution is thus
\begin{equation}
  \mathcal{A}^{C,\mathcal{P}}_3(\rho)
  = \mathcal{A}^{T,\mathcal{P}}_3(\rho)-\mathcal{A}^{Q,\mathcal{P}}_3(\rho)
  = \tfrac{4}{3}\lambda(1-\lambda).
\end{equation}

\paragraph*{Fidelity-based measures for $t=2,3$.}
The fidelity-based measures can also be evaluated analytically for this family.
For $t=2$, the reduced state is maximally mixed for all $\lambda$, so
\begin{equation}
  \mathcal A^{T,F}_2(\rho)=1 .
\end{equation}
It remains to evaluate the convex-roof negativity.  Let
$z=2\lambda-1$ and identify the support of $\rho$ with a qubit.  For a pure state
$|\phi\rangle=\alpha|\psi_+\rangle+\beta|\psi_-\rangle$, define
\begin{equation}
 x=2\operatorname{Re}(\alpha^*\beta),\qquad
 y=2\operatorname{Im}(\alpha^*\beta),\qquad
 z_\phi=|\alpha|^2-|\beta|^2 .
\end{equation}
The spectrum of the two-qubit reduced state is
\begin{equation}
p_\pm=\frac13+\frac{x}{6}\pm\frac{\sqrt3\,y}{6},\quad
 p_3=\frac13-\frac{x}{3},
\end{equation}
and therefore
\begin{equation}
 \mathcal A^F_2(|\phi\rangle)
 =\mathcal N_2(|\phi\rangle)
 =\sqrt{p_+p_-}+\sqrt{p_+p_3}+\sqrt{p_-p_3}.
\end{equation}
An elementary minimization over $x,y$ at fixed
$z_\phi$ gives
\begin{equation}
  \mathcal A^F_2(|\phi\rangle)
  \ge \frac{1+|z_\phi|}{2} .
\end{equation}
Consequently, consider an arbitrary pure-state decomposition
$\rho(\lambda)=\sum_i p_i|\phi_i\rangle\langle\phi_i|$, with
$|\phi_i\rangle=\alpha_i|\psi_+\rangle+\beta_i|\psi_-\rangle$.
Since this decomposition must reproduce the Bloch vector of $\rho(\lambda)$
inside the support $\mathrm{span}\{|\psi_+\rangle,|\psi_-\rangle\}$, its
$z$-components satisfy
\begin{equation}
\sum_i p_i z_i=2\lambda-1,
\qquad
z_i=|\alpha_i|^2-|\beta_i|^2 .
\end{equation}
Using the bound above for each pure state in the decomposition, we obtain
\begin{equation}
\begin{aligned}
 \sum_i p_i\mathcal A^F_2(|\phi_i\rangle)
 &\ge \frac12+\frac12\sum_i p_i |z_i|  \\
 &\ge \frac12+\frac12\left|\sum_i p_i z_i\right| \\
 &=\frac12+\left|\lambda-\frac12\right| .
\end{aligned}
\end{equation}
This lower bound is tight.  For $\lambda\ge1/2$, write
\begin{equation}
 \rho(\lambda)=(2\lambda-1)|\psi_+\rangle\langle\psi_+|
 +2(1-\lambda)\rho_*,
\end{equation}
where
\begin{equation}
 \rho_* =\frac12\left(|\psi_+\rangle\langle\psi_+|
 +|\psi_-\rangle\langle\psi_-|\right)
 =\frac13\sum_{k=0}^{2}|\phi_k\rangle\langle\phi_k|
\end{equation}
with
\begin{equation}
 |\phi_k\rangle=\frac{1}{\sqrt2}
 \left(|\psi_+\rangle+e^{2\pi i k/3}|\psi_-\rangle\right).
\end{equation}
Since $\mathcal A^F_2(|\psi_+\rangle)=1$ and
$\mathcal A^F_2(|\phi_k\rangle)=1/2$, this decomposition has average value
$\lambda$.  The case $\lambda\le1/2$ is obtained by exchanging
$|\psi_+\rangle$ and $|\psi_-\rangle$.  Hence
\begin{equation}
  \mathcal A^{Q,F}_2(\rho)
  =\frac12+\left|\lambda-\frac12\right|,
  \qquad
  \mathcal A^{C,F}_2(\rho)
  =\frac12-\left|\lambda-\frac12\right| .
\end{equation}
For $t=3$, the reduced state $\rho_3$ has spectrum
\begin{equation}
 \left\{\frac{\lambda}{2},\frac{\lambda}{2},
 \frac{1-\lambda}{2},\frac{1-\lambda}{2}\right\},
\end{equation}
which gives
\begin{equation}
  \mathcal A^{T,F}_3(\rho)
  =\frac{1+4\sqrt{\lambda(1-\lambda)}}{3}.
\end{equation}
Moreover, every pure state in the support has the same negativity across the
$3|1$ bipartition, namely $\mathcal N_3=1/2$. Thus
\begin{equation}
  \mathcal A^{Q,F}_3(\rho)=\frac13,
  \qquad
  \mathcal A^{C,F}_3(\rho)=\frac{4}{3}\sqrt{\lambda(1-\lambda)} .
\end{equation}
Together with the purity-based expressions above, these analytical results are
shown in Fig.~\ref{fig:ACMixed_example}.

This example clearly illustrates how increasing the order $t$ enhances the
relative weight of the classical contribution to isotropy.
While $1$-anticoherence is entirely quantum and robust under mixing, higher-order
anticoherence increasingly relies on probabilistic mixing, even though the total
isotropy may remain high.

\subsection{States saturating the convex-roof negativity}
\label{Sec:NegSaturation}
For the quantum AC measure associated with the fidelity-based construction, one must calculate the convex-roof extension of the negativity $\mathcal{N}^{\mathrm{CR}}_t(\rho)$. Here, we discuss two cases where this quantity can be computed explicitly.
\subsubsection{Mixed states with support of equal negativity}
The first case is when every state in the support of $\rho$ has the same negativity. In this case, it is immediate to obtain that for any pure decomposition of $\rho= \sum_{i}p_i \ket{\psi_i}\bra{\psi_i}$, $\sum_i p_i \Nega (\ket{\psi_i})$ has the same value. Consequently, there is no need to do a minimization to evaluate $\Nega_t^{\mathrm{CR}}(\rho)$. In particular, one can directly calculate it from the eigenbasis of the state $\Nega_t^{\mathrm{CR}}(\rho) = \sum_{i} \lambda_i \Nega(\ket{\phi_i})$. 

This situation occurs for mixed states whose support lies in a $t$-AC subspace. By definition, any state in a $t$-AC subspace is $t$-AC~\cite{Per.Pau:17,PhysRevA.111.022435}. Moreover, a $t$-AC pure state has maximum negativity $\Nega_t (\ket{\psi} ) =t/2$ ($t\le N/2$)~\cite{PhysRevA.111.022435,Den.Mar:22}. 
Whenever the support of a mixed state is itself a $t$-AC subspace,
the resulting state displays maximal quantum $t$-anticoherence, independently
of the specific mixing probabilities.

\begin{theorem}
\label{Thm:ACSubspace}
Let $\rho$ be a mixed spin-$j$ state whose image\footnote{The image of an operator is the subspace spanned by its eigenvectors with nonzero eigenvalues.}
$\mathrm{Im}(\rho)\subset\mathcal H_j$ is a $t$-AC subspace. Then every pure
state in every decomposition of $\rho$ belongs to this subspace and is
$t$-AC. Hence each such pure state has maximal negativity
$\mathcal{N}_t=t/2$ across the $t\,|\,N{-}t$ bipartition, and the
convex-roof negativity is
\begin{equation}
\mathcal{N}_t^{\mathrm{CR}}(\rho)=t/2.
\end{equation}
Consequently, the associated quantum $t$-anticoherence measure attains
its maximal value,
$\mathcal{A}^{Q,F}_t(\rho)=1.$
If the stronger ordinary-negativity equality
$\mathcal N_t(\rho)=t/2$ holds for a particular $t$-AC subspace, as in the
settings characterized in Ref.~\cite{PhysRevA.111.022435}, then the ordinary
negativity and the convex-roof negativity coincide for such states.
\end{theorem}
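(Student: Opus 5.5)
The plan has three steps: a linear-algebra fact about supports of decompositions, an appeal to the maximal negativity of $t$-AC pure states, and a direct evaluation of the convex roof.

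\emph{Step 1: supports.} Take any decomposition $\rho=\sum_i p_i\ket{\psi_i}\!\bra{\psi_i}$ with $p_i>0$. For any vector $\ket{v}\in\ker\rho$,
\begin{equation}
0=\bra{v}\rho\ket{v}=\sum_i p_i|\langle v|\psi_i\rangle|^2 .
\end{equation}
Every term is nonnegative, so each $\ket{\psi_i}$ is orthogonal to $\ker\rho$. Since $\rho$ is Hermitian, $(\ker\rho)^\perp=\mathrm{Im}(\rho)$. Hence every $\ket{\psi_i}$ lies in $\mathrm{Im}(\rho)$, which is $t$-AC by hypothesis. By the definition of a $t$-AC subspace, each $\ket{\psi_i}$ is then a $t$-AC pure state.

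\emph{Step 2: negativity of each pure state.} Because $\ket{\psi_i}$ is $t$-AC, the reduced state is $\rho_t=\mathbb 1_{t+1}/(t+1)$, so $F(\rho_t,\rho_0^{(t)})=1$. Proposition~\ref{prop:FidelityNegativityRelation} with $d_A=t+1$ then gives
\begin{equation}
1=\frac{1+2\Nega_t(\ket{\psi_i})}{t+1},
\qquad\text{so}\qquad
\Nega_t(\ket{\psi_i})=t/2 .
\end{equation}
The same value follows directly from \eqref{eq:Negativity_SchmidtDecomposition}: the $t+1$ Schmidt coefficients all equal $(t+1)^{-1/2}$, and there are $\binom{t+1}{2}$ pairs.

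Here I use $t\le N/2$, so that the $t$-block is the smaller side with $d_A=t+1$. This is the standing convention for pure states.

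\emph{Step 3: the convex roof and the remarks.} Every decomposition of $\rho$ gives the same average $\sum_i p_i\,t/2=t/2$. The minimum over decompositions is therefore trivial, and $\Nega_t^{\mathrm{CR}}(\rho)=t/2$. From \eqref{eq:BuresCompatibleQAC},
\begin{equation}
\A^{Q,F}_t(\rho)=\frac{2}{t}\cdot\frac{t}{2}=1 .
\end{equation}
For the final statement, assume in addition that $\Nega_t(\rho)=t/2$ holds, as in Ref.~\cite{PhysRevA.111.022435}. Then this value equals $\Nega_t^{\mathrm{CR}}(\rho)$, and the two quantities coincide.

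I also note the general bound $\Nega_t\le\Nega^{\mathrm{CR}}_t$, which follows from convexity of the negativity. It is consistent with the above, but the coincidence needs the hypothesis and is not automatic. No step is a real obstacle. The only care needed is the support argument in Step 1 and the convention $t\le N/2$ in Step 2.
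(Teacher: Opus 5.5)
Your proof is correct and follows the paper's argument: every vector of any decomposition lies in $\mathrm{Im}(\rho)$, so it is $t$-AC with $\mathcal{N}_t=t/2$, every decomposition gives the same average, and the convex-roof minimization is trivial, yielding $\mathcal{N}_t^{\mathrm{CR}}(\rho)=t/2$ and $\mathcal{A}^{Q,F}_t(\rho)=1$. You additionally supply details the paper leaves implicit or cites: the kernel argument for the support, and the derivation of $\mathcal{N}_t=t/2$ from Proposition~\ref{prop:FidelityNegativityRelation}. Your $t\le N/2$ caveat is in fact automatic, since for $t>N/2$ a pure state's $\rho_t$ has rank at most $N-t+1<t+1$, so no nonzero $t$-AC subspace exists.
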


\noindent
Since the total $t$-anticoherence is also maximal for such states, the classical
contribution necessarily vanishes.
Therefore, any mixed state supported on a $t$-AC subspace exhibits
\emph{purely quantum} $t$-anticoherence, extending to mixed states a property previously known for pure AC states.

\subsubsection*{Example 2. \texorpdfstring{$2$}{2}-AC subspaces for \texorpdfstring{$N=7$}{N=7}}

As a second example, consider the symmetric $N=7$ qubit states~\cite{Gross_2021,lim_2023}
\begin{equation}
\label{Eq:j7_AC_subspace}
\begin{aligned}
\ket{\psi_1} &=
\sqrt{\tfrac{3}{10}}
\ket{\tfrac{7}{2},\tfrac{7}{2}}
+
\sqrt{\tfrac{7}{10}}
\ket{\tfrac{7}{2},-\tfrac{3}{2}},\\
\ket{\psi_2} &=
\sqrt{\tfrac{7}{10}}
\ket{\tfrac{7}{2},\tfrac{3}{2}}
-
\sqrt{\tfrac{3}{10}}
\ket{\tfrac{7}{2},-\tfrac{7}{2}} .
\end{aligned}
\end{equation}
These vectors span a $2$-AC subspace.
Therefore, for any mixture $\rho$ supported on this subspace one finds
\begin{equation}
  \mathcal{N}^{\mathrm{CR}}_1(\rho)=\tfrac{1}{2},
  \qquad
  \mathcal{N}^{\mathrm{CR}}_2(\rho)=1,
\end{equation}
and hence
$\mathcal{A}^{T,F}_t(\rho)=\mathcal{A}^{Q,F}_t(\rho)=1$
for $t=1,2$.
\subsubsection{Pairwise orthogonal partial transposed projectors}
By definition, the negativity satisfies 
\begin{equation}
\mathcal{N}^{\mathrm{CR}}_t(\rho)\ge \mathcal{N}_t(\rho) .
\end{equation}
A sufficient condition for saturation is the following:
consider a decomposition of the state
$\rho=\sum_i p_i \ket{\psi_i}\!\bra{\psi_i}.$
If the images of the partially transposed projectors $
(\ket{\psi_i}\!\bra{\psi_i})^{T_A}$ are pairwise orthogonal subspaces\footnote{Two subspaces are orthogonal if every vector in one is orthogonal to every vector in the other.}, then the negativity is additive over the decomposition\footnote{This follows from the following result: let $A$ and $B$ be two diagonalizable matrices, and let $\lambda_{\neq 0}(A)$ be the set of nonzero eigenvalues (counting multiplicity) of $A$. Then, $\lambda_{\neq 0}(A+B) = \lambda_{\neq 0}(A) \cup \lambda_{\neq 0}(B) $ if $\mathrm{im}(A) \perp \mathrm{im}(B)$. In particular, when this holds, the negative eigenvalues of $A+B$ are the union of the negative eigenvalues of $A$ and $B$.},
\begin{equation}
\label{Eq:NegAdditivity}
  \sum_i p_i\,\mathcal{N}_t(\ket{\psi_i})
  = \mathcal{N}_t(\rho).
\end{equation}
In this case, the given decomposition is optimal for the convex-roof construction. Indeed, the given decomposition yields
\begin{equation}
    \mathcal N_t(\rho) \leq \mathcal N_t^{\mathrm{CR}}(\rho)\le
\sum_i p_i\mathcal N_t(\ket{\psi_i})
= \mathcal N_t(\rho),
\end{equation}
so equality holds throughout, and the convex roof is easily evaluated. Such orthogonality among the partially transposed operators also appears in the orthogonal basis of an AC subspace (see Appendix F in Ref.~\cite{PhysRevA.111.022435}). However, they are not the only cases. Another nontrivial example that does not arise from an AC subspace is given by two spin-$5/2$ states:
\begin{equation}
\begin{aligned}
\label{Eq.Ex5half}
\ket{\psi_1}
&= \tfrac{1}{\sqrt{6}}
\ket{\tfrac{5}{2},-\tfrac{5}{2}}
+ \sqrt{\tfrac{5}{6}}
\ket{\tfrac{5}{2},\tfrac{3}{2}},\\
\ket{\psi_2}
&= -\sqrt{\tfrac{5}{6}}
\ket{\tfrac{5}{2},-\tfrac{3}{2}}
+ \tfrac{1}{\sqrt{6}}
\ket{\tfrac{5}{2},\tfrac{5}{2}} .
\end{aligned}
\end{equation}
In their representation as $N=5$ symmetric qubits, and considering the partial transposed density matrices in the bipartition $t|N-t$ with $t=1$, we obtain that $(\ket{\psi_1}\bra{\psi_1})^{T_A} \perp (\ket{\psi_2}\bra{\psi_2})^{T_A} $. Therefore,
\begin{multline*}
\Nega_1^{\mathrm{CR}} \big(\lambda \ket{\psi_1}\bra{\psi_1} + (1-\lambda) \ket{\psi_2}\bra{\psi_2} \big) 
\\
= \lambda\, \Nega_1 \big(\ket{\psi_1} \big) + (1-\lambda)\, \Nega_1 \big(\ket{\psi_2} \big) = \frac{\sqrt{2}}{3} .  
\end{multline*}

\begin{figure}[t]
\centering
\includegraphics[width=0.85\linewidth]{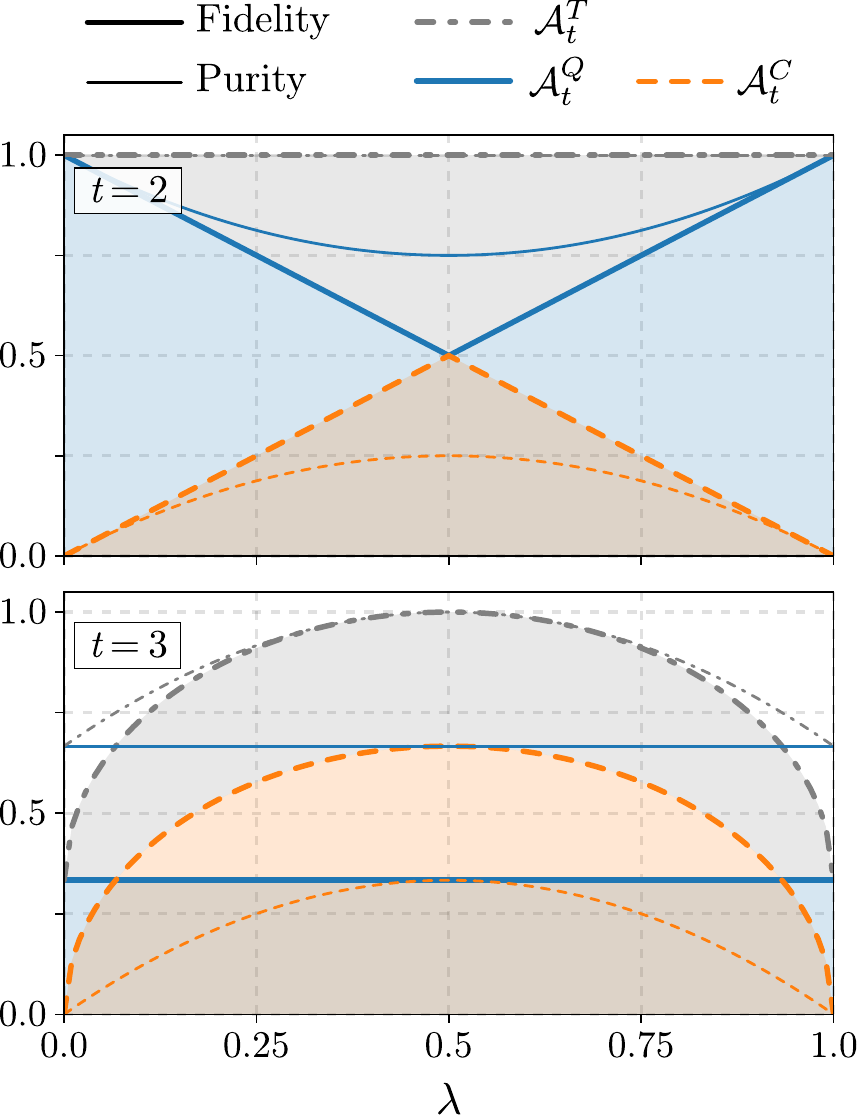}
\caption{
Total, quantum, and classical $t$-anticoherence of the mixed state~\eqref{Eq:j2_rank2_mixed_state} as functions of the mixing weight $\lambda$. 
The upper and lower panels correspond to $t=2$ and $t=3$, respectively. 
Thick curves correspond to the fidelity-based definitions
\eqref{BuresACTlin} and \eqref{eq:BuresCompatibleQAC}, evaluated analytically in Sec.~\ref{Sec:evaluation}; thin curves show the purity-based measures.  In particular,
$\mathcal A^{Q,F}_2=\frac12+|\lambda-\frac12|$ and
$\mathcal A^{Q,F}_3=\frac13$.
For $t=1$, both measures give $\mathcal{A}^{T}_1=\mathcal{A}^{Q}_1=1$ and $\mathcal{A}^{C}_1=0$ (data not shown).
}
\label{fig:ACMixed_example}
\end{figure}
\section{Applications}
\label{Sec:Applications}
\subsection{Robustness of quantum AC under particle loss}

Figure~\ref{fig:QACreduced} compares the robustness of the quantum, purity-based
$t$-anticoherence $\mathcal{A}^{Q,\mathcal{P}}_t$ under particle loss for three emblematic
families of symmetric $N$-qubit pure states: highest-order anticoherent pure (HOAP) states~\cite{Den.Mar:22}, $W$ states, and GHZ states. For each $N$ and each remaining subsystem size $q$, we consider the reduced state $\rho_q=\Tr_{N-q}(|\psi\rangle\!\langle\psi|)$ and evaluate $\mathcal{A}^{Q,\mathcal{P}}_t(\rho_q)$ for $t=1$ (left column) and $t=2$ (right column)~\cite{Zhu2025}.

Two qualitative behaviors stand out. First, HOAP states, which are AC of order $t\geq 2$ for $N=4$ and $N\geq 6$, retain a large \emph{quantum} anticoherence for moderate particle losses: $\mathcal{A}^{Q,\mathcal{P}}_t(\rho_q)$ remains high as long as a significant fraction of the particles is conserved and only becomes negligible once approximately half (or more) of the system is traced out. In this regime, the anticoherence of reduced states becomes mainly classical, in the sense that any remaining lower-order isotropy is not supported by quantum correlations across the relevant bipartitions. Second, this behavior contrasts sharply with $1$-AC GHZ states: their quantum anticoherence is extremely fragile, collapsing as soon as any particle is traced
out. In Fig.~\ref{fig:QACreduced}, this appears as nonzero values only on the ``no-loss'' diagonal $q=N$, while reduced states with $q<N$ exhibit
$\mathcal{A}^{Q,\mathcal{P}}_t = 0$. This is consistent with the well-known fact that particle loss destroys the entanglement structure of GHZ states, rendering the reduced states separable. Finally, $W$ states, which are not exactly $t$-AC in the orders considered here, show the most persistent quantum anticoherence under loss: $\mathcal{A}^{Q,\mathcal{P}}_t(\rho_q)$ remains visibly nonzero over a broad range of $q$, even when many particles are discarded. This robustness comes at the price of not reaching maximal quantum anticoherence: compared with HOAP states, the values are typically smaller (moderate marker sizes and intermediate colors), indicating a more uniform but partial quantum contribution to isotropy.

\begin{figure}[t]
\centering
\includegraphics[width=\linewidth]{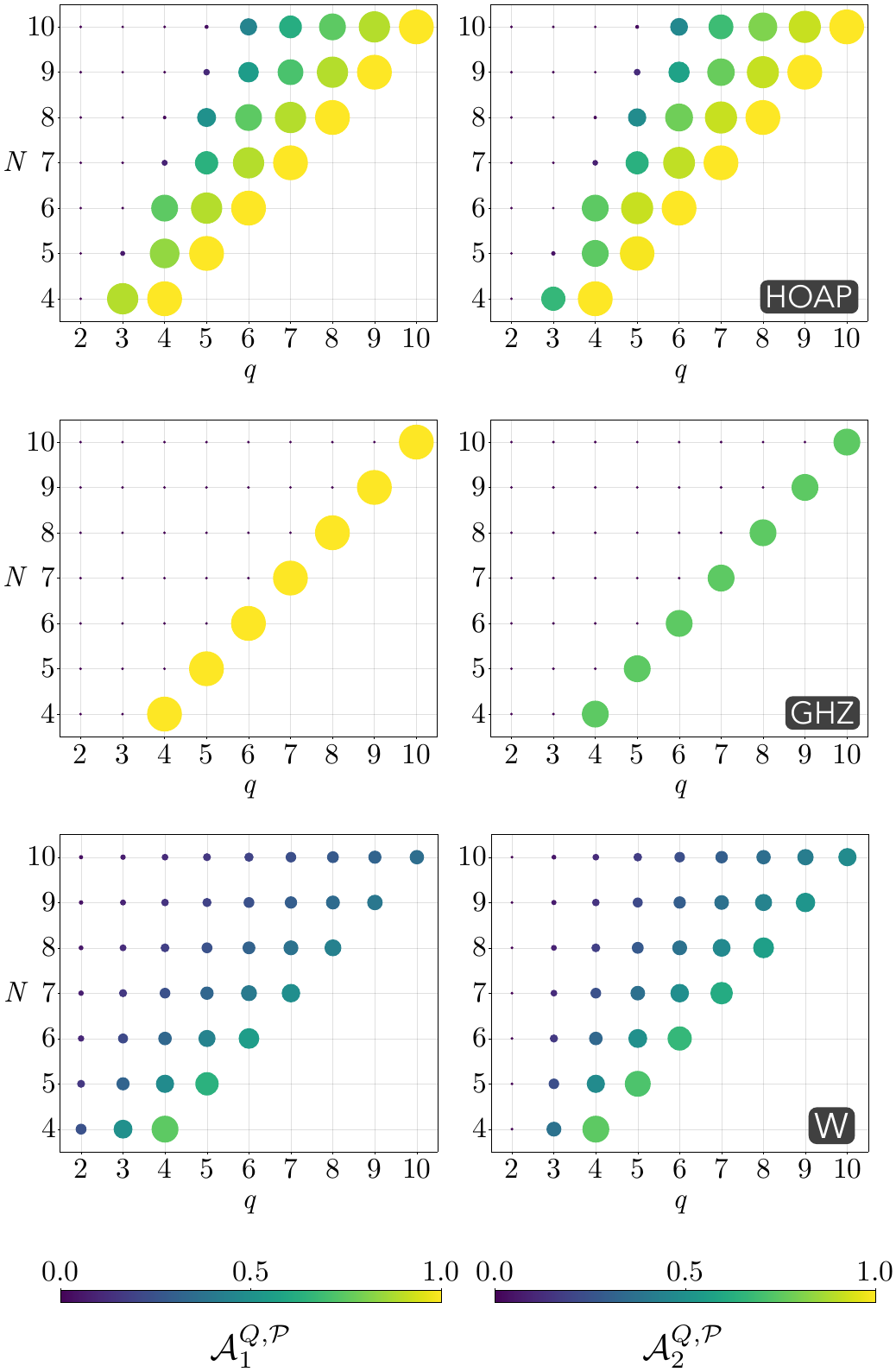}
\caption{Quantum purity-based $t$-anticoherence of reduced $q$-qubit states (seen as spin-$q/2$ states). Left (right): $t=1$ ($t=2$). Reduced states are obtained by partial tracing HOAP states~\cite{Den.Mar:22} (top), GHZ states (middle), and $W$ states (bottom). Disk areas scale with the magnitude of $\mathcal{A}^{Q,\mathcal{P}}_t$; dark-purple disks of minimal size indicate vanishing anticoherence.}
\label{fig:QACreduced}
\end{figure}

\subsection{Maximal anticoherence order for mixed states at fixed purity}
\label{Sec:MaxOrderPurity}

For mixed states, anticoherence competes with purity: increasing isotropy generally requires mixing, while higher purity constrains the degree to which low-order spin moments can be suppressed. A natural question is therefore the following: \emph{given a fixed global purity, what is the maximal order $t$ of anticoherence that a mixed spin state can achieve?}

This question is particularly relevant from the perspective of quantum resources. At fixed purity, different states may display the same total amount of isotropy while differing markedly in the relative contributions arising from genuinely quantum correlations and from statistical mixing. Our framework allows one to resolve this distinction by separating total, quantum, and classical contributions to $t$-anticoherence.

For non-symmetric multipartite systems, related trade-offs between purity and
isotropy have been investigated in Ref.~\cite{Kobus2019}. Here, we use the same type of techniques based on semidefinite programming but adapted to symmetric states and characterise the maximum achievable anticoherence order as a function of purity within this physically relevant sector.

Figure~\ref{fig:MaxorderACMixed} summarizes our results for $j=3/2, 2$ and $5/2$. For each value of the purity, we identify mixed states that are exactly $t$-AC, and hence maximize any admissible total $t$-AC measure by axiom \eqref{ax:T1}; we then indicate the highest such order $t$. The accompanying pie charts display the relative contributions of quantum ($\mathcal{A}_t^{Q,\mathcal{P}}$) and classical ($\mathcal{A}_t^{C,\mathcal{P}}$) anticoherence at that order. A completely filled chart corresponds to a perfectly $t$-AC state. The highest achievable order $t$ is piecewise constant: as $\Tr(\rho^2)$ decreases, successive thresholds are crossed where an additional multipolar rank can be completely removed, and the maximum achievable order increases by one step (blue staircase curve). To achieve higher orders of total anticoherence, the mixture must be increased: total isotropy at high $t$ is only accessible when the purity is sufficiently low. Along this staircase, the \emph{composition} of the isotropy changes: the charts indicate that the quantum contribution $\mathcal{A}_t^{Q,\mathcal{P}}$ (blue) generally decreases as $t$ increases, while the complementary classical contribution $\mathcal{A}_t^{C,\mathcal{P}}$ (orange) increases. In other words, reducing the purity allows for stronger total isotropy, but an increasing fraction of this isotropy comes from statistical mixing rather than genuine quantum correlations.

We complement Fig.~\ref{fig:MaxorderACMixed} with explicit mixed states realizing the different steps of the purity--AC trade-off. These examples serve two purposes. First, they show that the bounds are not merely numerical: the extremal points can be attained by simple low-rank mixtures in the Dicke basis. Second, they illustrate how the nature of isotropy changes with the order $t$. For each state and each relevant order, Table~\ref{tab:anticoherence_summary} reports the triple
$\bigl(
    \mathcal{A}_t^{T,\mathcal{P}},
    \mathcal{A}_t^{Q,\mathcal{P}},
    \mathcal{A}_t^{C,\mathcal{P}}
  \bigr)$, 
together with the rank and the global purity. The general trend is that reaching higher orders of total AC at fixed spin requires higher rank and more mixing, and the additional isotropy is increasingly classical. By contrast, a large quantum contribution is possible only at lower orders, or in the special case where the support of the mixed state forms a $t$-AC subspace.

The staircase in Fig.~\ref{fig:MaxorderACMixed} should be interpreted as follows. The displayed jumps are certified by explicit feasible states given below and by semidefinite-programming searches in the symmetric sector. In these searches the $t$-AC constraints are imposed as the linear constraints $\rho_t=\mathbb 1_{t+1}/(t+1)$, equivalently by the vanishing of the multipoles with $1\le L\le t$. Since the global purity constraint $\Tr(\rho^2)= \mathrm{const.}$ is quadratic, the numerical search is performed by optimizing linear objectives over fixed-rank or fixed-support ansatz families and by checking the achieved purities a posteriori. Thus the explicit states prove attainability of the indicated steps.

\begin{figure*}[t]
\centering
\includegraphics[width=0.975\linewidth]{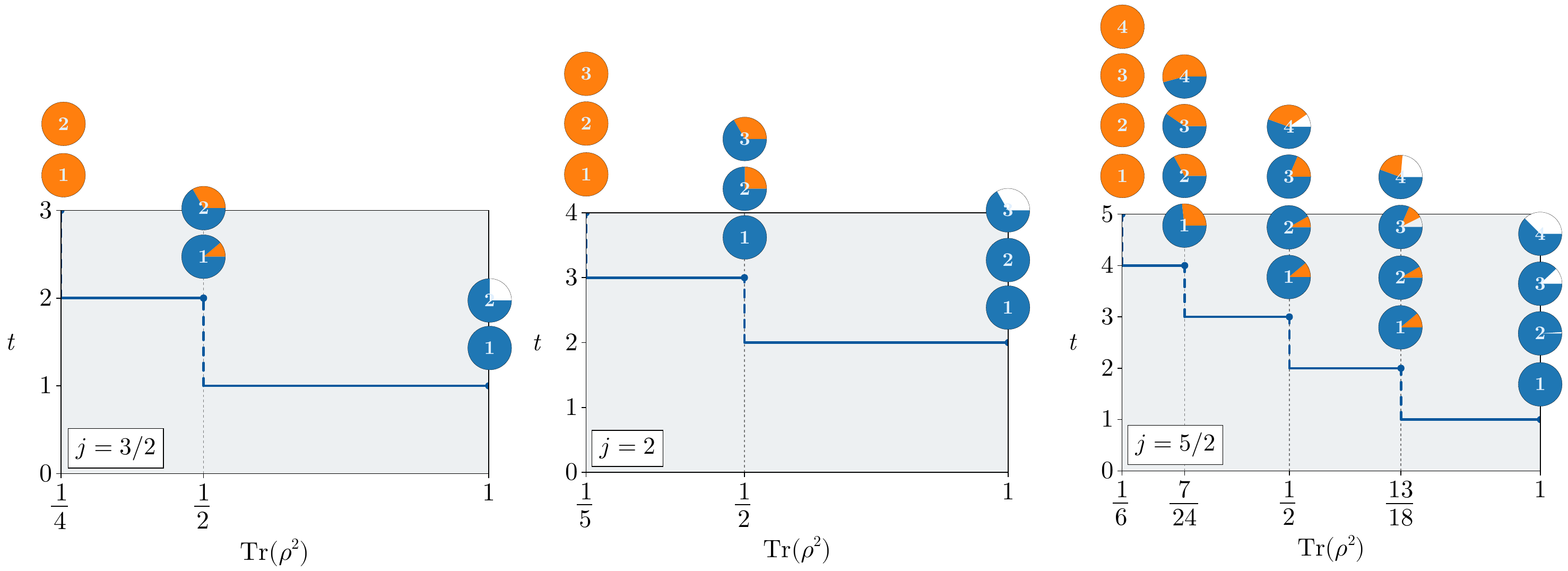}
\caption{Maximal achievable total AC order as a function of the global purity. 
For each purity threshold, the pie chart shows the decomposition of the purity-based AC measure into its quantum contribution $\A^{Q,\mathcal{P}}_t$ (blue) and classical contribution $\A^{C,\mathcal{P}}_t$ (orange), evaluated at the maximal attainable order $t$. The number at the center of each pie gives this order. A fully filled pie indicates a state that is exactly $t$-AC. The states realizing the jumps are given in Eqs.~\eqref{eq:j3o2_2AC}, \eqref{eq:j2_3AC}, \eqref{eq:j5o2_2AC}, \eqref{eq:j5o2_3AC}, and~\eqref{eq:j5o2_4AC}; see also Table~\ref{tab:anticoherence_summary}. The leftmost charts correspond to the MMS state, while the rightmost charts correspond to HOAP states~\cite{Den.Mar:22}.}
\label{fig:MaxorderACMixed}
\end{figure*}

\subsubsection{\texorpdfstring{$j=\tfrac{3}{2}$}{j=3/2}}

The first nontrivial example occurs for $j=3/2$. For pure spin-$3/2$ states, the maximal anticoherence order is only $1$; reaching second-order isotropy therefore requires mixing. Consider the two orthonormal states
\begin{equation}
\begin{aligned}
\ket{\psi_1}
&= \tfrac{1}{\sqrt{2}}
\bigl(
\ket{\tfrac{3}{2},\tfrac{3}{2}}
+
\ket{\tfrac{3}{2},-\tfrac{1}{2}}
\bigr),\\
\ket{\psi_2}
&= \tfrac{1}{\sqrt{2}}
\bigl(
-\ket{\tfrac{3}{2},\tfrac{1}{2}}
+
\ket{\tfrac{3}{2},-\tfrac{3}{2}}
\bigr).
\end{aligned}
\end{equation}
Their equal mixture,
\begin{equation}\label{eq:j3o2_2AC}
\rho
= \tfrac{1}{2}
\bigl(
\ket{\psi_1}\!\bra{\psi_1}
+
\ket{\psi_2}\!\bra{\psi_2}
\bigr),
\end{equation}
has purity $1/2$ and is exactly $2$-AC. It thus provides the lowest-spin example in which mixing enables a higher anticoherence order than is possible for pure states.

\subsubsection{\texorpdfstring{$j=2$}{j=2}}

For $j=2$, a rank-2 construction already reaches one order higher. Define
\begin{equation}
\ket{\psi_\pm}
= \tfrac{1}{2}
\bigl(
\ket{2,2}
\pm i\sqrt{2}\,\ket{2,0}
+ \ket{2,-2}
\bigr).
\end{equation}
The equal mixture
\begin{equation}\label{eq:j2_3AC}
\rho
= \tfrac{1}{2}
\bigl(
\ket{\psi_+}\!\bra{\psi_+}
+
\ket{\psi_-}\!\bra{\psi_-}
\bigr)
\end{equation}
has purity $1/2$ and is exactly $3$-AC. This example is particularly useful because the same two-dimensional support is also a $1$-AC subspace, so first-order isotropy remains entirely quantum, while higher-order isotropy contains an increasing classical contribution (see the pie charts in the middle panel of Fig.~\ref{fig:MaxorderACMixed}).

\subsubsection{\texorpdfstring{$j=\tfrac{5}{2}$}{j=5/2}}

For $j=5/2$, several extremal regimes can be realized depending on the rank and purity of the mixture (see Fig.~\ref{fig:MaxorderACMixed}, right panel).

\paragraph*{Rank-2 states.}
Let
\begin{equation}
\begin{aligned}
\ket{\psi_1}
&= \tfrac{1}{\sqrt{2}}
\bigl(
\ket{\tfrac{5}{2},\tfrac{5}{2}}
+
\ket{\tfrac{5}{2},-\tfrac{5}{2}}
\bigr),\\
\ket{\psi_2}
&= \tfrac{1}{\sqrt{2}}
\bigl(
\ket{\tfrac{5}{2},\tfrac{3}{2}}
+
\ket{\tfrac{5}{2},-\tfrac{3}{2}}
\bigr).
\end{aligned}
\end{equation}
The unequal mixture
\begin{equation}\label{eq:j5o2_2AC}
\rho
= \tfrac{1}{6}\ket{\psi_1}\!\bra{\psi_1}
+ \tfrac{5}{6}\ket{\psi_2}\!\bra{\psi_2}
\end{equation}
has purity $13/18$ and is exactly $2$-AC. Thus, at relatively high purity, only low-order isotropy can be made perfect.

A lower-purity rank-2 mixture reaches third-order anticoherence. Let $\ket{\psi_1}$ and $\ket{\psi_2}$ be the states of Eq.~\eqref{Eq.Ex5half}.
Then
\begin{equation}\label{eq:j5o2_3AC}
\rho
= \tfrac{1}{2}
\bigl(
\ket{\psi_1}\!\bra{\psi_1}
+
\ket{\psi_2}\!\bra{\psi_2}
\bigr)
\end{equation}
has purity $1/2$ and is exactly $3$-AC. Compared with the previous state, the additional order of isotropy is obtained by increasing the degree of mixing.

\paragraph*{Rank-4 state.}
Finally, fourth-order anticoherence requires a still more mixed state. Let
\begin{equation}
\begin{aligned}
\ket{\psi_1} &= \ket{\tfrac{5}{2},-\tfrac{1}{2}}, \\
\ket{\psi_2} &= \ket{\tfrac{5}{2},-\tfrac{3}{2}}, \\
\ket{\psi_3} &= \sqrt{\tfrac{9}{20}}
\ket{\tfrac{5}{2},\tfrac{5}{2}}
+ \sqrt{\tfrac{11}{20}}
\ket{\tfrac{5}{2},-\tfrac{5}{2}}, \\
\ket{\psi_4} &= \ket{\tfrac{5}{2},\tfrac{1}{2}} .
\end{aligned}
\end{equation}
The rank-4 mixture
\begin{equation}\label{eq:j5o2_4AC}
\rho
= \tfrac{1}{12}\ket{\psi_1}\!\bra{\psi_1}
+ \tfrac{1}{4}\ket{\psi_2}\!\bra{\psi_2}
+ \tfrac{1}{3}
\bigl(
\ket{\psi_3}\!\bra{\psi_3}
+ \ket{\psi_4}\!\bra{\psi_4}
\bigr)
\end{equation}
has purity $7/24$ and is exactly $4$-AC. This example illustrates the general pattern most clearly: the highest anticoherence order is achieved only after substantial mixing, and the corresponding total isotropy is dominated by its classical component.

Table~\ref{tab:anticoherence_summary} summarizes the AC measures of the representative states discussed in Sec.~\ref{Sec:MaxOrderPurity}.

\begin{table*}[t]
\centering
\renewcommand{\arraystretch}{1.5}
\begin{tabular}{c@{\hspace{1.5em}}c@{\hspace{1.5em}}c@{\hspace{1.5em}}c@{\hspace{1.5em}} c@{\hspace{1.5em}} c}
\hline
$j$ & State & Rank & Purity & $t$ & $\bigl(\mathcal{A}_{t}^{T,\mathcal{P}},\,\mathcal{A}_{t}^{Q,\mathcal{P}},\,\mathcal{A}_{t}^{C,\mathcal{P}}\bigr)$ \\ \hline
$\tfrac{3}{2}$ & \eqref{eq:j3o2_2AC} & 2 & $\tfrac{1}{2}$ & 1,2 & $(1,\tfrac{8}{9},\tfrac{1}{9})$, $(1,\tfrac{2}{3},\tfrac{1}{3})$ \\ \hline
$2$ & \eqref{eq:j2_3AC} & 2 & $\tfrac{1}{2}$ & 1,2,3 & $(1,1,0)$, $(1,\tfrac{3}{4},\tfrac{1}{4})$, $(1,\tfrac{2}{3},\tfrac{1}{3})$ \\ \hline
$\tfrac{5}{2}$ & \eqref{eq:j5o2_2AC} & 2 & $\tfrac{13}{18}$ & 1,2,3,4 &
$(1,\tfrac{8}{9},\tfrac{1}{9})$,
$(1,\tfrac{11}{12},\tfrac{1}{12})$,
$(\tfrac{25}{27},\tfrac{22}{27},\tfrac{1}{9})$,
$(\tfrac{55}{72},\tfrac{5}{9},\tfrac{5}{24})$ \\ \hline
$\tfrac{5}{2}$ & \eqref{eq:j5o2_3AC} & 2 & $\tfrac{1}{2}$ & 1,2,3,4 &
$(1,\tfrac{8}{9},\tfrac{1}{9})$,
$(1,\tfrac{11}{12},\tfrac{1}{12})$,
$(1,\tfrac{22}{27},\tfrac{5}{27})$,
$(\tfrac{65}{72},\tfrac{5}{9},\tfrac{25}{72})$ \\ \hline
$\tfrac{5}{2}$ & \eqref{eq:j5o2_4AC} & 4 & $\tfrac{7}{24}$ & 1,2,3,4 &
$(1.0,0.732,0.268)$,
$(1.0,0.671,0.329)$,
$(1.0,0.596,0.404)$,
$(1.0,0.458,0.542)$ \\
\hline
\end{tabular}
\caption{
Summary of mixed spin states discussed in Sec.~\ref{Sec:MaxOrderPurity}. For each state, the table lists the spin $j$, the equation defining the state, its rank, its purity $\Tr(\rho^2)$, and the AC orders of the evaluated measures in the  last column. The purity-based total, quantum, and classical contributions
$\bigl(\mathcal{A}_{t}^{T,\mathcal{P}},\,\mathcal{A}_{t}^{Q,\mathcal{P}},\,\mathcal{A}_{t}^{C,\mathcal{P}}\bigr)$ are given. Decimal entries are numerical convex-roof evaluations. States with $\mathcal{A}_{t}^{T,\mathcal{P}}=1$ are exactly $t$-anticoherent.
}
\label{tab:anticoherence_summary}
\end{table*}
%
\section{Conclusions}
\label{Sec.Conc}

We have developed a systematic framework to quantify spin anticoherence beyond the pure-state setting. In its original formulation, spin $t$-anticoherence is a property of maximal isotropy: a spin state is $t$-anticoherent when all spin moments up to order $t$ are independent of direction, or equivalently when all multipole moments up to rank $t$ vanish. For mixed states, we recast this condition using the symmetric $N=2j$ qubit embedding, where it is equivalently expressed by maximal mixedness of the reduced symmetric $t$-qubit state $\rho_t$. This reformulation led to an axiomatic separation between \emph{total} $t$-anticoherence, which quantifies isotropy independently of its origin, and \emph{quantum} $t$-anticoherence, which isolates the genuinely quantum contribution as a resource monotone associated with a chosen total measure. Their difference defines a \emph{classical} contribution, capturing isotropy produced by statistical mixing.

On the constructive side, we introduced broad families of total measures based on normalized distances from $\rho_t$ to the MMS state, including a Hilbert--Schmidt distance-based measure and a fidelity-based total candidate, as well as simple purity-based and cumulative-multipole measures. The total monotonicity of the fidelity-based candidate under all $\mathrm{SU}(2)$-covariant channels remains supported by numerical evidence. We also showed how cumulative multipole functionals fit naturally into the same axiomatic picture, thereby clarifying the relation with earlier tensor-based descriptions of angular structure. For the quantum contribution, we constructed convex-roof extensions from pure-state functionals tied to bipartite entanglement across the $t\,|\,N{-}t$ split, obtaining explicit purity/linear-entropy-based quantum measures and a fidelity/convex-roof-negativity construction satisfying the quantum-side requirements and the ordering relative to the fidelity-based total candidate. For the constructions where the required ordering is established, the inequality $\mathcal A_t^Q\le \mathcal A_t^T$ ensures a well-defined and nonnegative classical part $\mathcal A_t^C=\mathcal A_t^T-\mathcal A_t^Q$.

Our examples illustrate that mixed-state anticoherence exhibits qualitatively new behavior absent in the pure-state regime. In particular, mixed states supported on $t$-AC subspaces can retain \emph{maximal quantum} $t$-anticoherence independently of the mixing weights, showing that perfect isotropy need not be of classical origin. Conversely, when the achievable anticoherence order is increased at fixed purity, the total isotropy typically becomes more classical: higher-order isotropy is accessible only at sufficiently low purity, and its quantum share generally decreases. We also showed contrasting robustness patterns under particle loss, with HOAP, GHZ, and $W$ states displaying markedly different decay profiles of the quantum contribution.

A conceptual message of this work is that \emph{isotropy of quantum origin} can itself be viewed as a resource, complementary to the usual resource-theoretic viewpoint in which $\mathrm{SU}(2)$-\emph{asymmetry} is the resource. In the asymmetry framework, useful states are those that break rotational symmetry and carry directional information. Here, by contrast, the quantum measure $\mathcal A_t^Q$ isolates the part of isotropy that cannot be reproduced by classical randomization over directions. It therefore identifies states that are direction-insensitive for intrinsically quantum reasons, encoded in genuine multipartite correlations in the symmetric sector. This entanglement-based perspective is already built into our axioms, through the requirements of vanishing on separable symmetric states and monotonicity under LOCC, and is made concrete by the explicit measures introduced here. The resulting distinction is operationally relevant whenever one wishes to separate loss of directional information caused by uncontrolled classical noise from isotropy engineered coherently through entanglement.

Several directions follow naturally. It would be interesting to sharpen the operational role of total, quantum, and classical anticoherence in concrete reference-frame alignment and direction-estimation protocols, and to characterize optimal states under constraints such as purity, rank, or particle loss. On the structural side, a broader classification of $t$-AC subspaces and of mixed states saturating the convex-roof constructions would clarify the landscape of extremal isotropic resources. Finally, although our analysis focuses on permutation-symmetric sectors and $\mathrm{SU}(2)$, the same strategy should extend to other symmetry groups and multipartite settings, providing a general route to separate intrinsically quantum isotropy from isotropy generated by classical uncertainty.

\section*{Acknowledgements}
JM and ESE thank Nicolas Cerf for stimulating discussions on this topic. This work is supported by the Quantum Science and Technology - National Science and Technology Major Project (Grant No.~2025ZD0300801) and the National Natural Science Foundation of China (Grant Nos.~92476201). JM and ESE acknowledge the FWO and the F.R.S.-FNRS for their funding as part of the Excellence of Science programme (EOS project 40007526).
\appendix

\section{Proofs}
\label{App.Proofs}

We begin this Appendix on the proofs of our propositions by two useful results. A general observation is that the ordering axiom \eqref{ax:Q6} follows automatically whenever the chosen total measure is concave.

\begin{proposition}
\label{prop:convex_roof_ordering_general}
Let $\A_t^T:\mathcal D(\mathcal H_j)\to[0,1]$ be a total $t$-anticoherence measure whose restriction to pure states is used to define
\begin{equation}
  \A_t^Q(\rho)=\min_{\{p_i,|\psi_i\rangle\}}
  \sum_i p_i\,\A_t^T(|\psi_i\rangle\langle\psi_i|).
\end{equation}
If $\A_t^T$ is concave on $\mathcal D(\mathcal H_j)$, then
\begin{equation}
  \A_t^Q(\rho)\le \A_t^T(\rho)
  \qquad\forall\,\rho\in\mathcal D(\mathcal H_j),
\end{equation}
i.e., axiom \eqref{ax:Q6} holds.
\end{proposition}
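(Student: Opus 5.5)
The plan is the standard convex-roof comparison: the spectral decomposition of $\rho$ (or any pure-state decomposition) is admissible in the minimization defining $\A_t^Q$, and concavity of $\A_t^T$ bounds the resulting average. Fix $\rho\in\mathcal D(\mathcal H_j)$ and choose any ensemble $\{p_i,|\psi_i\rangle\}$ with $\rho=\sum_i p_i|\psi_i\rangle\langle\psi_i|$; for definiteness, take the eigen-decomposition. By definition of the minimum,
\begin{equation}
  \A_t^Q(\rho)\le \sum_i p_i\,\A_t^T(|\psi_i\rangle\langle\psi_i|).
\end{equation}

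Next I will invoke concavity of $\A_t^T$ on the convex set $\mathcal D(\mathcal H_j)$. Concavity is stated for two-point combinations, so I extend it to finite convex combinations by induction on the number of terms: write $\sum_{i=1}^n p_i\sigma_i=p_n\sigma_n+(1-p_n)\tau$ with $\tau=\sum_{i<n}\frac{p_i}{1-p_n}\sigma_i\in\mathcal D(\mathcal H_j)$. This gives
\begin{equation}
  \sum_i p_i\,\A_t^T(|\psi_i\rangle\langle\psi_i|)\le \A_t^T\Bigl(\sum_i p_i|\psi_i\rangle\langle\psi_i|\Bigr)=\A_t^T(\rho).
\end{equation}
Chaining the two inequalities yields $\A_t^Q(\rho)\le\A_t^T(\rho)$ for every $\rho$, which is axiom \eqref{ax:Q6}.

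The only point needing care is whether the minimum in the definition of $\A_t^Q$ is attained, and whether decompositions with infinitely many terms must be handled. I will sidestep both issues: the argument only uses that the infimum is bounded above by the value of one particular finite decomposition, so it holds verbatim with $\min$ replaced by $\inf$. Attainment follows in any case from Carathéodory's theorem (decompositions can be taken with at most $(2j+1)^2$ elements) together with continuity of $\A_t^T$ on pure states. I expect no genuine obstacle; at most there is a bookkeeping remark that concavity is used exactly once and the pure-state values of $\A_t^Q$ and $\A_t^T$ never have to be compared.
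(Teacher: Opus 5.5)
Your proof is correct and is essentially the paper's argument: concavity bounds the ensemble average of any pure-state decomposition by $\A_t^T(\rho)$, and the convex roof is at most that average. Your side remark on attainment appeals to continuity of $\A_t^T$, which is not among the hypotheses, but as you say the inequality only needs the infimum to be bounded by one particular decomposition, so this does not affect the proof.
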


\begin{proof}
Let $\rho=\sum_i p_i|\psi_i\rangle\langle\psi_i|$ be any pure-state decomposition.
By concavity of $\A_t^T$,
\begin{equation}
  \A_t^T(\rho)\ge \sum_i p_i\,\A_t^T(|\psi_i\rangle\langle\psi_i|).
\end{equation}
Since this inequality holds for every decomposition, it also holds for the one minimizing the convex roof, which gives
\begin{equation}
  \A_t^T(\rho)\ge \A_t^Q(\rho).
\end{equation}
\end{proof}

We will also need a result derived in~\cite{2000Vidal}.

\begin{theorem}[\cite{2000Vidal}]
\label{VidalLOCCConstruction}
Let $f:\mathcal D(\mathcal H_{t/2})\to\mathbb R$ be concave and invariant under unitary operations. Then the pure-state functional
\begin{equation}
    A_t^Q(|\psi\rangle)
    =
    f\!\big(\Tr_{N-t}(|\psi\rangle\langle\psi|)\big)
\end{equation}
is non-increasing on average under LOCC operations across the bipartition $t\,|\,N{-}t$. Its convex-roof extension,
\begin{equation}
    A_t^Q(\rho)
    =
    \min_{\{p_i,|\psi_i\rangle\}}
    \sum_i p_i A_t^Q(|\psi_i\rangle),
\end{equation}
is non-increasing under LOCC operations across that bipartition.
\end{theorem}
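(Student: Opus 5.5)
The statement to prove is Theorem~\ref{VidalLOCCConstruction}, quoted from~\cite{2000Vidal}. I would follow Vidal's original strategy in two stages. First, show that the pure-state functional is non-increasing on average under a single local step. Then lift this to the convex roof by a standard ensemble argument.

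Since any LOCC protocol is a composition of local instruments on either side, with classical communication only deciding which later instrument is applied, it suffices to check one local measurement on side $A$ (the $t$ qubits) and one on side $B$. Averaging over branches conditioned on earlier outcomes then gives the claim by induction on the number of rounds. Throughout I set $\rho_A=\Tr_{N-t}(|\psi\rangle\langle\psi|)$.

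\textbf{Measurement on $B$.} Let the Kraus operators be $\{\mathbb 1\otimes B_k\}$ with outcome states $|\psi_k\rangle$ and probabilities $p_k$. Their reductions satisfy $\sum_k p_k\,\rho_A^{(k)}=\rho_A$, because $\sum_k B_k^\dagger B_k=\mathbb 1$. Concavity of $f$ then gives directly
\[
f(\rho_A)\ge \sum_k p_k f\bigl(\rho_A^{(k)}\bigr).
\]

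\textbf{Measurement on $A$.} Let the Kraus operators be $\{A_k\otimes\mathbb 1\}$. Now $\rho_A^{(k)}=A_k\rho_A A_k^\dagger/p_k$, and I would pass to side $B$ using the fact that the two reductions of a pure state share the same nonzero spectrum. Concretely, $\rho_B^{(k)}$ is isospectral to $\rho_A^{(k)}$, while $\sum_k p_k\rho_B^{(k)}=\rho_B$ because the operation acts only on $A$. Unitary invariance of $f$ lets me regard $f$ as a symmetric concave function of the spectrum, padded with zeros. This holds provided $f$ can be consistently extended to the $B$ side, or equivalently one works with the spectral function $\tilde f(\lambda)$. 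Concavity applied on $B$ then gives $f(\rho_A)=\tilde f(\rho_B)\ge\sum_k p_k\tilde f\bigl(\rho_B^{(k)}\bigr)=\sum_k p_k f\bigl(\rho_A^{(k)}\bigr)$.

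\textbf{Main obstacle.} The delicate point is exactly this transfer. The dimension of side $B$ exceeds $t+1$, and the symmetric-subspace restriction means $\rho_B$ lives in a $(N-t+1)$-dimensional space. So I would argue at the level of spectra: a unitarily invariant concave function on density matrices equals a symmetric concave function of the eigenvalues (Davis/Lewis). I would then invoke majorization, namely that $\lambda(\rho_B)$ majorizes nothing worse than the mixture. Equivalently, I would use Nielsen-type reasoning: the spectrum of $\rho_A$ is majorized by the average of the outcome spectra in the appropriate sense. Schur-concavity then finishes the step.

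\textbf{Convex roof.} Let $\rho=\sum_i q_i|\psi_i\rangle\langle\psi_i|$ be an optimal decomposition. Applying the protocol to each $|\psi_i\rangle$ yields pure branches $|\psi_{ik}\rangle$. For each outcome $k$, the normalized output $\rho_k$ is decomposed by $\{q_ip_{k|i}/p_k,|\psi_{ik}\rangle\}$. Hence
\[
\sum_k p_k A_t^Q(\rho_k)\le\sum_{i,k}q_ip_{k|i}A_t^Q(|\psi_{ik}\rangle)\le\sum_i q_iA_t^Q(|\psi_i\rangle)=A_t^Q(\rho).
\]
Non-pure Kraus operators are handled by fine-graining, together with convexity of the roof.
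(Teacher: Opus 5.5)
Your proposal is correct and is essentially Vidal's original argument, which the paper only invokes by citation to \cite{2000Vidal} without reproducing it. The steps match: concavity for operations on side $B$, transfer to the complementary reduction for operations on side $A$, and the fine-grained ensemble argument (with convexity for coarse-grained outcomes) for the convex roof.

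The one loosely worded point is the transfer step, and it can be stated precisely in either of two ways. First, Ky Fan's inequality gives $\lambda(\rho_B)\prec\sum_k p_k\lambda^{\downarrow}(\rho_B^{(k)})$, with all vectors truncated to $t+1$ entries since Schmidt ranks never exceed $t+1$. Schur-concavity of $\tilde f$ then gives one inequality, and you still need a final Jensen step from concavity of $\tilde f$. Alternatively, every $\rho_B^{(k)}$ is supported in $\mathrm{supp}\,\rho_B$, whose dimension is at most $t+1$, so $f$ pulled back there through an isometry is already concave and unitarily invariant.
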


\subsection{Proof of Prop. \ref{prop1:purity_total}} 
\label{Subsec.Prop1}
\begin{proof}
First, $\operatorname{Tr}(\rho_t^2)=1/(t+1)$ if and only if $\rho_t$ is maximally mixed,
which gives \eqref{ax:T1}. 

Let us now prove \eqref{ax:T2}. If $\rho$ is a pure coherent state, then every reduced state
$\rho_t$ is pure, so $\operatorname{Tr}(\rho_t^2)=1$ and hence $\A^{T,\mathcal{P}}_t(\rho)=0$.
Conversely, if $\A^{T,\mathcal{P}}_t(\rho)=0$, then $\operatorname{Tr}(\rho_t^2)=1$, so $\rho_t$ is pure.
Since $\rho_t$ is obtained by tracing out part of the symmetric $N$-qubit state $\rho_S$,
the purity of $\rho_t$ implies that $\rho_S$ is product across the bipartition
$t\,|\,N{-}t$. Because $\rho_S$ is supported on the fully symmetric subspace, separability
across this bipartition is equivalent to full separability~\cite{Ichikawa2008ExchangeSymmetry}; hence $\rho_S$ is a convex
mixture of symmetric product states. The fact that $\rho_t$ is pure then forces this
mixture to contain only one product component, so
$\rho_S=(|\mathbf n\rangle\langle \mathbf n|)^{\otimes N}$, i.e. $\rho$ is a pure
spin-coherent state. This proves \eqref{ax:T2}.

The $\mathrm{SU}(2)$ invariance \eqref{ax:T3} is inherited from the invariance of
$\Tr(\rho_t^2)$ under global rotations, and normalization to $[0,1]$ is ensured by the
prefactor $(t+1)/t$.

We now prove monotonicity under $\mathrm{SU}(2)$-covariant noise
\eqref{ax:T4}. Let $\Phi$
be an $\mathrm{SU}(2)$-covariant channel on $\mathcal H_j$. By covariance, the
multipole coefficients transform as $\rho_{LM}\mapsto f_L\rho_{LM}$ with
$|f_L|\le1$. For compactness, write
\begin{equation}\label{CNtL}
c_{N,t,L}
=
\frac{t!}{N!}
\sqrt{
\frac{(N-L)!(N+L+1)!}{(t-L)!(t+L+1)!}
}\, ,
\end{equation}
so that Eq.~\eqref{rhotTLM} reads
\begin{equation}
\rho_t=
\frac{\mathbb 1_{t+1}}{t+1}
+
\sum_{L=1}^{t}\sum_{M=-L}^{L}
c_{N,t,L}\rho_{LM}T_{LM}^{(t)} .
\end{equation}
Using the Hilbert--Schmidt normalization of the tensors
$T_{LM}^{(t)}$, we obtain
\begin{equation}
\begin{aligned}
\Tr ([(\Phi(\rho))_t]^2)
&=
\frac{1}{t+1}
+
\sum_{L=1}^t c_{N,t,L}^2 f_L^2
\sum_{M=-L}^L |\rho_{LM}|^2
\\
&\le
\frac{1}{t+1}
+
\sum_{L=1}^t c_{N,t,L}^2
\sum_{M=-L}^L |\rho_{LM}|^2
\\
&=
\Tr(\rho_t^2).
\end{aligned}
\end{equation}
Hence $\A^{T,\mathcal{P}}_t(\Phi(\rho))\ge \A^{T,\mathcal{P}}_t(\rho)$, proving
\eqref{ax:T4}.
\end{proof}

\subsection{Proof of Prop. \ref{prop2:HS_total}} 
\label{Subsec.PropHS_total}

\begin{proof}
Axioms \eqref{ax:T1}--\eqref{ax:T3} follow from the general
distance-based construction. We now prove \eqref{ax:T4} as in Prop.~\ref{prop1:purity_total}. Using the notation \eqref{CNtL} together with the expansion of $\rho_t$ in Eq.~\eqref{rhotTLM}, we have
\begin{equation}
    \begin{aligned}
        \|(\Phi(\rho))_t-\rho_0^{(t)}\|_2^2
        &=
        \sum_{L=1}^{t}\sum_{M=-L}^{L}
        c_{N,t,L}^2 |f_L|^2\,|\rho_{LM}|^2
        \\
        &\le
        \sum_{L=1}^{t}\sum_{M=-L}^{L}
        c_{N,t,L}^2|\rho_{LM}|^2
        \\
        &=
        \|\rho_t-\rho_0^{(t)}\|_2^2,
    \end{aligned}
\end{equation}
because $|f_L|\le 1$ for all $L$. Hence the Hilbert--Schmidt measure
\begin{equation}
\mathcal{A}_t^{T,\mathrm{HS}}(\rho)=1-\frac{\|\rho_t-\rho_0^{(t)}\|_2}{K_t^{\mathrm{HS}}}
\label{eq:HS-app}
\end{equation}
also satisfies axiom \eqref{ax:T4}.
\end{proof}

\subsection{Proof of Prop. \ref{prop:FidelityNegativityRelation}}
\label{Subsec.Prop3}
\begin{proof}
From the Schmidt decomposition, it can be seen that the reduced state of $\ket{\psi}$ on $A$ is given by
\begin{equation}
\rho_A=\sum_{i=1}^{d_A}\alpha_i^2\ket{\phi_A^{(i)}}\bra{\phi_A^{(i)}},
\end{equation}
So, the eigenvalues $\lambda_i$ of $\rho_A$ are related to the Schmidt coefficients $\alpha_i$, that is, $\lambda_i=\alpha_i^2$.
For pure states $\rho=|\psi\rangle\langle\psi|$, the fidelity between $\rho_A$ and the MMS verifies
\begin{equation}
    F \left(\rho_A,\rho_0\right) = \frac{1}{d_A}\left( \Tr \sqrt{\rho_A} \right)^2 =  \frac{1}{d_A}
\left(\sum_{i=1}^{d_A}\sqrt{\lambda_{i}}\right)^{2}.
\label{eq:SumSquaredSchmidtCoefficients_FidelityReduced}
\end{equation}
The eigenvalues of the reduced state $\rho_A$ verify
\begin{equation}
    \left(\sum_{i=1}^{d_A}\sqrt{\lambda_{i}}\right)^{2}=\underbrace{\sum_{i=1}^{d_A}\lambda_{i}}_{=1}+2\sum_{i>j=1}^{d_A}\sqrt{\lambda_{i}\lambda_{j}}
\end{equation}
and from \eqref{eq:Negativity_SchmidtDecomposition}, we see that~\cite{2002Vidal}
\begin{equation}
    \left(\sum_{i=1}^{d_A}\sqrt{\lambda_{i}}\right)^{2} = 1+ 2\Nega_A (\ket{\psi}).
    \label{eq:SumSquaredSchmidtCoefficients_Negativity}
\end{equation}
Injecting \eqref{eq:SumSquaredSchmidtCoefficients_Negativity} in \eqref{eq:SumSquaredSchmidtCoefficients_FidelityReduced}, we obtain our result \eqref{eq:FidelityReduced_Negativity}.
\end{proof}

\subsection{Proof of Prop. \ref{prop:purity_quantum_Q1Q6}}
\label{Subsec.Prop4}

\begin{proof}
Pure-state consistency \eqref{ax:Q2} and $\mathrm{SU}(2)$ invariance \eqref{ax:Q3}
are immediate from the definition and the rotational covariance of the reduced state.
Convexity \eqref{ax:Q4} follows directly from the convex-roof construction.
Axiom \eqref{ax:Q5} follows from Theorem \ref{VidalLOCCConstruction} since the function $f(\rho)=1-\Tr(\rho^2)$ is concave and invariant under unitaries due to the properties of the purity. Axiom \eqref{ax:Q1} follows because fully separable symmetric
states admit decompositions into coherent states, and $\A_t^{\mathcal{P}}$ vanishes on pure
coherent states.

To prove \eqref{ax:Q6}, it is enough to note that the total purity-based measure
$\A_t^{T,\mathcal{P}}(\rho)=\frac{t+1}{t}\big(1-\Tr(\rho_t^2)\big)$ is concave (equivalently, $-\Tr(\rho_t^2)$
is concave because $\Tr(\rho_t^2)$ is convex and $\rho\mapsto\rho_t$ is linear).
Therefore the ordering follows directly from the general
convex-roof ordering Proposition~\ref{prop:convex_roof_ordering_general}.
\end{proof}

\subsection{Proof of Prop. \ref{prop:AtQRRatios}}
\label{Subsec.Prop5}

\begin{proof}
For pure states $|\psi\rangle$, the reduced density matrices across the
bipartitions $t\,|\,N{-}t$ and $(N{-}t)\,|\,t$ have identical nonzero spectra,
which implies $\Tr(\rho_t^2)=\Tr(\rho_{N-t}^2)$ and therefore
\begin{equation}
  \A^{\mathcal{P}}_t(|\psi\rangle)
  = c_t\,\A^{\mathcal{P}}_{N-t}(|\psi\rangle),
  \quad
  c_t=\frac{(t+1)(N-t)}{t(N+1-t)} .
\end{equation}
Let $\rho=\sum_i p_i|\psi_i\rangle\langle\psi_i|$ be an arbitrary pure-state
decomposition.
Using the above relation term by term yields
\begin{equation}
  \sum_i p_i\,\A^{\mathcal{P}}_t(|\psi_i\rangle)
  = c_t\sum_i p_i\,\A^{\mathcal{P}}_{N-t}(|\psi_i\rangle).
\end{equation}
Since this equality holds for every decomposition, it also holds for the
minimizing decompositions defining the convex-roof extensions, giving
\begin{equation}
  \A^{Q,\mathcal{P}}_t(\rho)=c_t\,\A^{Q,\mathcal{P}}_{N-t}(\rho),
\end{equation}
which proves the claim.
\end{proof}

\subsection{Proof of Prop. \ref{prop:distance_quantum_Q1Q6}}
\label{Subsec.Prop6}

\begin{proof}
Axioms~\eqref{ax:Q2} and~\eqref{ax:Q4} follow directly from the convex-roof construction, while axiom~\eqref{ax:Q3} follows from the $\mathrm{SU}(2)$ invariance of the pure-state functional. Axiom~\eqref{ax:Q5} follows by assumption, since $\A_t^d(|\psi\rangle)$ is an entanglement monotone across the bipartition $t\,|\,N{-}t$ and convex roofs of pure-state entanglement monotones are LOCC monotones. Axiom~\eqref{ax:Q1} follows because the pure-state functional vanishes on coherent states, and fully separable symmetric states admit decompositions into coherent states. Finally, if $\A_t^{T,d}$ is concave, axiom~\eqref{ax:Q6} follows from Proposition~\ref{prop:convex_roof_ordering_general}.
\end{proof}

\subsection{Proof of Prop. \ref{prop:Bures_quantum_Q1Q6}}
\label{Subsec.Prop7}

\begin{proof}
We first recall the definition of the total fidelity-based measure:
\begin{equation}
\A^{T,F}_t(\rho)
=
\frac{(\Tr(\sqrt{\rho_t}))^2-1}{t}
=
\frac{(t+1)F(\rho_t,\rho_0^{(t)})-1}{t}.
\end{equation}
As discussed in Sec.~\ref{subsubsec:fidelity_based_total}, this quantity satisfies axioms \eqref{ax:T1}--\eqref{ax:T3}. Indeed, \eqref{ax:T1} follows from
$F(\rho_t,\rho_0^{(t)})=1$ iff $\rho_t=\rho_0^{(t)}$, while \eqref{ax:T2} follows from the fact that $F(\rho_t,\rho_0^{(t)})=1/(t+1)$ iff $\rho_t$ is pure, which in the symmetric sector forces the global state to be a pure spin-coherent state. Axiom \eqref{ax:T3} follows from unitary invariance of fidelity.

We now prove the quantum axioms for
\begin{equation}
\A^{Q,F}_t(\rho)=\frac{2}{t}\Nega_t^{\mathrm{CR}}(\rho).
\end{equation}
Axiom \eqref{ax:Q1} holds because every fully separable symmetric state admits a
decomposition into coherent states, and the negativity vanishes on each coherent
pure state. Axiom \eqref{ax:Q3} follows because global rotations act by local
unitaries with respect to the bipartition $t\,|\,N{-}t$, and therefore do not
change the negativity. Convexity \eqref{ax:Q4} follows directly from the
convex-roof construction, and monotonicity under LOCC operations \eqref{ax:Q5}
follows from the corresponding property of the convex-roof negativity~\cite{Lee2003}.

It remains to verify pure-state consistency \eqref{ax:Q2} and the ordering axiom
\eqref{ax:Q6}. For pure states, Proposition~\ref{prop:FidelityNegativityRelation}
gives
\begin{equation}
F(\rho_t,\rho_0^{(t)})
=
\frac{1+2\Nega_t(|\psi\rangle)}{t+1}.
\end{equation}
Therefore,
\begin{equation}
\begin{aligned}
\A^{T,F}_t(|\psi\rangle) & =
\frac{(t+1)F(\rho_t,\rho_0^{(t)})-1}{t}\\ 
& =
\frac{2}{t}\,\Nega_t(|\psi\rangle)=
\A^{Q,F}_t(|\psi\rangle),
\end{aligned}
\end{equation}
which proves \eqref{ax:Q2}.

For \eqref{ax:Q6}, let
$\rho=\sum_i p_i|\psi_i\rangle\langle\psi_i|$ be an arbitrary pure-state
decomposition and define
$\sigma_i=\Tr_{N-t}(|\psi_i\rangle\langle\psi_i|)$, so that
$\rho_t=\sum_i p_i\sigma_i$. We use the superadditivity of the Schatten
$1/2$-quasinorm~\cite{doi:10.1142/S0129167X1100715X},
\begin{equation}
  \|A+B\|_{1/2}\ge \|A\|_{1/2}+\|B\|_{1/2},
  \qquad
  \|X\|_{1/2}=(\Tr(\sqrt{X}))^2 ,
\end{equation}
for positive semidefinite operators. By positive homogeneity,
\begin{equation}
\begin{aligned}
  (\Tr(\sqrt{\rho_t}))^2
  =
  \left\|\sum_i p_i\sigma_i\right\|_{1/2}
  &\ge
  \sum_i \|p_i\sigma_i\|_{1/2}  \\
  &=
  \sum_i p_i(\Tr(\sqrt{\sigma_i}))^2 .
\end{aligned}
\end{equation}
For each pure state $|\psi_i\rangle$,
\begin{equation}
  (\Tr(\sqrt{\sigma_i}))^2
  =
  1+2\Nega_t(|\psi_i\rangle).
\end{equation}
Hence
\begin{equation}
  (\Tr(\sqrt{\rho_t}))^2
  \ge
  1+2\sum_i p_i\Nega_t(|\psi_i\rangle).
\end{equation}
Since this is true for every pure-state decomposition, it is true in particular
for an optimal decomposition of the convex roof. Thus
\begin{equation}
  (\Tr(\sqrt{\rho_t}))^2
  \ge
  1+2\Nega_t^{\mathrm{CR}}(\rho),
\end{equation}
and consequently
\begin{equation}
  \A^{T,F}_t(\rho)
  =
  \frac{(\Tr(\sqrt{\rho_t}))^2-1}{t}
  \ge
  \frac{2}{t}\Nega_t^{\mathrm{CR}}(\rho)
  =
  \A^{Q,F}_t(\rho).
\end{equation}
This proves \eqref{ax:Q6}.
\end{proof}

\subsection{Proof of Prop.~\ref{prop:cm_total}}
\label{Subsec.Prop8}

\begin{proof}
We first show that $\mathcal A_t^{T,\mathrm{cm}}(\rho)$ is indeed a function taking values in $[0,1]$, which is equivalent to proving that $0 \leq C_{\le t}(\rho) \leq  C_{\le t}(\rho_{\mathrm{coh}})$. Since it is a sum of positive quantities, it is immediate to verify that $C_{\le t}(\rho) \geq 0$. The upper bound is proved in the following lemma:
\begin{lemma}
\label{lem:cm_coherent_bound}
For every spin-$j$ density matrix $\rho$ and every $t=1,\dots,2j-1$,
\begin{equation}
  C_{\le t}(\rho)\le C_{\le t}(\rho_{\rm coh}),
\end{equation}
where $\rho_{\rm coh}$ is any spin-coherent state.
\end{lemma}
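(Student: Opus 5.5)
The plan is to reduce the claim to a statement about the reduced state $\rho_{2j-1}$, or more precisely $\rho_{t+1}$, whose purity encodes $C_{\le t}$.

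\textbf{Step 1: convexity reduces to pure states.} $C_{\le t}(\rho)$ is a sum of squared moduli of linear functionals of $\rho$, so it is convex in $\rho$. Its maximum over $\mathcal D(\mathcal H_j)$ is therefore attained on pure states. It thus suffices to show that, among pure $|\psi\rangle$, the cumulative weight $\sum_{L=1}^t\sum_M|\psi_{LM}|^2$ is maximal on coherent states. By $\mathrm{SU}(2)$ invariance we may then fix $\rho_{\rm coh}=|j,j\rangle\langle j,j|$.

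\textbf{Step 2: express $C_{\le t}$ through reduced purities.} As noted in the paper, the multipole weights $w_L=\sum_M|\rho_{LM}|^2$ are linear combinations of the reduced purities $\Tr(\rho_k^2)$. Concretely, Eq.~\eqref{rhotTLM} gives
\begin{equation}
\Tr(\rho_k^2)=\frac{1}{k+1}+\sum_{L=1}^k c_{N,k,L}^2\,w_L .
\end{equation}
A direct attempt would be to write $C_{\le t}=\sum_{k\le t}\beta_k\Tr(\rho_k^2)+\text{const}$ and hope that all $\beta_k\ge0$. Then each $\Tr(\rho_k^2)\le1$, with equality for coherent states, would give the bound immediately. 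However, inverting the triangular system typically produces alternating signs, so I do not expect this to work directly.

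\textbf{Step 3: a single-purity bound.} Instead I would use one purity with positive weights. Since $c_{N,k,L}$ is decreasing in $L$ for fixed $k$, with $c_{N,k,L}^2\ge c_{N,k,t}^2$ for $L\le t\le k$, taking $k=t$ gives
\begin{equation}
C_{\le t}\le \frac{\Tr(\rho_t^2)-\frac1{t+1}}{c_{N,t,t}^2}\le \frac{t}{(t+1)c_{N,t,t}^2}.
\end{equation}
This bound is not tight, since coherent states do not saturate it. A sharper route is therefore needed.

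\textbf{Step 4: the Husimi/P-function representation (main approach).} Write $w_L$ for a pure state as $\propto\int\!\!\int |\langle \mathbf n|\psi\rangle|^2|\langle \mathbf m|\psi\rangle|^2P_L(\mathbf n\cdot\mathbf m)\,d\mathbf n\,d\mathbf m$ using the addition theorem. The kernel $K_t=\sum_{L\le t}(2L+1)P_L/\gamma_L^2$ is then a positive-definite function on the sphere. The claim becomes that the quadratic form of $K_t$ on the Husimi function is maximized by the most concentrated Husimi function, namely that of a coherent state. I would attempt this via Lieb's theorem on the Wehrl-type (Lieb–Solovej) majorization: the coherent-state Husimi function majorizes all others. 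This would need to be combined with rearrangement (Riesz/Baernstein–Taylor) on the sphere for the kernel $K_t(\mathbf n\cdot\mathbf m)$, provided $K_t$ is monotone increasing in $\mathbf n\cdot\mathbf m$.

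The main obstacle is this last step: verifying monotonicity of the kernel, or handling its lack of monotonicity. The role of the restriction $t\le 2j-1$ also has to be clarified; at $t=2j$ all pure states tie, since $C_{\le 2j}=\Tr\rho^2-1/(2j+1)$. If rearrangement fails, the fallback is to maximize $C_{\le t}$ over pure states via Lagrange stationarity. At $|j,j\rangle$ only the $M=0$ components are nonzero, and one would check that this is the global maximum by an eigenvalue bound on the associated quartic form.
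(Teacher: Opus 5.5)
\textbf{Gap: the pure-state extremality of coherent states is never established.}

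Your Step~1 is exactly the paper's reduction from mixed to pure states. $C_{\le t}$ is the squared Hilbert--Schmidt norm of a linear projection of $\rho$, so it is convex and its maximum is attained on pure states.

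The lemma's real content is the pure-state claim that spin-coherent states maximize $\sum_{L=1}^{t} w_L$ for $2\le t\le 2j-1$. (For $t=1$ it is immediate from $w_1=|\langle\mathbf J\rangle|^2/\Tr J_z^2$ and $|\langle\mathbf J\rangle|\le j$.) Your proposal never proves this claim. The paper does not prove it either; it takes it from the literature on extremal Majorana constellations~\cite{BjorkKlimovDeLaHozLeuchsSanchezSoto2015}.

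You discard Steps~2--3 yourself. Step~4 fails as formulated, for three reasons.
\begin{itemize}
\item The Husimi factors satisfy $\gamma_L^2\propto (2j)!^2/[(2j-L)!(2j+L+1)!]$, which decays quickly in $L$. So $K_t$ is dominated by its $P_t$ term and is not monotone in $x=\mathbf n\cdot\mathbf m$ once $t\ge 2$. Already $K_2'(x)\propto 3/\gamma_1^2+15x/\gamma_2^2$ is negative for $x<-(2j-1)/[5(2j+3)]$. Hence the Baernstein--Taylor rearrangement inequality, which needs a non-decreasing kernel, is unavailable.
\item Lieb--Solovej controls single integrals $\int f(Q)$ with convex $f$, not double integrals against a kernel. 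Even a monotone kernel would need a further argument to pass from the rearranged $Q^*$ to $Q_{\rm coh}$.
\item The Lagrange fallback only shows that coherent states are stationary, which is automatic by symmetry. It says nothing about global maximality.
\end{itemize}

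\textbf{How to close it.} Either cite the extremality result, as the paper does, or make your ``eigenvalue bound on the quartic form'' concrete. One way to do the latter:
\begin{itemize}
\item Write $\sum_{L=0}^{t}w_L(\psi)=\langle\psi\otimes\psi|W_t|\psi\otimes\psi\rangle$, with $W_t=\sum_{L\le t}\sum_M T_{LM}^\dagger\otimes T_{LM}$.
\item $W_t$ is invariant under $U\otimes U$. Since $\mathcal H_j\otimes\mathcal H_j=\bigoplus_{J=0}^{2j}\mathcal H_J$ is multiplicity-free, $W_t=\sum_J a^{(t)}_J P_J$, where $a^{(t)}_J$ is a partial sum of $(2L+1)$-weighted $6j$ symbols.
\item $|\psi\rangle|\psi\rangle$ lies in the symmetric part, i.e.\ the sectors with $2j-J$ even, while $|j,j\rangle|j,j\rangle$ lies entirely in $J=2j$.
\item The pure-state bound therefore follows once you show $a^{(t)}_{2j}\ge a^{(t)}_J$ for every $J$ with $2j-J$ even.
\end{itemize}
That $6j$ inequality is precisely the nontrivial step your proposal leaves open.
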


\begin{proof}
For pure states, the bound follows from the extremality of SU(2) coherent states for the cumulative multipole distribution: coherent states maximize $C_{\le t}$ at every order $t$~\cite{BjorkKlimovDeLaHozLeuchsSanchezSoto2015}.
For mixed states, write $\rho=\sum_i p_i|\psi_i\rangle\langle\psi_i|$. Since $C_{\le t}$ is the squared Hilbert--Schmidt norm of the projection of $\rho$ onto the subspace spanned by the tensor operators $T_{LM}$ with $1\le L\le t$, it is convex. Hence
\begin{equation}
C_{\le t}(\rho)
\le
\sum_i p_i C_{\le t}(|\psi_i\rangle\langle\psi_i|)
\le
C_{\le t}(\rho_{\rm coh}).
\end{equation}
\end{proof}

We now verify the defining properties of $\mathcal A_t^{T,\mathrm{cm}}(\rho)$. Axiom~\eqref{ax:T1} follows directly from the affine normalization in Eq.~\eqref{totalACcm}: by definition, $C_{\le t}(\rho)=0$ if and only if all multipoles up to rank $t$ vanish, i.e.\ if and only if $\rho$ is $t$-AC. For axiom~\eqref{ax:T2}, $\mathcal A_t^{T,\mathrm{cm}}(\rho)=0$ if and only if $\rho$ is a pure coherent state, using the coherent-state extremality in Lemma~\ref{lem:cm_coherent_bound} together with the uniqueness of the pure-state maximizers. Axiom~\eqref{ax:T3} holds because under a global rotation the coefficients $\rho_{LM}$ transform unitarily within each fixed $L$, so each $a_L(\rho)\equiv\sum_{M=-L}^L |\rho_{LM}|^2$ is invariant, hence so is $C_{\le t}(\rho)$.

It remains to verify axiom~\eqref{ax:T4}. Let $\Phi$ be an $\mathrm{SU}(2)$-covariant channel on $\mathcal H_j$. By the standard structure theorem for covariant channels on a fixed spin irrep (see Sec.~\ref{Sec:TAC} around Eq.~\eqref{fL}), $\Phi$ acts diagonally on irreducible tensor operators:
\begin{equation}
  \Phi(T_{LM})=f_L\,T_{LM},
\end{equation}
with $f_L$ independent of $M$ and satisfying $|f_L|\le 1$.
Therefore, if
\begin{equation}
\rho=\frac{\mathbb 1}{2j+1}+\sum_{L=1}^{2j}\sum_{M=-L}^{L}\rho_{LM}T_{LM},
\end{equation}
then for $\rho'=\Phi(\rho)$ one has $\rho'_{LM}=f_L\rho_{LM}$, and thus
\begin{equation}
  \sum_{M=-L}^{L}|\rho'_{LM}|^2
  =|f_L|^2\sum_{M=-L}^{L}|\rho_{LM}|^2
  \le \sum_{M=-L}^{L}|\rho_{LM}|^2.
\end{equation}
Summing over $L=1,\dots,t$ gives
\begin{equation}
  C_{\le t}(\Phi(\rho))\le C_{\le t}(\rho).
\end{equation}
Since $\mathcal A_t^{T,\mathrm{cm}}$ is an affine decreasing function of $C_{\le t}$, it follows that
\begin{equation}
\mathcal A_t^{T,\mathrm{cm}}(\Phi(\rho))\ge \mathcal A_t^{T,\mathrm{cm}}(\rho),
\end{equation}
which is axiom~\eqref{ax:T4}.
\end{proof}

\subsection{Proof of Prop.~\ref{prop:cm_t1_equiv_purity}}
\label{Subsec.Prop10}

\begin{proof}
For $t=1$, $C_{\le1}(\rho)$ is affine in the one-body reduced purity $r_1=\Tr(\rho_1^2)$ (equivalently, in $|\langle\mathbf J\rangle|^2$) by the standard purity--multipole relations for symmetric states.
Therefore, after normalization to $[0,1]$ with coherent states mapped to $0$ and
$1$-AC pure states to $1$, the pure-state functionals
$\mathcal A^{T,\mathrm{cm}}_1(|\psi\rangle)$ and $\A_1^{\mathcal P}(|\psi\rangle)$ coincide.

Taking convex roofs of the same pure-state functional gives
\begin{equation}
\mathcal A^{Q,\mathrm{cm}}_1(\rho)=\A^{Q,\mathcal{P}}_1(\rho).
\end{equation}
Since $\A^{Q,\mathcal{P}}_1$ satisfies axioms \eqref{ax:Q1}--\eqref{ax:Q6} by
Proposition~\ref{prop:purity_quantum_Q1Q6}, the same is true for
$\mathcal A^{Q,\mathrm{cm}}_1$, in particular \eqref{ax:Q5}.
\end{proof}


\bibliographystyle{apsrev4-2}
\bibliography{refs}

@article{Zhu2025,
  author = {Zhu, Xuanran and Zhang, Chao and An, Zheng and Zeng, Bei},
  title = {Unified framework for calculating convex roof resource measures},
  journal = {npj Quantum Information},
  volume = {11},
  number = {1},
  pages = {56},
  year = {2025},
  month = {April},
  day = {4},
  doi = {10.1038/s41534-025-01012-1},
  url = {https://doi.org/10.1038/s41534-025-01012-1},
  issn = {2056-6387}
}

@article{Lee2003,
  title = {Convex-roof extended negativity as an entanglement measure for bipartite quantum systems},
  author = {Lee, Soojoon and Chi, Dong Pyo and Oh, Sung Dahm and Kim, Jaewan},
  journal = {Phys. Rev. A},
  volume = {68},
  issue = {6},
  pages = {062304},
  numpages = {5},
  year = {2003},
  month = {Dec},
  publisher = {American Physical Society},
  doi = {10.1103/PhysRevA.68.062304},
  url = {https://link.aps.org/doi/10.1103/PhysRevA.68.062304}
}

@article{Nielsen1999,
  title = {Conditions for a Class of Entanglement Transformations},
  author = {Nielsen, M. A.},
  journal = {Phys. Rev. Lett.},
  volume = {83},
  issue = {2},
  pages = {436--439},
  numpages = {0},
  year = {1999},
  month = {Jul},
  publisher = {American Physical Society},
  doi = {10.1103/PhysRevLett.83.436},
  url = {https://link.aps.org/doi/10.1103/PhysRevLett.83.436}
}

@article{Baguette2014,
  title = {Multiqubit symmetric states with maximally mixed one-qubit reductions},
  volume = {90},
  issue = {3},
  pages = {032314},
  numpages = {10},
  year = {2014},
  ISSN = {1094-1622},
  DOI = {10.1103/physreva.90.032314},
  journal = {Phys. Rev. A},
  publisher = {American Physical Society (APS)},
  author = {Baguette,  D. and Bastin,  T. and Martin,  J.},
}

@article{2017Chryssomalakos,
  title = {Optimal quantum rotosensors},
  author = {Chryssomalakos, C. and Hern\'andez-Coronado, H.},
  journal = {Phys. Rev. A},
  volume = {95},
  issue = {5},
  pages = {052125},
  numpages = {5},
  year = {2017},
  month = {May},
  publisher = {American Physical Society},
  doi = {10.1103/PhysRevA.95.052125},
  url = {https://link.aps.org/doi/10.1103/PhysRevA.95.052125}
}

@misc{Holevo2002,
  doi = {10.48550/ARXIV.QUANT-PH/0212025},
  url = {https://arxiv.org/abs/quant-ph/0212025},
  author        = {Holevo, A. S.},
  title         = {Remarks on the classical capacity of quantum channel},
  eprint        = {quant-ph/0212025},
  archivePrefix = {arXiv},
  year          = {2002}
}

@article{2025Goldberg,
  title = {Beyond the Quantum Cram\'er-Rao Bound},
  author = {Hervas, J. R. and Goldberg, A. Z. and Sanz, A. S. and Hradil, Z. and \ifmmode \check{R}\else \v{R}\fi{}eh\'a\ifmmode \check{c}\else \v{c}\fi{}ek, J. and S\'anchez-Soto, L. L.},
  journal = {Phys. Rev. Lett.},
  volume = {134},
  issue = {1},
  pages = {010804},
  numpages = {6},
  year = {2025},
  month = {Jan},
  publisher = {American Physical Society},
  doi = {10.1103/PhysRevLett.134.010804},
  url = {https://link.aps.org/doi/10.1103/PhysRevLett.134.010804}
}

@article{Ichikawa2008ExchangeSymmetry,
  title   = {Exchange symmetry and multipartite entanglement},
  author  = {Ichikawa, Tsubasa and Sasaki, Toshihiko and Tsutsui, Izumi and Yonezawa, Nobuhiro},
  journal = {Physical Review A},
  volume  = {78},
  pages   = {052105},
  year    = {2008},
  doi     = {10.1103/PhysRevA.78.052105}
}

@article{Rivas2013,
  title   = {{SU(2)}-invariant depolarization of quantum states of light},
  author  = {Rivas, {\'A}ngel and Luis, Alfredo},
  journal = {Phys. Rev. A},
  volume  = {88},
  number  = {5},
  pages   = {052120},
  year    = {2013},
  month   = nov,
  doi     = {10.1103/PhysRevA.88.052120},
  url     = {https://doi.org/10.1103/PhysRevA.88.052120}
}

@Inbook{Muller2009,
author="M{\"u}ller, C.A.",
editor="Buchleitner, Andreas
and Viviescas, Carlos
and Tiersch, Markus",
title="Diffusive Spin Transport",
bookTitle="Entanglement and Decoherence: Foundations and Modern Trends",
year="2009",
publisher="Springer Berlin Heidelberg",
address="Berlin, Heidelberg",
pages="277--314",
isbn="978-3-540-88169-8",
doi="10.1007/978-3-540-88169-8_6",
url="https://doi.org/10.1007/978-3-540-88169-8_6"
}

@article{1932Majorana,
  author    = {Ettore Majorana},
  title     = {Atomi orientati in campo magnetico variabile},
  journal   = {Il Nuovo Cimento (1924-1942)},
  year      = {1932},
  volume    = {9},
  number    = {2},
  pages     = {43--50},
  month     = {feb},
  doi       = {10.1007/BF02960953},
  issn      = {1827-6121},
  url       = {https://doi.org/10.1007/BF02960953},
}

@article{2020ESE,
  title = {Majorana representation for mixed states},
  author = {Serrano-Ens\'astiga, E. and Braun, D.},
  journal = {Phys. Rev. A},
  volume = {101},
  issue = {2},
  pages = {022332},
  numpages = {16},
  year = {2020},
  month = {Feb},
  publisher = {American Physical Society},
  doi = {10.1103/PhysRevA.101.022332},
  url = {https://link.aps.org/doi/10.1103/PhysRevA.101.022332}
}

@article{2024Ferretti,
author = {Hugo Ferretti and Y. Batuhan Yilmaz and Kent Bonsma-Fisher and Aaron Z. Goldberg and Noah Lupu-Gladstein and Arthur O. T. Pang and Lee A. Rozema and Aephraim M. Steinberg},
journal = {Optica Quantum},
number = {2},
pages = {91--102},
publisher = {Optica Publishing Group},
title = {Generating a 4-photon tetrahedron state: toward simultaneous super-sensitivity to non-commuting rotations},
volume = {2},
month = {Apr},
year = {2024},
url = {https://opg.optica.org/opticaq/abstract.cfm?URI=opticaq-2-2-91},
doi = {10.1364/OPTICAQ.510125},
}

@Article{2026Denis,
	title={{Coherent generation and protection of anticoherent spin states}},
	author={Jérôme Denis and Colin Read and John Martin},
	journal={SciPost Phys. Core},
	volume={9},
	pages={001},
	year={2026},
	publisher={SciPost},
	doi={10.21468/SciPostPhysCore.9.1.001},
	url={https://scipost.org/10.21468/SciPostPhysCore.9.1.001},
}

@article{2017Bouchard,
author = {F. Bouchard and P. de la Hoz and G. Bj\"{o}rk and R. W. Boyd and M. Grassl and Z. Hradil and E. Karimi and A. B. Klimov and G. Leuchs and J. \v{R}eh\'{a}\v{c}ek and L. L. S\'{a}nchez-Soto},
journal = {Optica},
number = {11},
pages = {1429--1432},
publisher = {Optica Publishing Group},
title = {Quantum metrology at the limit with extremal Majorana constellations},
volume = {4},
month = {Nov},
year = {2017},
url = {https://opg.optica.org/optica/abstract.cfm?URI=optica-4-11-1429},
doi = {10.1364/OPTICA.4.001429},
}

@misc{2026Bringewatt,
      title={Butterfly Echo Protocol for Axis-Agnostic Heisenberg-Limited Metrology}, 
      author={Jacob Bringewatt and Leon Zaporski and Matthew Radzihovsky and Jasmine Albert and Alexey V. Gorshkov and Vladan Vuletic and Gregory Bentsen},
      year={2026},
      eprint={2602.23332},
      archivePrefix={arXiv},
      primaryClass={quant-ph},
      url={https://arxiv.org/abs/2602.23332}, 
}

@book{Var.Mos.Khe:88,
author = {Varshalovich, D A and Moskalev, A N and Khersonskii, V K},
title = {Quantum Theory of Angular Momentum},
publisher = {World Scientific},
year = {1988},
doi = {10.1142/0270},
address = {},
edition   = {},
URL = {https://www.worldscientific.com/doi/abs/10.1142/0270}
}

@article{Goldberg2022,
  title = {From polarization multipoles to higher-order coherences},
  volume = {47},
  ISSN = {1539-4794},
  url = {http://dx.doi.org/10.1364/OL.443053},
  DOI = {10.1364/ol.443053},
  number = {3},
  journal = {Optics Letters},
  publisher = {Optica Publishing Group},
  author = {Goldberg,  Aaron Z. and Klimov,  Andrei B. and deGuise,  Hubert and Leuchs,  Gerd and Agarwal,  Girish S. and Sánchez-Soto,  Luis L.},
  year = {2022},
  month = jan,
  pages = {477}
}

@article{Goldberg2021,
  title = {Quantum concepts in optical polarization},
  volume = {13},
  ISSN = {1943-8206},
  url = {http://dx.doi.org/10.1364/AOP.404175},
  DOI = {10.1364/aop.404175},
  number = {1},
  journal = {Advances in Optics and Photonics},
  publisher = {Optica Publishing Group},
  author = {Goldberg,  Aaron Z. and de la Hoz,  Pablo and Bj\"{o}rk,  Gunnar and Klimov,  Andrei B. and Grassl,  Markus and Leuchs,  Gerd and Sánchez-Soto,  Luis L.},
  year = {2021},
  month = mar,
  pages = {1}
}

@article{Zim:06,
  author	={Zimba, J.},
  title		={Anticoherent Spin States via the {M}ajorana Representation},
  journal	={Electronic Journal of Theoretical Physics},
  volume	={3},
  number	={10},
  pages		={143--156},
  year		={2006}
}

@article{Goldberg2024,
  title = {Robust quantum metrology with random Majorana constellations},
  volume = {10},
  ISSN = {2058-9565},
  url = {http://dx.doi.org/10.1088/2058-9565/ad9ac7},
  DOI = {10.1088/2058-9565/ad9ac7},
  number = {1},
  journal = {Quantum Science and Technology},
  publisher = {IOP Publishing},
  author = {Goldberg,  Aaron Z and Hervas,  Jose R and Sanz,  Angel S and Klimov,  Andrei B and Řeháček,  Jaroslav and Hradil,  Zdeněk and Hiekkam\"{a}ki,  Markus and Eriksson,  Matias and Fickler,  Robert and Leuchs,  Gerd and Sánchez-Soto,  Luis L},
  year = {2024},
  month = dec,
  pages = {015053}
}

@article{Gross_2021,
  title = {Designing Codes around Interactions: The Case of a Spin},
  author = {Gross, Jonathan A.},
  journal = {Phys. Rev. Lett.},
  volume = {127},
  issue = {1},
  pages = {010504},
  numpages = {6},
  year = {2021},
  month = {Jul},
  publisher = {American Physical Society},
  doi = {10.1103/PhysRevLett.127.010504},
  url = {https://link.aps.org/doi/10.1103/PhysRevLett.127.010504}
}

@article{lim_2023,
  title = {Fault-tolerant qubit encoding using a spin-7/2 qudit},
  author = {Lim, Sumin and Liu, Junjie and Ardavan, Arzhang},
  journal = {Phys. Rev. A},
  volume = {108},
  issue = {6},
  pages = {062403},
  numpages = {10},
  year = {2023},
  month = {Dec},
  publisher = {American Physical Society},
  doi = {10.1103/PhysRevA.108.062403},
  url = {https://link.aps.org/doi/10.1103/PhysRevA.108.062403}
}

@article{Bartlett2007,
  author  = {Bartlett, S. D. and Rudolph, T. and Spekkens, R. W.},
  title   = {Reference frames, superselection rules, and quantum information},
  journal = {Rev. Mod. Phys.},
  volume  = {79},
  pages   = {555--609},
  year    = {2007},
  doi     = {10.1103/RevModPhys.79.555}
}

@article{Gour2008,
  author  = {Gour, G. and Spekkens, R. W.},
  title   = {The resource theory of quantum reference frames},
  journal = {New J. Phys.},
  volume  = {10},
  pages   = {033023},
  year    = {2008},
  doi     = {10.1088/1367-2630/10/3/033023}
}

@article{Gour2009,
  author  = {Gour, G. and Marvian, I. and Spekkens, R. W.},
  title   = {Measuring the quality of a quantum reference frame},
  journal = {Phys. Rev. A},
  volume  = {80},
  pages   = {012307},
  year    = {2009},
  doi     = {10.1103/PhysRevA.80.012307}
}

@article{Marvian2013,
  author  = {Marvian, I. and Spekkens, R. W.},
  title   = {Extending Noether's theorem by quantifying the asymmetry of quantum states},
  journal = {Nat. Commun.},
  volume  = {5},
  pages   = {3821},
  year    = {2014},
  doi     = {10.1038/ncomms4821}
}

@article{Marvian2014,
  author  = {Marvian, I. and Spekkens, R. W.},
  title   = {Modes of asymmetry: The application of harmonic analysis to symmetric quantum dynamics},
  journal = {Phys. Rev. A},
  volume  = {90},
  pages   = {014102},
  year    = {2014},
  doi     = {10.1103/PhysRevA.90.014102}
}

@article{CrannPereiraKribs2010,
  author  = {Crann, Jason and Pereira, Rui and Kribs, David W.},
  title   = {Spherical designs and anticoherent spin states},
  journal = {J. Phys. A: Math. Theor.},
  volume  = {43},
  pages   = {255307},
  year    = {2010},
  doi     = {10.1088/1751-8113/43/25/255307}
}

@article{DeLaHozKlimovBjorkLeuchsSanchezSoto2013,
  author  = {de la Hoz, Pablo and Klimov, Andrei B. and Bj\"ork, Gunnar and Kim, Y.-H. and M\"uller, C. and Marquardt, Ch. and Leuchs, Gerd and S\'anchez-Soto, Luis L.},
  title   = {Multipolar hierarchy of efficient quantum polarization measures},
  journal = {Phys. Rev. A},
  volume  = {88},
  number  = {6},
  pages   = {063803},
  year    = {2013},
  doi     = {10.1103/PhysRevA.88.063803}
}

@article{SanchezSotoKlimovDeLaHozLeuchs2013,
  author  = {S\'anchez-Soto, Luis L. and Klimov, Andrei B. and de la Hoz, Pablo and Leuchs, Gerd},
  title   = {Quantum versus classical polarization states: when multipoles count},
  journal = {J. Phys. B: At. Mol. Opt. Phys.},
  volume  = {46},
  number  = {10},
  pages   = {104011},
  year    = {2013},
  doi     = {10.1088/0953-4075/46/10/104011}
}

@article{BjorkKlimovDeLaHozLeuchsSanchezSoto2015,
  author  = {Bj\"ork, Gunnar and Klimov, Andrei B. and de la Hoz, Pablo and Grassl, Markus and Leuchs, Gerd and S\'anchez-Soto, Luis L.},
  title   = {Extremal quantum states and their Majorana constellations},
  journal = {Phys. Rev. A},
  volume  = {92},
  number  = {3},
  pages   = {031801(R)},
  year    = {2015},
  doi     = {10.1103/PhysRevA.92.031801}
}

@article{Kobus2019,
  volume = {100},
  ISSN = {2469-9934},
  url = {http://dx.doi.org/10.1103/PhysRevA.100.032112},
  DOI = {10.1103/physreva.100.032112},
  number = {3},
  pages = {032112},
  journal = {Phys. Rev. A},
  publisher = {American Physical Society (APS)},
  author = {Kłobus,  Waldemar and Burchardt,  Adam and Kołodziejski,  Adrian and Pandit,  Mahasweta and Vértesi,  Tamás and Życzkowski,  Karol and Laskowski,  Wiesław},
  year = {2019},
  month = sep 
}

@Article{Aul.Mar.Mur:10,
  title = {The maximally entangled symmetric state in terms of the geometric measure},
  author = {Aulbach, M. and Markham, D. and Murao, M.},
  journal = {New J. Phys.},
  volume = {12},
  issue = {7},
  pages = {073025},
  year = {2010},
  month = {Jul},
  publisher = {IOP Publishing},
  doi = {10.1088/1367-2630/12/7/073025}
}

@Article{Aul:12,
  title = {Classification of entanglement in symmetric states},
  author = {Aulbach, M.},
  journal = {Int. J. Quantum Inf.},
  volume = {10},
  issue = {7},
  pages = {1230004},
  year = {2012},
  month = {Oct},
  publisher = {World Scientific Publishing Company},
  doi = {10.1142/S0219749912300045}
}

@Article{Bag.Mar:17,
  title = {Anticoherence measures for pure spin states},
  author = {Baguette, D. and Martin, J.},
  journal = {Phys. Rev. A},
  volume = {96},
  issue = {3},
  pages = {032304},
  numpages = {12},
  year = {2017},
  month = {Sep},
  publisher = {American Physical Society},
  doi = {10.1103/PhysRevA.96.032304}
}

@article{PhysRevA.92.052333,
  title = {Anticoherence of spin states with point-group symmetries},
  author = {Baguette, D. and Damanet, F. and Giraud, O. and Martin, J.},
  journal = {Phys. Rev. A},
  volume = {92},
  issue = {5},
  pages = {052333},
  numpages = {13},
  year = {2015},
  month = {Nov},
  publisher = {American Physical Society},
  doi = {10.1103/PhysRevA.92.052333},
  url = {https://link.aps.org/doi/10.1103/PhysRevA.92.052333}
}

@article{Aguilar_2020,
doi = {10.1088/1751-8121/ab6511},
url = {https://doi.org/10.1088/1751-8121/ab6511},
year = {2020},
month = {jan},
publisher = {IOP Publishing},
volume = {53},
number = {6},
pages = {065301},
author = {Aguilar, P and Chryssomalakos, C and Guzmán-González, E and Hanotel, L and Serrano-Ensástiga, E},
title = {When geometric phases turn topological},
journal = {Journal of Physics A: Mathematical and Theoretical}
}

@article{Den.Mar:22,
  title = {Extreme depolarization for any spin},
  author = {Denis, J\'er\^ome and Martin, John},
  journal = {Phys. Rev. Res.},
  volume = {4},
  issue = {1},
  pages = {013178},
  numpages = {24},
  year = {2022},
  month = {Mar},
  publisher = {American Physical Society},
  doi = {10.1103/PhysRevResearch.4.013178},
  url = {https://link.aps.org/doi/10.1103/PhysRevResearch.4.013178}
}

@article{Gir.Bra.Bag.Bas.Mar:15,
  title = {Tensor Representation of Spin States},
  author = {O.\ Giraud and D.\ Braun and D.\ Baguette and T.\ Bastin and J.\ Martin},
  journal = {Phys.\ Rev.\ Lett.{}},
  volume = {114},
  issue = {8},
  pages = {080401},
  numpages = {5},
  year = {2015},
  month = {Feb},
  publisher = {American Physical Society},
  doi = {10.1103/PhysRevLett.114.080401},
  url = {https://link.aps.org/doi/10.1103/PhysRevLett.114.080401}
}

@article{Gis.Pop:99,
  title = {Spin Flips and Quantum Information for Antiparallel Spins},
  author = {Gisin, N. and Popescu, S.},
  journal = {Phys. Rev. Lett.},
  volume = {83},
  issue = {2},
  pages = {432--435},
  numpages = {0},
  year = {1999},
  month = {Jul},
  publisher = {American Physical Society},
  doi = {10.1103/PhysRevLett.83.432},
  url = {http://link.aps.org/doi/10.1103/PhysRevLett.83.432}
}

@article{Gol.Jam:18,
  title = {Quantum-limited Euler angle measurements using anticoherent states},
  author = {Goldberg, Aaron Z. and James, Daniel F. V.},
  journal = {Phys. Rev. A},
  volume = {98},
  issue = {3},
  pages = {032113},
  numpages = {9},
  year = {2018},
  month = {Sep},
  publisher = {American Physical Society},
  doi = {10.1103/PhysRevA.98.032113},
  url = {https://link.aps.org/doi/10.1103/PhysRevA.98.032113}
}

@article{Kolenderski_Demkowicz-Dobrzanski_2008, title={Optimal state for keeping reference frames aligned and the platonic solids}, 
volume={78}, 
ISSN={2469-9934}, 
DOI={10.1103/PhysRevA.78.052333}, 
number={5}, 
journal={Phys.\ Rev.\ A}, 
publisher={aps}, 
author={Kolenderski, Piotr and Demkowicz-Dobrzanski, Rafal}, 
year={2008}, 
pages={052333}
}

@article{Mar.Wei.Gir:20,
  doi = {10.22331/q-2020-06-22-285},
  url = {https://doi.org/10.22331/q-2020-06-22-285},
  title = {Optimal {D}etection of {R}otations about {U}nknown {A}xes by {C}oherent and {A}nticoherent {S}tates},
  author = {Martin, J.{} and Weigert, S.{} and Giraud, O.{}},
  journal = {{Quantum}},
  issn = {2521-327X},
  publisher = {{Verein zur F{\"{o}}rderung des Open Access Publizierens in den Quantenwissenschaften}},
  volume = {4},
  pages = {285},
  month = jun,
  year = {2020}
}

@article{Per.Pau:17,
    author = {Pereira, Rajesh and Paul-Paddock, Connor},
    title = "{Anticoherent subspaces}",
    journal = {J. Math. Phys.},
    volume = {58},
    number = {6},
    pages = {062107},
    year = {2017},
    month = {06},
    issn = {0022-2488},
    doi = {10.1063/1.4986413}
}

@article{Rob.Ber:94,
	author={Robbins, J.{} M.{} and Berry, M.{} V.{}},
	title={A geometric phase for m=0 spins},
	journal={Journal of Physics A: Mathematical and General},
	volume={27},
	number={12},
	pages={L435},
	url={http://stacks.iop.org/0305-4470/27/i=12/a=007},
	year={1994}
}

@article{Toponomic:22,
author = {Chryssomalakos, C. and Hanotel, L. and Guzm\'{a}n-Gonz\'{a}lez, E. and Serrano-Ens\'{a}stiga, E.},
title = {Toponomic quantum computation},
journal = {Mod. Phys. Lett. A},
volume = {37},
number = {27},
pages = {2250184},
year = {2022},
doi = {10.1142/S021773232250184X}
}

@article{2000Vidal,
author = {Guifré Vidal},
title = {Entanglement monotones},
journal = {Journal of Modern Optics},
volume = {47},
number = {2-3},
pages = {355--376},
year = {2000},
publisher = {Taylor \& Francis},
doi = {10.1080/09500340008244048},
URL = {https://www.tandfonline.com/doi/abs/10.1080/09500340008244048},
}

@article{2002Vidal,
  title = {Computable measure of entanglement},
  author = {Vidal, G. and Werner, R. F.},
  journal = {Phys. Rev. A},
  volume = {65},
  issue = {3},
  pages = {032314},
  numpages = {11},
  year = {2002},
  month = {Feb},
  publisher = {American Physical Society},
  doi = {10.1103/PhysRevA.65.032314},
  url = {https://link.aps.org/doi/10.1103/PhysRevA.65.032314}
}

@article{Rudzinski2024orthonormalbasesof,
  doi = {10.22331/q-2024-01-25-1234},
  url = {https://doi.org/10.22331/q-2024-01-25-1234},
  title = {Orthonormal bases of extreme quantumness},
  author = {Rudzi{\'{n}}ski, Marcin and Burchardt, Adam and {\.{Z}}yczkowski, Karol},
  journal = {{Quantum}},
  issn = {2521-327X},
  publisher = {{Verein zur F{\"{o}}rderung des Open Access Publizierens in den Quantenwissenschaften}},
  volume = {8},
  pages = {1234},
  month = jan,
  year = {2024}
}

@article{doi:10.1142/S0129167X1100715X,
author = {Bourin, JEAN-CHRISTOPHE and Hiai, FUMIO},
title = {NORM AND ANTI-NORM INEQUALITIES FOR POSITIVE SEMI-DEFINITE MATRICES},
journal = {International Journal of Mathematics},
volume = {22},
number = {08},
pages = {1121-1138},
year = {2011},
doi = {10.1142/S0129167X1100715X},

URL = { 
    
        https://doi.org/10.1142/S0129167X1100715X
    
    

}
}

@article{PhysRevA.111.022435,
  title = {Quantum metrology of rotations with mixed spin states},
  author = {Serrano-Ens\'astiga, Eduardo and Chryssomalakos, Chryssomalis and Martin, John},
  journal = {Phys. Rev. A},
  volume = {111},
  issue = {2},
  pages = {022435},
  numpages = {14},
  year = {2025},
  month = {Feb},
  publisher = {American Physical Society},
  doi = {10.1103/PhysRevA.111.022435},
  url = {https://link.aps.org/doi/10.1103/PhysRevA.111.022435}
}

@article{S0129055X22500210,
author = {Chang, Euijung and Kim, Jaeyoung and Kwak, Hyesun and Lee, Hun Hee and Youn, Sang-Gyun},
title = {Irreducibly SU(2)-covariant quantum channels of low rank},
journal = {Reviews in Mathematical Physics},
volume = {34},
number = {07},
pages = {2250021},
year = {2022},
doi = {10.1142/S0129055X22500210},
URL = {https://doi.org/10.1142/S0129055X22500210}
}

@article{Aschieri2024,
  author    = {Aschieri, Tommaso and Ruba, B{\l}a\.{z}ej and Solovej, Jan Philip},
  title     = {{SU(2)-Equivariant Quantum Channels: Semiclassical Analysis}},
  journal   = {Communications in Mathematical Physics},
  year      = {2024},
  volume    = {405},
  number    = {12},
  pages     = {298},
  doi       = {10.1007/s00220-024-05178-1},
  url       = {https://doi.org/10.1007/s00220-024-05178-1},
  issn      = {1432-0916}
}

@article{ChitambarGour2019,
  author  = {Chitambar, Eric and Gour, Gilad},
  title   = {Quantum resource theories},
  journal = {Rev. Mod. Phys.},
  volume  = {91},
  pages   = {025001},
  year    = {2019},
  doi     = {10.1103/RevModPhys.91.025001}
}

@article{PeresScudoDirection2001,
  author  = {Peres, Asher and Scudo, Petra F.},
  title   = {Entangled quantum states as direction indicators},
  journal = {Phys. Rev. Lett.},
  volume  = {86},
  pages   = {4160--4163},
  year    = {2001},
  doi     = {10.1103/PhysRevLett.86.4160}
}

@article{PeresScudoCartesian2001,
  author  = {Peres, Asher and Scudo, Petra F.},
  title   = {Transmission of a Cartesian frame by a quantum system},
  journal = {Phys. Rev. Lett.},
  volume  = {87},
  pages   = {167901},
  year    = {2001},
  doi     = {10.1103/PhysRevLett.87.167901}
}

@article{Peres01072002,
author = {Asher Peres and Petra Scudo},
title = {Covariant quantum measurements may not be optimal},
journal = {Journal of Modern Optics},
volume = {49},
number = {8},
pages = {1235--1243},
year = {2002},
publisher = {Taylor \& Francis},
doi = {10.1080/09500340110118449},
URL = { 
    
        https://doi.org/10.1080/09500340110118449
}
}

@article{CollinsPopescu2004,
  author  = {Collins, Daniel and Popescu, Sandu},
  title   = {Frames of reference and the intrinsic directional information of a particle with spin},
  journal = {arXiv preprint quant-ph/0401096},
  year    = {2004},
  eprint  = {quant-ph/0401096},
  archivePrefix = {arXiv}
}

@article{PhysRevResearch.4.013076,
  title = {Improving sum uncertainty relations with the quantum Fisher information},
  author = {Chiew, Shao-Hen and Gessner, Manuel},
  journal = {Phys. Rev. Res.},
  volume = {4},
  issue = {1},
  pages = {013076},
  numpages = {11},
  year = {2022},
  month = {Jan},
  publisher = {American Physical Society},
  doi = {10.1103/PhysRevResearch.4.013076},
  url = {https://link.aps.org/doi/10.1103/PhysRevResearch.4.013076}
}

@article{PhysRevLett.74.1259,
  title = {Optimal Extraction of Information from Finite Quantum Ensembles},
  author = {Massar, S. and Popescu, S.},
  journal = {Phys. Rev. Lett.},
  volume = {74},
  issue = {8},
  pages = {1259--1263},
  numpages = {0},
  year = {1995},
  month = {Feb},
  publisher = {American Physical Society},
  doi = {10.1103/PhysRevLett.74.1259},
  url = {https://link.aps.org/doi/10.1103/PhysRevLett.74.1259}
}

@article{RevModPhys.91.025001,
  title = {Quantum resource theories},
  author = {Chitambar, Eric and Gour, Gilad},
  journal = {Rev. Mod. Phys.},
  volume = {91},
  issue = {2},
  pages = {025001},
  numpages = {48},
  year = {2019},
  month = {Apr},
  publisher = {American Physical Society},
  doi = {10.1103/RevModPhys.91.025001},
  url = {https://link.aps.org/doi/10.1103/RevModPhys.91.025001}
}

\end{document}